\documentclass[12pt]{article}
\pdfoutput=1 

\usepackage[utf8]{inputenc}
\usepackage{jheppub}
\usepackage{epsfig}
\usepackage{bbm}
\usepackage{bbding}
\usepackage{abstract}
\usepackage{amsmath}
\usepackage{amsfonts}
\usepackage{amssymb}
\usepackage{mathtools}
\usepackage{amsthm}
\usepackage{dsfont}
\usepackage{bm}
\usepackage{blkarray}
\usepackage{graphicx}
\usepackage{longtable}
\usepackage{pdflscape}
\usepackage{subcaption}
\usepackage{color}
\usepackage{psfrag}
\usepackage[capitalise,sort]{cleveref}
\usepackage{hyperref}
\usepackage{pgfplots}
\usepackage{pgfplotstable}
\usepackage{subfiles}
\usepackage{longtable}
\usepackage{tabularx}
\usepackage{ltablex}
\usepackage{booktabs}
\usepackage[export]{adjustbox}
\usepackage{listings}
\usepackage{multirow} 
\usepackage{cancel,slashed}
\usepackage{bbm}
\usepackage{graphicx}
\usepackage{tikz,textcomp}
\usepackage{tikz}
\usetikzlibrary{decorations.pathmorphing}
\usetikzlibrary{calc,fit,matrix}
\usetikzlibrary{cd}
\usetikzlibrary{shapes,arrows,positioning,chains,matrix}
\usetikzlibrary{positioning,arrows.meta}
\usepackage{latexsym}
\usepackage{color}
\usepackage{transparent}
\usepackage{mathtools}
\usepackage{array}
\usepackage{makecell}
\usepackage{graphics,psfrag}
\usepackage{placeins}
\usepackage{nowidow}
\usepackage{listings}
\usepackage[normalem]{ulem}
\usepackage{environ}
\usepackage{arydshln}
\usepackage{enumitem}
\usepackage{comment}
\usepackage{hhline}

\pgfplotsset{
        compat=1.9,
        compat/bar nodes=1.8,
    }


\setcounter{tocdepth}{2}

\makeatletter
\def\@xfootnote[#1]{%
	\protected@xdef\@thefnmark{#1}%
	\@footnotemark\@footnotetext}
\makeatother

\setlength{\voffset}{-1in} 
\setlength{\topmargin}{.05\paperheight} 
\setlength{\headheight}{.02\paperheight} 
\setlength{\headsep}{.03\paperheight} 
\setlength{\footskip}{.07\paperheight} 
\setlength{\textheight}{.8\paperheight}
\setlength{\hoffset}{-1in} 
\setlength{\oddsidemargin}{.10\paperwidth} 
\setlength{\evensidemargin}{\oddsidemargin} 
\setlength{\marginparwidth}{0pt} 
\setlength{\marginparsep}{0pt} 
\setlength{\textwidth}{.8\paperwidth}
\setlength{\parindent}{10pt} 

\definecolor{prhigh}{HTML}{ff0000}
\definecolor{sechigh}{HTML}{e0fbfc}
\definecolor{prcolor}{HTML}{1d3557}
\definecolor{seccolor}{HTML}{457b9d}
\definecolor{tercolor}{HTML}{98c1d9}


\theoremstyle{plain}
\newtheorem{theorem}{Theorem}

\newtheorem{proposition}[theorem]{Proposition}


\theoremstyle{definition}
\newtheorem{definition}{Definition}

\theoremstyle{remark}

\makeatletter
\renewenvironment{proof}[1][\proofname]{%
  \par\pushQED{\qed}\normalfont%
  \topsep6\p@\@plus6\p@\relax
  \trivlist\item[\hskip\labelsep\bfseries#1\@addpunct{.}]%
  \ignorespaces
}{
  \popQED\endtrivlist\@endpefalse
}
\makeatother



\DeclareMathOperator{\U}{U}
\DeclareMathOperator{\SU}{SU}

\DeclareMathOperator{\re}{Re}
\DeclareMathOperator{\im}{Im}

%

\newcommand{\PP}{\mathbb{P}}

\newcommand{\FF}{\mathbb{F}}
\newcommand{\ID}{\mathds{1}}

\newcommand{\coma}{\, , \quad}
\newcommand{\fstop}{\, .}

\newcommand{\YM}{\text{\tiny YM}}

\newcommand{\ii}{{\rm i}}

\renewcommand{\epsilon}{\varepsilon}



\makeatletter
\newsavebox{\measure@tikzpicture}
\NewEnviron{scaletikzpicturetowidth}[1]{%
  \def\tikz@width{#1}%
  \begin{lrbox}{\measure@tikzpicture}%
  \BODY
  \end{lrbox}%
  \pgfmathparse{#1/\wd\measure@tikzpicture}%
  \BODY
}
\makeatother


\makeatletter
\newcommand{\supsetcong}{\mathrel{\mathpalette\supset@cong\relax}}
\newcommand{\supsetsim}{\mathrel{\mathpalette\supset@sim\relax}}

\newcommand{\supset@cong}[2]{%
  \vbox{\offinterlineskip\m@th
    \ialign{\hfil##\cr
      \scalebox{0.9}{$#1\sim$}\cr
      \noalign{%
        \ifx#1\displaystyle\kern-0.5pt\else
        \ifx#1\textstyle\kern-0.5pt\fi\fi
      }%
      $#1\supset$\cr
    }%
  }%
}
\newcommand{\supset@sim}[2]{%
  \vtop{\offinterlineskip\m@th
    \ialign{\hfil##\cr
      $#1\supset$\cr\noalign{\kern0.5pt}\scalebox{0.9}{$#1\sim$}\cr
    }%
  }%
}
\makeatother

\let\mathcaldefault\mathcal
\def\fnote#1#2{\begingroup\def\thefootnote{#1}\footnote{#2}
     \addtocounter{footnote}{-1}\endgroup}
\DeclareMathAlphabet{\mathdutchcal}{U}{dutchcal}{m}{n}

%
%
%
\catcode`\@=11   
\newdimen\@rotdimen
\newbox\@rotbox  

\def\@vspec#1{\special{ps:#1}}
\def\@rotstart#1{\@vspec{gsave currentpoint currentpoint translate
		#1 neg exch neg exch translate}}
\def\@rotfinish{\@vspec{currentpoint grestore moveto}}
%
%
\def\@rotr#1{\@rotdimen=\ht#1\advance\@rotdimen by\dp#1%
	\hbox to\@rotdimen{\hskip\ht#1\vbox to\wd#1{\@rotstart{90 rotate}%
			\box#1\vss}\hss}\@rotfinish}
%
%
\def\@rotl#1{\@rotdimen=\ht#1\advance\@rotdimen by\dp#1%
	\hbox to\@rotdimen{\vbox to\wd#1{\vskip\wd#1\@rotstart{270 rotate}%
			\box#1\vss}\hss}\@rotfinish}%
%
%
\def\@rotu#1{\@rotdimen=\ht#1\advance\@rotdimen by\dp#1%
	\hbox to\wd#1{\hskip\wd#1\vbox to\@rotdimen{\vskip\@rotdimen
			\@rotstart{-1 dup scale}\box#1\vss}\hss}\@rotfinish}%
%
%
\def\@rotf#1{\hbox to\wd#1{\hskip\wd#1\@rotstart{-1 1 scale}%
		\box#1\hss}\@rotfinish}%
%
%
\def\rotate{\@ifnextchar[{\@rotate}{\@rotate[l]}}
\def\@rotate[#1]#2{\setbox\@rotbox=\hbox{#2}\@nameuse{@rot#1}\@rotbox}

\catcode`\@=12


\tikzset{
    partial ellipse/.style args={#1:#2:#3}{
        insert path={+ (#1:#3) arc (#1:#2:#3)}
    }
}

\hypersetup{
	pdftitle={The Asymptotic Weak Gravity Conjecture for Open Strings},    
	pdfauthor={\textcopyright\ Cesar Fierro Cota, Alessandro Mininno, Timo Weigand, Max Wiesner},     
	pdfsubject={HEP},   
	pdfcreator={pdfLaTex},   
	pdfproducer={LaTex}, 
	pdfkeywords={},
}

\allowdisplaybreaks[1] 

\frenchspacing 

\crefname{figure}{Figure}{Figures}
\crefname{table}{Table}{Tables}
\crefname{definition}{Definition}{Definitions}
\crefname{proposition}{Proposition}{Propositions}

\begin{document}
	\pagestyle{plain}

	\makeatletter
	\@addtoreset{equation}{section}
	\makeatother
	\renewcommand{\theequation}{\thesection.\arabic{equation}}
	\pagestyle{empty}
 \rightline{ZMP-HH/22-15}
\vspace{0.8cm}

\begin{center}
{\large \bf
The Asymptotic Weak Gravity Conjecture for Open Strings  
} 

\vskip 9 mm

Cesar Fierro Cota,${}^1$ Alessandro Mininno,${}^1$ Timo Weigand,${}^{1,2}$ Max Wiesner${}^{3}$

\vskip 9 mm

\small ${}^{1}$ 
{\it II. Institut f\"ur Theoretische Physik, Universit\"at Hamburg, Luruper Chaussee 149,\\ 22607 Hamburg, Germany} 

\vspace{2mm}

\small ${}^{2}${\it Zentrum f\"ur Mathematische Physik, Universit\"at Hamburg, Bundesstrasse 55, \\ 20146 Hamburg, Germany  }   \\[3 mm]

\small ${}^{3}$ 
{\it Center of Mathematical Sciences and Applications, Harvard University, 20 Garden Street,\\ Cambridge, MA 02138, USA} \\[3 mm]

\fnote{}{\hspace{-0.75cm} cesar.fierro.cota at desy.de, \\ alessandro.mininno at desy.de, \\ timo.weigand at desy.de, \\ mwiesner at cmsa.fas.harvard.edu}

\end{center}

\vskip 7mm

\begin{abstract}

We investigate the asymptotic Tower Weak Gravity Conjecture in weak coupling limits of open string theories with minimal supersymmetry in four dimensions, focusing for definiteness on 
gauge theories realized on 7-branes in F-theory. 
Contrary to expectations, we find that not all weak coupling limits contain an obvious candidate for a tower of states marginally satisfying the super-extremality bound.
The weak coupling limits are classified geometrically in the framework of EFT string limits and their generalizations. We find three different classes of weak coupling limits, whose physics is characterized
by the ratio of the magnetic weak gravity scale and the species scale. The four-dimensional Tower Weak Gravity Conjecture is satisfied by the (non-BPS) excitations of the weakly coupled EFT 
string only in emergent string limits, where the EFT string can be identified with a critical (heterotic) string.
All other weak coupling limits lead to a decompactification either to an in general strongly coupled gauge theory coupled to gravity or to a defect gauge theory decoupling
from the gravitational bulk, in agreement with 
the absence of an obvious candidate for a marginally super-extremal tower of states.
\end{abstract}

	\newpage
	\setcounter{page}{1}
	\pagestyle{plain}
	\renewcommand{\thefootnote}{\arabic{footnote}}
	\setcounter{footnote}{0}
	
	\tableofcontents
	
	
\section{Introduction and Summary}

The \textit{Swampland program}
initiated in \cite{Vafa:2005ui} advocates the idea 
that a consistent coupling to gravity poses severe constraints
on an effective field theory.
The criteria which an effective field theory (EFT) must satisfy
in order to comply with these constraints are the subject of the so-called 
{\it swampland conjectures}.
In view of the growing web of such conjectures formulated so far \cite{Brennan:2017rbf,Palti:2019pca,vanBeest:2021lhn,Grana:2021zvf}, it is important to continue to test and sharpen our candidates for swampland constraints in order to advance our understanding of the fundamental characteristics of a theory of quantum gravity. 

Among the most acclaimed swampland conjectures is the Weak Gravity Conjecture (WGC) \cite{Arkani-Hamed:2006emk}. In its most modest formulation, it posits that for a given gauge theory coupled to gravity, there exists at least one particle whose gravitational interaction is weaker than the interaction associated with the gauge symmetry. The usual bottom-up motivation comes from the requirement that extremal black holes should be able to decay to avoid clashes with certain entropy bounds \cite{Arkani-Hamed:2006emk,Banks:2006mm}, at least in flat space and with four or more spacetime dimensions, which is the physical setting which we exclusively focus on in this paper.

Many versions of the WGC have been formulated in the last decade  
\cite{Palti:2020mwc,Harlow:2022gzl}, and it is important to investigate the application and the limitations of each formulation in well-controlled setups of quantum gravity. 
A stronger version of the WGC that was first motivated by consistency with dimensional reduction is the tower WGC (tWGC) \cite{Heidenreich:2015nta,Heidenreich:2016aqi,Montero:2016tif,Andriolo:2018lvp}, which states that in any EFT coupled to gravity there must exist at least one formally infinite tower of super-extremal states of charge $q_k$ and mass $M_k$ satisfying the relation
\begin{align}
\frac{q_k^2}{M_k^2}  \gtrsim \frac{1}{\Lambda^2_\text{\tiny WGC}} \,,
\end{align}
where
$\Lambda_\text{\tiny WGC} = g_\text{\tiny YM} M_\text{\tiny Pl}$ denotes the magnetic weak gravity scale and
the precise numerical factor on the right depends on the details of the theory.
The motivation behind the appearance of a tower is two-fold:
First, if the WGC is satisfied by a super-extremal tower, then after circle reduction
the convex hull condition \cite{Cheung:2014vva} is  automatically obeyed in the gauge sector defined by the original gauge theory and the Kaluza-Klein (KK) gauge field \cite{Heidenreich:2015nta}. 
Note, however, that at least in the presence of massless charged states in the original theory, a tower is only a sufficient, rather than a necessary condition for the convex hull condition to
be satisfied after compactification.

Second, a tower of super-extremal states
is well-motivated in the weak coupling limit of a gauge theory. 
We will refer to this version as the {\it asymptotic tower WGC}.
The limit $g_\text{\tiny YM} \to 0$ for a gauge theory lies at infinite distance in field space. In view of the Swampland Distance Conjecture \cite{Ooguri:2006in}, it should therefore be accompanied by a tower of infinitely many states which become  asymptotically light. 
It is natural to identify at least a sub-tower of the tower of states predicted by the Swampland Distance Conjecture 
with a tower of super-extremal states. 
In fact, not only the tower WGC, but even the WGC as such is best motivated, from a bottom-up perspective, in the limit where $g_\text{\tiny YM} \to 0$, because at least the entropy arguments invoked in \cite{Arkani-Hamed:2006emk} point to a potential inconsistency of black holes in the absence of a super-extremal state strictly speaking only in this limit.
Another argument that favors a tower of charged asymptotically light states in the weak coupling limit is provided by the Emergence Proposal \cite{Harlow:2015lma,Heidenreich:2018kpg,Grimm:2018ohb,Palti:2019pca}, which attributes the asymptotic vanishing of the gauge coupling to the appearance of a charged tower that must be integrated out.

Overwhelming evidence has been accumulated in the past years for the tower WGC in the literature.
This evidence appears either in the context of the {asymptotic tower WGC}, i.e., in the limit $g_\text{\tiny YM} \to 0$, or, away from this limit, in theories where the super-extremality condition is protected \cite{Ooguri:2016pdq} by the BPS property of states.\footnote{We stress again that in this paper, we focus exclusively on the WGC in at least four dimensions and in Minkowski space. Holographic checks of the WGC in (three- or higher-dimensional) AdS space  include \cite{Nakayama:2015hga,Harlow:2015lma,Montero:2016tif,Aharony:2021mpc,Antipin:2021rsh,Palti:2022unw}.}

Concerning the latter type of setups, the tower WGC has been analyzed in detail in \cite{Alim:2021vhs} in the context of five-dimensional M-theory compactifications on Calabi-Yau $3$-folds. The super-extremal states are furnished by M2-branes wrapping 2-cycles on the Calabi-Yau. Interestingly, the BPS condition and the super-extremality condition are both saturated for BPS states precisely in those situations where a tower of states can be identified among the BPS particles \cite{Alim:2021vhs}.

As for the asymptotic tWGC, two different types of situations have been investigated in detail: In the first class of limits, a tower of BPS states becomes light in four-dimensional compactifications of string theory with ${\cal N}=2$ supersymmetry near infinite distance boundaries of moduli space. This tower of states includes a WGC tower for the asymptotically weakly coupled abelian gauge theories associated with the closed string sector \cite{Grimm:2018ohb,Heidenreich:2018kpg,Gendler:2020dfp,Bastian:2020egp,Palti:2021ubp}.
In the second class of theories, the tower of states comes from the excitations of a weakly coupled heterotic string in various dimensions \cite{Arkani-Hamed:2006emk,Heidenreich:2016aqi}.
In the latter case, modular invariance is in fact sufficient to prove the appearance of a super-extremal tower \cite{Heidenreich:2016aqi,Montero:2016tif,Lee:2018urn,Lee:2018spm,Lee:2019tst,Lee:2020gvu,Lee:2020blx}.
This second type of constructions includes certain weak coupling limits in F-theory in which a solitonic heterotic string becomes light \cite{Lee:2018urn,Lee:2018spm,Lee:2019tst,Lee:2020gvu,Klaewer:2020lfg}.

It is interesting to note that the two types of super-extremal towers observed so far -- the BPS towers and the heterotic string excitation tower -- precisely correspond to the two possible types of towers of states which should generally become light at infinite distance in moduli space according to the Emergent String Conjecture \cite{Lee:2019tst}. 
The BPS towers are interpreted as KK towers in a dual frame, while the heterotic string excitations are the excitations of an emergent critical string, which is predicted to become tensionless in any infinite distance limit that is not a decompactification limit. 
In this sense, at least in its asymptotic, weak coupling form, the tower WGC can, so far, be viewed as a consequence of the Emergent String Conjecture. At a technical level, it rests either on
the BPS nature of states or on the modular properties of the heterotic string partition function. In both cases, the towers arise from the closed string sector. 

\subsubsection*{Summary of results}

In this work, we analyze the asymptotic tower WGC for more general weak coupling limits with ${\cal N}=1$ supersymmetry in four dimensions.
Our goal is to understand weak coupling limits
in which no critical heterotic string becomes light, especially for open string realizations of the gauge sector. 
According to the Emergent String Conjecture, such limits must be decompactification limits.
Neither of the two known mechanisms that have guaranteed the appearance of a super-extremal tower so far  -- BPS protection or the modular properties of the heterotic string -- are at work. 
A particularly fruitful approach to understand such weak coupling limits at infinite distance is the framework of {\it EFT} or {\it axionic string limits} introduced in
\cite{Lanza:2020qmt,Lanza:2021udy,Lanza:2022zyg} (and further studied in \cite{Marchesano:2022avb,Grimm:2022sbl}).
The starting point of this construction is the observation that 
an infinite distance limit in four dimensions is characterized by an emergent axionic shift symmetry. By dualizing the underlying axion to a 2-form, one finds that such limits are automatically accompanied by weakly coupled solitonic strings charged under the 2-form. In fact, 
according to the Distant Axionic String Conjecture (DASC) of \cite{Lanza:2020qmt,Lanza:2021udy,Lanza:2022zyg}, every infinite distance limit in four dimensions can be realized as the endpoint of a flow induced by the backreaction of precisely those axionic strings which become weakly coupled in the limit.
The emergent strings featuring in the Emergent String Conjecture are special examples of such axionic or EFT strings given by {\it critical} strings.

It is then natural to speculate whether the asymptotically weakly coupled EFT strings could be the source of the super-extremal tower of states in general weak coupling limits. Intriguingly, it was observed in \cite{Heidenreich:2021yda,Kaya:2022edp} that the tension of these strings parametrically coincides with the magnetic weak gravity scale,
\begin{align} \label{scalere-1}
 T_\text{\tiny EFT} \sim  \Lambda_\text{\tiny WGC} = g_\text{\tiny YM} M_\text{\tiny Pl}  \,. 
    \end{align}
Therefore, if one treats the EFT strings as the source of particle-like excitations, as would be appropriate for a critical string, the relation \eqref{scalere-1} may suggest that this tower contains the super-extremal tower required by the asymptotic WGC.

To test this idea, we will work, for concreteness, in the context of F-theory compactified to four dimensions. 
We will
analyze the possible weak coupling limits for a gauge theory realized on a stack of 7-branes wrapping a divisor on the base of an elliptic Calabi-Yau $4$-fold, extending both the analysis of \cite{Lanza:2020qmt,Lanza:2021udy,Lanza:2022zyg} and of \cite{Lee:2019tst,Klaewer:2020lfg}.

As our first result, which in fact is of interest independently of the WGC, we will classify the possible infinite distance limits in the K\"ahler moduli space of the base $B_3$ in the language of EFT strings.
The building blocks of such limits are what we call quasi-primitive EFT strings, which are obtained by wrapping D3-branes along certain curves in the movable cone \cite{Lanza:2020qmt,Lanza:2021udy,Lanza:2022zyg} of $B_3$.\footnote{See Appendix \ref{app:geometrytools} for the definition of the movable cone, which as explained in \cite{Lanza:2020qmt,Lanza:2021udy} underlies the structure of EFT strings in the K\"ahler moduli sector of F-theory, and Definitions \ref{def:EFTlimit} and \ref{def:quasiprimitive} for the precise notion of (quasi-) primitive EFT strings.} We will characterize the quasi-primitive strings via a topological invariant of their associated curve. Depending on the type of string under consideration, its associated limit either corresponds to an emergent string limit or to a decompactification limit to six or ten dimensions (see Figure \ref{fig:q=1}).
This is in agreement with the findings of \cite{Lee:2019tst,Klaewer:2020lfg} and sheds further light on the systematics of the decompactification limits.

We then turn to the effective action governing the dynamics of the EFT strings which characterize the various weak coupling limits. For the special case of primitive EFT strings, the action resembles the effective action of a weakly coupled heterotic
string, up to a numerical factor that is related to the topological invariant classifying the strings. While this might motivate  treating the asymptotically weakly coupled EFT strings as having particle-like excitations, the would-be excitations turn out to violate the super-extremality condition -- except in the special case of a critical heterotic string. 
At first sight, this seems to conflict with the predictions of the asymptotic tower WGC. Upon closer inspection, however, we find that all weak coupling limits other than the emergent string limits leave the realm of validity of the asymptotic tower WGC conjecture. The reason is that in such limits, the species scale associated with the KK tower, $\Lambda_\text{\tiny sp,KK}$, sits at or even below the WGC scale $\Lambda_\text{\tiny WGC}$, which coincides with the asymptotic EFT string tension. This interplay between the different characteristic scales is summarized in Figure \ref{fig:q=012sp}.
As a result, the gauge theory either decompactifies to an  -- in general --  non-weakly coupled theory in higher dimensions, or even to a defect theory from which gravity decouples completely. In both cases, the usual bottom-up motivation for the tower WGC, and in fact even for the WGC as such, does not apply.

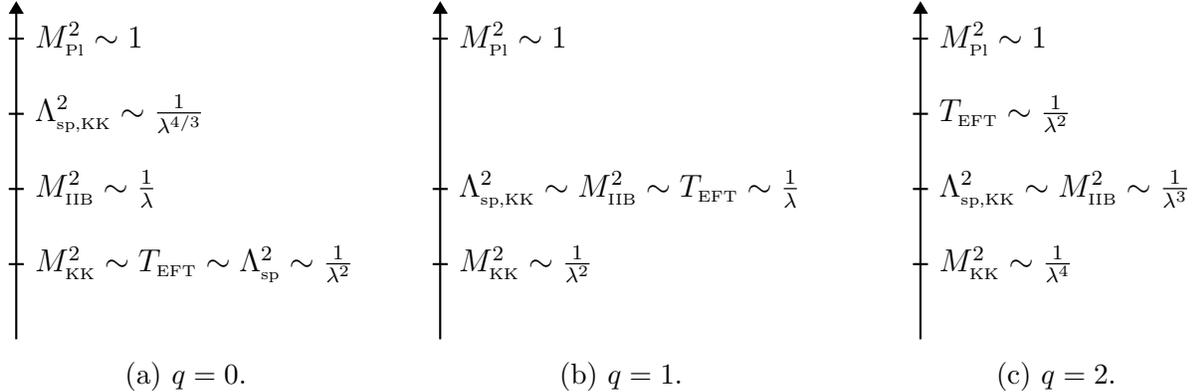
\begin{figure}[!htp]
    \centering
    \begin{subfigure}[t]{0.33\textwidth}
    \centering
 \begin{tikzpicture}
   \draw[thick,-Triangle] (0,0) -- (0,4.5);
   \draw[thick] (-0.1,4) -- (0.1,4) node[right] {$M_\text{\tiny Pl}^2\sim 1$};
   \draw[thick] (-0.1,3) -- (0.1,3) node[right] {$\Lambda_\text{\tiny sp,KK}^2 \sim \frac{1}{\lambda^{4/3}}$};
   \draw[thick] (-0.1,2) -- (0.1,2) node[right] {$M_\text{\tiny IIB}^2 \sim \frac{1}{\lambda}$};
   \draw[thick] (-0.1,1) -- (0.1,1) node[right] {$M_\text{\tiny KK}^2 \sim T_\text{\tiny EFT}\sim \Lambda_\text{\tiny sp}^2\sim \frac{1}{\lambda^2}$};
\end{tikzpicture}
    \caption{$q=0$.}
    \label{fig:q=0sp}
    \end{subfigure}\hfill
    \begin{subfigure}[t]{0.33\textwidth}
    \centering 
    \begin{tikzpicture}
   \draw[thick,-Triangle] (0,0) -- (0,4.5);
   \draw[thick] (-0.1,4) -- (0.1,4) node[right] {$M_\text{\tiny Pl}^2\sim 1$};
   \draw[thick] (-0.1,2) -- (0.1,2) node[right] {$\Lambda_\text{\tiny sp,KK}^2\sim M_\text{\tiny IIB}^2 \sim T_\text{\tiny EFT} \sim \frac{1}{\lambda}$};
   \draw[thick] (-0.1,1) -- (0.1,1) node[right] {$M_\text{\tiny KK}^2\sim \frac{1}{\lambda^2}$};;
\end{tikzpicture}
    \caption{$q=1$.}
    \label{fig:q=1sp}
    \end{subfigure}\hfill
    \begin{subfigure}[t]{0.33\textwidth}
    \centering
    \begin{tikzpicture}
   \draw[thick,-Triangle] (0,0) -- (0,4.5);
   \draw[thick] (-0.1,4) -- (0.1,4) node[right] {$M_\text{\tiny Pl}^2 \sim 1$};
   \draw[thick] (-0.1,3) -- (0.1,3) node[right] {$T_\text{\tiny EFT} \sim \frac{1}{\lambda^2}$};
   \draw[thick] (-0.1,2) -- (0.1,2) node[right] {$\Lambda_\text{\tiny sp,KK}^2 \sim M_\text{\tiny IIB}^2\sim \frac{1}{\lambda^3}$};
   \draw[thick] (-0.1,1) -- (0.1,1) node[right] {$M_\text{\tiny KK}^2\sim \frac{1}{\lambda^4}$};
\end{tikzpicture}
    \caption{$q=2$.}
    \label{fig:q=2sp}
    \end{subfigure}
    \caption{Characteristic scales for the three possible types of quasi-primitive EFT string limits in the K\"ahler moduli space of F-theory compactified to four dimensions. The limits are parametrized by $\lambda \to \infty$.
    The topological invariant $q$ classifies the possible quasi-primitive EFT string limits, as explained in Section \ref{ssec:Weakcouplinglimits}.
    The limit with $q=0$ is an emergent string limit, which stays effectively four-dimensional. Limits with $q=1$ describe effective decompactifications to six dimensions, while limits with $q=2$ decompactify to ten dimensions and the 7-brane gauge sector is a defect theory. This picture summarizes the findings of Section \ref{ssec:stringscalevsSC}.}
    \label{fig:q=012sp}
\end{figure}

The upshot of our findings is that the excitations of a weakly coupled EFT string in F-theory can account for a tower of super-extremal states if and only if the string corresponds to a heterotic emergent string.
Consistently, in all other weak coupling limits of a 7-brane gauge theory in F-theory the WGC is less obviously motivated also from a bottom-up perspective, to the extent that we leave the regime of a weakly coupled gauge theory coupled to a perturbative gravitational sector.

In principle, this does not preclude that the WGC may nonetheless be satisfied even in such situations, but there are no obvious candidates for the super-extremal states, in particular not for a marginally super-extremal tower of states. At the same time, the existence of such a tower would, in a sense, run counter to the very motivation behind the WGC. 

 While we focus in this work on weak coupling limits for gauge theories on 7-branes in four-dimensional F-theory, the three types of patterns which we find (see Figure \ref{fig:q=012sp}) generalize to other setups. 
 For example, as we will explain, the weak coupling limit on a stack of D3-branes in Type IIB orientifolds decouples the gauge and gravity sectors 
 similar to the behavior in Figure \ref{fig:q=2sp}. 
 The perturbative open string excitations only furnish a finite number of highly super-extremal states, rather than a tower of marginally super-extremal states as in weak coupling limits of the emergent heterotic string type.
 This is no surprise because a decoupling between the gauge and gravity sector trivializes the weak gravity constraint.

\subsubsection*{Structure of the paper}

The paper is organized as follows: 
Section \ref{sec:primitiveEFTFth} contains the geometric part of our analysis, which is of interest independently of the Weak Gravity Conjecture.
In Section \ref{ssec:review} and Appendix \ref{app:geometrytools} we review the definition of EFT strings and the F-theory K\"ahler field space. Section \ref{ssec:Weakcouplinglimits} introduces the notion of quasi-primitive EFT strings and their classification, together with the systematics of the EFT string limits in F-theory. 
For better readability of the paper, the oftentimes technical proofs of our main results are relegated to Appendix \ref{app:proofs}. In Section \ref{ssec:P1fibFn} and Appendix \ref{app:examples} we discuss an exhaustive series of examples that illustrate all the main results of our geometric findings.

Section \ref{sec:NoncritEFTWGC}  analyzes the tower Weak Gravity Conjecture for weak coupling limits in F-theory. In Section \ref{ssec:Repulsiveforcecond} we study the super-extremality bound for the (putative) excitations of a primitive EFT string, showing that the tower WGC is satisfied by the excitations of primitive EFT strings only when the latter correspond to emergent critical heterotic strings. In Section \ref{ssec:stringscalevsSC} we derive 
the relation summarized in Figure \ref{fig:q=012sp} between the weak gravity scale and the species scale for more general quasi-primitive EFT string limits.
In Section \ref{ssec:nonEFTgaugetheory} 
we generalize this to the most general weak coupling limits in F-theory.
 In Section \ref{ssec:4dWGC} we
explain that, except for the emergent string limit, the weak coupling limits leave the regime of applicability of at least the {\it asymptotic tower WGC}, reconciling our findings of Section \ref{ssec:Repulsiveforcecond} with the Swampland philosophy. Section \ref{sec:Conclusions} contains the summary of our results, along with a discussion of other setups in which the tower Weak Gravity Conjecture is not required to be satisfied and more speculative remarks. 

\section{EFT string limits in F-theory}
\label{sec:primitiveEFTFth}

The starting point for our classification of different weak coupling limits is a certain class of infinite distance limits introduced in \cite{Lanza:2020qmt,Lanza:2021udy,Lanza:2022zyg} and studied further in \cite{Marchesano:2022avb,Grimm:2022sbl}: As their main property, they can be reached as the end-point of Renormalization Group (RG) flows induced by axionic, or EFT, strings in four dimensions.
Throughout this work, we refer to them as \textit{EFT string limits}. This section serves to introduce the basic notion of such EFT string limits and to describe how they are realized in the F-theory K\"ahler field space. In Section~\ref{ssec:review} we start by giving a brief review of the EFT string flows realizing infinite distance limits in field space \cite{Lanza:2020qmt,Lanza:2021udy} and introduce the basics of the F-theory K\"ahler field space to which we will apply the EFT string analysis. In Section \ref{ssec:Weakcouplinglimits} we then provide a refined classification of certain EFT string limits in F-theory which we will call quasi-primitive EFT string limits: Their associated EFT strings are obtained by wrapping D3-branes on special movable curves in the base $B_3$ of the elliptically fibered Calabi--Yau $4$-fold. These curves can be written as intersections of certain K\"ahler classes and admit an intriguing classification via a certain topological invariant that will govern the physics of the EFT string limit.
We illustrate our findings in a simple example in Section \ref{ssec:P1fibFn} and in more complicated settings in Appendix \ref{app:examples}.

The quasi-primitive limits will then serve as the starting point to characterize even the most general weak coupling limits for gauge theories in Section \ref{sec:NoncritEFTWGC}.

\subsection{Review of EFT strings and F-theory K\"ahler field space}
\label{ssec:review}

Consider a  $\mathcal{N}=1$ supersymmetric EFT in four dimensions. Complex scalar fields in such a theory reside in chiral multiplets. Let us denote a subset of these chiral scalar fields by $T_i$, $i=1,\ldots, n$, and the field space spanned by them as $\mathcal{M}$. We further assume that the imaginary parts of the $T_i$ are periodic, such that we can identify $T_i\sim T_i +\ii$. For a general $\mathcal{N}=1$ EFT in four dimensions, the field space $\mathcal{M}$ is not an actual moduli space, since the fields $T_i$ can become massive due to the presence of a non-trivial scalar potential $V$. For instance, in the EFT defined by a string compactification, such a potential can be induced by background fluxes or through non-perturbative effects. In this work, we do not turn on fluxes such that we only have to worry about a potential induced by non-perturbative effects. In a general $\mathcal{N}=1$ EFT in four dimensions the scalar F-term potential can be expressed in terms of a superpotential $W$ as 
\begin{align}
    V =e^K\left(g^{i\bar j} D_i W \bar{D}_{\bar j} \bar W-3|W|^2\right)\,.
\end{align}
Here $K=K(T_i)$ is the K\"ahler potential, $g^{i\bar j} = \partial_{T_i} \partial_{\bar{T}_{\bar j}} K$ the metric on field space and $D^i = \partial_{T_i} +\partial_{T_i} K$ the K\"ahler covariant derivative. In the absence of classical contributions, the superpotential $W$ only receives corrections from instantons such that
\begin{align}
    W = \sum_{{\bf{m}}} \mathcal{A}_{{\bf{m}}} e^{-S_{{\bf m}}}\,.
    \label{eq:Wnonpert}
\end{align}

Here ${\bf{m}}=(m^1, \ldots, m^n)$ labels the instanton charge under the shift symmetry $T_i\rightarrow T_i +\ii$, $S_{{\bf m}}=2\pi m^iT_i$ is the action of an instanton, and $\mathcal{A}$ is the one-loop determinant. It is expected \cite{Palti:2020qlc} that for a generic $\mathcal{N}=1$ theory at least some $\mathcal{A}_{\bf{m}}$ are non-zero. We are interested in situations in which we can neglect also 
these non-perturbative contributions to $W$. This is the case if $S_{\bf{m}}\rightarrow  \infty$ for all $\bf{m}$. In this regime, all instanton effects become irrelevant, and we recover a continuous shift symmetry $T_i\rightarrow T_i +\ii c_i$ for $c_i\in \mathbb{R}$. Thus, the imaginary part of $T_i$ can be treated as an axion, i.e., we can write $T_i =  s_i + \ii a_i$, and $\mathcal{M}$ takes the r\^ole of a quasi-moduli space. The relevant part of the 4d action is then given by 
\begin{align}
    S_{4d} = \frac{M_\text{\tiny Pl}^2}{2} \int \left( R \star \ID - g^{i\bar j} \mathrm{d}T_i \wedge \star \mathrm{d}\bar{T}_{\bar j}\right)\,. 
\end{align}
In the perturbative regime, the axion $\im T_i$ can be dualized into a $2$-form $B_{2,i}$ and we can consider the object carrying electric charge under this $2$-form, i.e., a string in 4d. The tension of this string is controlled by the linear multiplet obtained upon dualizing the chiral field $T_i$, 
\begin{align}
   L^i =- \frac{1}{2}\frac{\partial K}{\partial \re T_i}\,. 
\end{align}

The associated cosmic string solutions are determined as four-dimensional solutions to the equations of motion, preserving two-dimensional Poincar\'e invariance along the directions parallel to a string in 4d. This motivates the ansatz (cf. \cite{Lanza:2020qmt,Lanza:2021udy})
\begin{align}
    ds^2 = -dt^2 + dx^2 + e^{2D(z)} dz d\bar z\,,
\end{align}
where $z\in \mathbb{C}$ is the coordinate transverse to the string. Supersymmetric solutions to the equations of motion now correspond to \cite{Greene:1989ya}
\begin{align}
     \partial_{\bar z} T_i(z) =0 \,,\qquad e^{2D} = |f(z)|^2 e^{K}\,,
\end{align}
with $f(z)$ a holomorphic function.

Thus, the chiral fields $T_i$ must have a holomorphic profile along the directions transverse to the string. Of particular interest are such $\frac12$-BPS cosmic string solutions associated to a string carrying charge ${\bf e}=(e_1,\ldots, e_n)$ under the 2-form $B_{2,i}$. When encircling the string core, the chiral fields are expected to undergo a monodromy transformation of the form $T_i\rightarrow T_i+\ii e_i$. 
The holomorphic profile for $T_i$ respecting this symmetry is given by 
\begin{equation}\label{eq:tiprofile}
    T_i(z)= T_i^{(0)} - \frac{e_i}{2\pi}\log \left(\frac{z}{z_0}\right)\,,
\end{equation}
where $T_i^{(0)}$ is some background value for the scalar field and $z_0$ some constant. This profile for $T_i(z)$ is valid as long as $|z|\ll |z_0|$. The description of the backreaction of the string is self-consistent if the limit $z\rightarrow 0$ corresponds to the regime where $S_{\bf m}\rightarrow  \infty$ for all instantons charged under the shift induced by the string. In this case, the continuous shift symmetry of $T_i(z)$ is approximately realized, and we can indeed treat the fields $\im T_i$ as axions. Furthermore, the contribution to the non-perturbative superpotential from the instantons charged under the string shift symmetry can be neglected. If we tune the ``spectator fields'', i.e., the fields that do not exhibit a $z$-dependent profile due to the backreaction of the string, to suitable values we can then completely neglect $W$ and hence any scalar potential for the fields $T_i$.  In the vicinity of the core, we reach the limit $T_i\rightarrow \infty$. A string leading to a self-consistent backreaction was dubbed EFT string in \cite{Lanza:2021udy}. In particular, the limit $z\rightarrow 0$ then corresponds to an infinite distance limit in $\mathcal{M}$. Therefore, the EFT strings provide a very useful way to study infinite distance limits in field space.

We can make the notion of EFT string more precise, by considering the non-perturbative corrections as, e.g., in \eqref{eq:Wnonpert}. Defining $T_i = s_i + \ii a_i$ and ${\bf{s}}=(s_1,\ldots,s_n)$ the collection of all saxions, the perturbative regime is obtained whenever 
\begin{equation}
    \left|e^{-2\pi m^i T_i}\right| = e^{-2\pi \langle {\bf m}, {\bf s}\rangle} \ll 1\coma
    \label{eq:pertregime}
\end{equation}
where we have introduced the pairing $\langle {\bf m}, {\bf s}\rangle=m^is_i$ between the instanton charges $m^i$ and the saxions $s_i$. Following \cite{Lanza:2021udy}, let us denote the set of all instanton charges correcting the effective action by $\mathcal{C}_I$. For \eqref{eq:pertregime} to hold, we need that $\langle {\bf m},{\bf s}\rangle\gg1$. Therefore, the saxions $s_i$ need to lie inside the \textit{saxionic cone} $\Delta \equiv \left\{{\bf s}|\langle {\bf m}, {\bf s}\rangle>0\,,\, \forall {\bf m} \in \mathcal{C}_I\right\}$, which can be viewed as $\Delta=\mathcal{C}_I^\vee \otimes \mathbb{R}$. Inserting the profile \eqref{profilebackreact} in \eqref{eq:pertregime} one finds that under a string flow the non-perturbative contributions behave like 
\begin{align}
    e^{-2\pi \langle {\bf m},{\bf s}\rangle} = e^{-2\pi \langle {\bf m}, {\bf s}_0\rangle} \left(\frac{z}{z_0}\right)^{\langle {\bf{m}},{\bf{e}}\rangle}\,,
\end{align}
where ${\bf e}\in \mathcal{C}^S$ is an element of the lattice $\mathcal{C}^S$ of BPS string charges. Assuming that ${\bf s}_0$ is chosen inside $\Delta$, in order not to spoil the perturbative description in the limit $z\rightarrow 0$, we need $\bf{e}$ to satisfy $\langle {\bf m},{\bf e}\rangle>0$ defining the sub-cone 
\begin{align}
\mathcal{C}^S_\text{\tiny EFT}\equiv \left\{ {\bf e} \in \mathcal{C}_S| \langle {\bf m}, {\bf e}\rangle\geq 0\,,\, \forall {\bf m} \in \mathcal{C}_I\right\}\subset \mathcal{C}_S,
\end{align}
i.e., the cone dual to $\mathcal{C}_I$ \cite{Lanza:2021udy}. In particular, an \emph{elementary} EFT string charge is a generator of the cone $\mathcal{C}^S_\text{\tiny EFT}$ corresponding to a string that carries unit charge under a single 2-form $B_{2,i}$ \cite{Lanza:2021udy}. In the following, we will be interested in EFT strings that are charged under a single 2-form, but not necessarily with unit charge. We will refer to such strings as \emph{primitive} EFT strings. Thus, every elementary string is a primitive string, but not every primitive string is elementary. 

In this work, we are studying infinite distance limits corresponding to weak coupling limits for gauge theories in F-theory compactified on an elliptically Calabi--Yau $4$-fold $Y_4$ with a smooth projective variety $B_3$ as base. 
Given the relation between EFT strings and infinite distance limits, we should be able to realize certain weak coupling limits as EFT string limits for some choice of EFT string charges. We are primarily interested in the scalar field space $\mathcal{M}$ of the effective 4d EFT associated to the K\"ahler deformations of $B_3$. A basis of the complex scalar fields spanning $\mathcal{M}$ is given by
\begin{equation}\label{eq:Ti}
    T_i = \frac{1}{2} \int_{D_i} J\wedge J + \ii \int_{D_i} C_4\,.
\end{equation}
Here, $\{D_i\}$, $i=1,\ldots, h^{1,1}(B_3)$ is a basis of generators of the cone of effective divisors of $B_3$, $C_4$ is the type IIB RR $4$-form and $J$ the K\"ahler form on $B_3$. 
Note that if the effective cone is non-simplicial, the definition of the scalar fields $T_i$ depends on the choice of a basis of generators of $\text{Eff}^1(B_3)$. In the following we denote by $\{D_i\}$ such a basis of generators.
Similarly, we can expand the K\"ahler form as 
\begin{equation}
    J= v^i J_i\,,
    \label{eq:Jform}
\end{equation}
where $J_i$ are a basis of generators of the K\"ahler cone $\mathcal{K}(B_3)$.  The geometry of the scalar field space is governed by a K\"ahler potential given by 
\begin{equation}\label{eq:K}
    K=-2\log \mathcal{V}_{B_3}\,,
\end{equation}
where $\mathcal{V}_{B_3}$ is the volume of $B_3$ measured in units of the type IIB string scale $M_\text{\tiny IIB}$ and given by 
\begin{equation}
    \mathcal{V}_{B_3}= \frac{1}{6}\int_{B_3} J^3\,, 
\end{equation}
in the large volume approximation. The effective bosonic 4d action including the scalar fields then takes the form
\begin{equation} \label{kinetic-2}
    S_{4d}= \frac{M_\text{\tiny Pl}^2}{2} \int \left( R\star \ID - g^{i \bar j}\,dT_i \wedge \star d\bar{T}_{\bar j}\right)+\ldots \,,
\end{equation}
where $g^{i\bar j}=\partial_{T_i} \partial_{\bar T_{\bar j}} K$ is the classical metric on $\mathcal{M}$. The dots in $S_{4d}$ stand for additional terms describing, e.g., 1-form gauge theories. In F-theory, this includes the gauge sector from stacks of 7-branes wrapping effective divisors $\mathcal{S}=\sum a^iD_i$, with $a^i \in \mathbb{R}_{\geq 0}$. The gauge coupling for these gauge theories is set by the volume $\mathcal{V}_{\mathcal{S}}$ of $\cal{S}$ as
\begin{equation}
    \frac{2\pi}{g_\text{\tiny YM}^2} =  \mathcal{V}_{\mathcal{S}}\,.
\end{equation}
The weak coupling limits thus correspond to the regime $\mathcal{V}_{\mathcal{S}}\rightarrow \infty$. Since $\mathcal{S}$ is a linear combination of the $D_i$ with non-negative coefficients, we can study weak coupling limits by considering limits where $T_i\rightarrow \infty$ for some $i$. As required by the DASC \cite{Lanza:2020qmt,Lanza:2021udy}, these limits can be obtained as end-points of certain EFT string flows. Since the relevant strings should be magnetically charged under the axionic component of $T_i$, they correspond to D3-branes wrapped on curves in $B_3$. 

  According to our previous discussion, the backreaction of such D3-brane strings induces a profile for the saxionic partners of the axions, which in this case are the divisor volumes $\re T_i$. To find a basis of EFT strings, we first consider \textit{elementary} EFT strings, i.e., strings carrying unit charge under only one $2$-form with respect to the chosen basis of chiral fields $T_i$ and their dual linear multiplets $L^i$.  In this case, the saxionic profile close to the EFT string core reads
\begin{equation}\label{profilebackreact}
    T_0 = T_0^{(0)} -\frac{1}{2\pi}\log\frac{z}{z_0}+\ldots\,, \qquad T_i=T_i^{(0)} \quad  \forall i\neq 0\,. 
\end{equation}
In the vicinity of $z=0$ this leads to a limit in field space given by
\begin{equation}\label{eq:singleEFTlimit}
    \mathcal{V}_{D_0}\rightarrow \infty \coma \mathcal{V}_{D_i}=\text{const.} \,\quad \text{for} \quad i\neq 0\,.
\end{equation}
As discussed in \cite{Lanza:2021udy}, the elementary EFT strings inducing these kinds of limits are obtained from D3-branes wrapping the generators of the cone of movable curves, $\text{Mov}_1(B_3)$, of $B_3$. To see this, let us consider the saxionic form
\begin{equation}\label{eq:bolds}
    {\bf{s}}=\frac{1}{2} J\wedge J\,. 
\end{equation}
This form can be expanded in terms of a basis of curve classes $C^i$ as 
\begin{equation}
    {\bf{s}} = s_i C^i\,.
\end{equation}
By comparison with \eqref{eq:Ti}, the $s_i$ can be identified with the saxions, $s_i=\re T_i$, by choosing the curves $C^i$ such that  
\begin{equation}\label{duality}
    C^i\cdot D_j = \delta^i_j\,. 
\end{equation}
 If $\text{Eff}^1(B_3)$ is simplicial, this means that the $C^i$ are a basis of curves dual to the $D_j$. In this case, since $D_j$ are generators of the cone of effective divisors, $\text{Eff}^1(B_3)$, and $\overline{\text{Eff}}^1(B_3)^\vee=\text{Mov}_1(B_3)$~\cite{MR3019449}, we conclude that the curves $C^i$ are generators of the movable cone. Modulo a subtlety that will be addressed momentarily, one expects the F-theory EFT strings to arise from D3-branes on curves $C\in \text{Mov}_1(B_3)$ leading to limits of the form \eqref{eq:singleEFTlimit}. If $\text{Eff}^1(B_3)$ is non-simplicial, while it is still true that $\overline{\text{Eff}}^1(B_3)^\vee=\text{Mov}_1(B_3)$,  for a given basis $\{D_i\}$ of generators not all curves satisfying \eqref{duality} are in $\text{Mov}_1(B_3)$. If for instance $C^0$ is non-movable, it cannot lead to an EFT string since there exists an effective divisor that shrinks in the EFT string limit and hence gives rise to an unsuppressed instanton close to the string core. Based on the given choice of basis for $\text{Eff}^1(B_3)$, one would then conclude that there is no EFT string realizing the EFT string limit \eqref{profilebackreact} for $\text{Re}\,T_0$. By choosing a different basis of generators of $\text{Eff}^1(B_3)$ this problem may be circumvented, but it is not guaranteed that such a basis exists within the given K\"ahler cone. 
 
We now address the subtlety alluded to above. Even if $\text{Eff}^1(B_3)$ is simplicial, 
 not every EFT string limit may be realizable in a given chamber of the K\"ahler cone. To see this, 
  notice that by \eqref{eq:bolds}, the saxionic form $\bf{s}$ depends on the choice of the K\"ahler form $J$. As stressed in \cite{Lanza:2021udy}, the map $\mathcal{K}(B_3)\ni J\mapsto \frac12 J\wedge J\in \text{Int}(\text{Mov}_1(B_3)) $ is in general not surjective. Therefore, not all limits of the form \eqref{eq:singleEFTlimit} necessarily exist for a given base $B_3$. 
  We will see the origin of this complication more clearly in the next section. This problem can be circumvented by working with bases $B_3$ modulo birational equivalence \cite{Lanza:2021udy}, i.e., by taking into account all bases $ B_3'$ that are isomorphic to $B_3$ outside higher-codimension loci. For instance, if $B_3'$ and $B_3$ are related by a flop transition, their respective cones of effective generators agree, but the cone of nef divisors might differ, i.e. $\mathcal{K}(B_3)\neq \mathcal{K}(B_3')$. Gluing together all K\"ahler cones obtained by such flops, one arrives at the extended K\"ahler cone
\begin{align}\label{eq:Kext}
    \mathcal{K}_\text{ext}(B_3) = \bigcup_{B_3' \sim B_3} \mathcal{K}(B_3')\,,
\end{align}
where ``${\sim}$" means that $B_3'$ and $B_3$ are related by a flop transition. The claim of \cite{Lanza:2021udy} (see also \cite{Xiao2016:aaa}) is now that the map
\begin{align}
   \mathcal{K}_{\text{ext}}(B_3) \ni J \mapsto \frac12 J\wedge J \in \text{Int}(\text{Mov}_1(B_3))
\end{align}
is a bijection. We are going to confirm this expectation in explicit examples in Appendix \ref{app:examples}. For most of this work we consider, however, the bases $B_3$ without taking into account the flop transitions. In other words, we restrict ourselves to a single chamber of the $\mathcal{K}_\text{ext}$ corresponding to a single element on the RHS of \eqref{eq:Kext}. The benefit of this is that
\begin{enumerate}[label={{\itshape\roman*})},ref={{\itshape\roman*})}]
\item we can work with an explicit base $B_3$ with a fixed set of generators of the K\"ahler cone, and 
\item we do not have to worry about subtleties arising at the boundaries $\mathcal{K}(B_3)$ corresponding to a flop transitions, for instance due to tensionless strings obtained from D3-branes wrapping the flopped curve. 
\end{enumerate}

\subsection{EFT string limits}
\label{ssec:Weakcouplinglimits}

We now study the limits of the form \eqref{eq:singleEFTlimit} that can be realized in a given chamber of $\mathcal{K}_\text{ext}$. Since in F-theory the saxionic components of the chiral fields correspond to the volumes of a basis of the generators of $\text{Eff}^1(B_3)$, the EFT string limits describe limits where (a subset) of these volumes diverge. Crucially, in order for such a limit to be an EFT string limit all volumes that are not kept constant have to scale at the same rate, since, by \eqref{profilebackreact}, all saxions with non-trivial profile have to scale homogeneously in the limit $z\rightarrow 0$. Any limit for which the volumes of the generators of $\text{Eff}^1(B_3)$ do not scale homogeneously can thus not be obtained as an EFT string limit. In F-theory we can therefore characterize an EFT string limit, i.e., a limit that can be obtained as the $z\rightarrow 0$ limit for a cosmic EFT string solution, via 

\begin{definition}[EFT string limits]\label{def:EFTlimit}
An EFT string limit in the F-theory K\"ahler quasi-moduli space is a limit in which the volume of a subset $\mathcal{I}\subset\{D_i\}$ of a given basis of generators of the effective cone of divisors $\text{Eff}^1(B_3)$ diverges homogeneously, i.e.,
\begin{equation}
    \mathcal{V}_{D}\sim \lambda \rightarrow \infty\,,\qquad \forall D\in \mathcal{I}\,,
\end{equation} 
while $\mathcal{V}_{\hat D}<\infty$ for $\hat{D}\notin \mathcal{I}$. In this case, we call the set $\mathcal{I}$ \emph{homogeneously expandable}. In particular, a \textit{primitive} EFT string limit corresponds to a limit for which $|\mathcal{I}|=1$.
\end{definition}

Among the EFT string limits, the {\it primitive} EFT string limits can be viewed as the basic  building blocks. Recall from the discussion in the paragraph after \eqref{eq:tiprofile} that a primitive EFT string limit is the limit induced by an EFT string charged only under a single 2-form $B_{2,i}$. As a result, out of the basis of generators $\{D_i\}$ of $\text{Eff}^1(B_3)$ only a single generator $D_i$ acquires a large volume in such limits, as in Definition \ref{def:EFTlimit}. This prompts the question under which condition the limit $\mathcal{V}_{D_i}\rightarrow \infty$ can be reached as such a primitive EFT string limit for a fixed base $B_3$ without taking into account flop transitions. A potential obstacle arises since the volume of divisors of $B_3$ are not all independent, but depend quadratically on the volumes of curves arising as the expansion parameters of the K\"ahler form $J$ in terms of a basis of K\"ahler cone generators (cf. \eqref{eq:Jform}).\footnote{\label{foot:non-simK} Notice that if the K\"ahler cone of $B_3$ is simplicial, the expansion parameters of $J$ are simply the volumes of the dual Mori cone generators. In case the K\"ahler cone is non-simplicial, one should specify a basis of K\"ahler cone generators in which $J$ is expanded. In this case the coefficients of the expansion do not necessarily correspond to the volumes of Mori cone generators.} It is therefore more practical to check for primitive EFT string limits at the level of curves in $B_3$. In the following, we always assume that we have specified a basis of K\"ahler cone generators. To obtain a minimal set of generators of $\text{Eff}^1(B_3)$ acquiring a large volume, we are advised to scale up a minimal set of $v^i$. Given an element $J_0$ of our chosen basis of K\"ahler cone generators we may thus ask if the limit $v^0\rightarrow \infty$ can correspond to a primitive EFT string limit.

If $v^0$ appears in the volume of more than one generator of $\text{Eff}^1(B_3)$, then in order to obtain a primitive EFT string limit, we must perform a {\it co-scaling}. This means  
 that we need to scale to zero or to infinity other $v^i$ in order to realize a 
 primitive EFT string limit, as in Definition \ref{def:EFTlimit}. 
 The scaling to zero may be required in order to minimize the set of divisors becoming large, in particular, for a primitive EFT string limit, to ensure there is only one generator of $\text{Eff}^1(B_3)$ with this property. 
  The following proposition gives a geometric criterion for a primitive EFT string limit to exist within the chosen chamber of $\mathcal{K}_{\rm ext}$:
\begin{proposition}\label{prop:primitiveEFT}
Consider a primitive EFT string limit for a generator $D$ of $\text{\normalfont Eff}\,{}^1(B_3)$,  corresponding to the 
limit $v^0\rightarrow \infty$  with $v^0$ being the volume of a curve $\mathcal{C}^0$ in the Mori cone contained in $D$.
Then 
all other generators of $\text{\normalfont Eff}\,{}^1(B_3)$ containing $\mathcal{C}^0$ are rational or genus-one fibrations with  $\mathcal{C}^0$ contained in the base of this fibration. 
\end{proposition}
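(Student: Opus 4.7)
The plan is to translate the primitive EFT string condition into an intersection-theoretic statement on $B_3$, then exploit the classification of nef classes on surfaces to extract a fibration structure on the divisors containing $\mathcal{C}^0$, and finally invoke the F-theory geometry to constrain the fiber genus.

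First, I would expand the volume of any generator $D'\in\mathrm{Eff}^1(B_3)$ containing $\mathcal{C}^0$ as $\mathcal{V}_{D'}=\tfrac{1}{2}\int_{D'}J\wedge J$ with $J=v^0 J_0+\sum_{a\neq 0}v^a J_a$. In the primitive EFT string limit only $v^0\to\infty$ diverges, with any required co-scaling keeping the other $v^a$ bounded, so the requirement that $\mathcal{V}_{D'}$ remain finite forces the $(v^0)^2$ coefficient to vanish,
\[ D'\cdot J_0\cdot J_0\;=\;\bigl(J_0|_{D'}\bigr)^{2}\;=\;0\quad\text{in }D'. \]
Since $J_0$ is a K\"ahler class on $B_3$, the restriction $L:=J_0|_{D'}$ is nef on the smooth projective surface $D'$, and $L$ is non-zero because $J_0\cdot\mathcal{C}^0\neq 0$ (the volume $v^0$ of $\mathcal{C}^0$ is strictly positive throughout the limit).

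Next, I would apply the classical result that a non-zero nef class of vanishing self-intersection on a smooth projective surface is semiample: a sufficiently divisible multiple of $L$ is base-point free and defines a morphism $\pi:D'\to C$ onto a smooth curve $C$, with $L$ numerically proportional to the generic fiber class $[F]$ of $\pi$. Since $\mathcal{C}^0\cdot L\neq 0$, the Mori cone generator $\mathcal{C}^0\subset D'$ meets the generic fiber non-trivially, so it projects surjectively onto the base $C$ and is thereby contained in the base of the fibration, in the sense of the proposition.

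Finally, I would pin down the genus of $F$. Adjunction on $D'$ gives $2g(F)-2=F^2+K_{D'}\cdot F=K_{D'}\cdot F$, so the claim reduces to showing $K_{D'}\cdot F\in\{-2,0\}$, corresponding to rational ($g=0$) and genus-one ($g=1$) fibrations respectively. Here I would use $K_{D'}=(K_{B_3}+D')|_{D'}$ and the F-theory requirement that $B_3$ host an elliptic Calabi-Yau $4$-fold, which restricts $-K_{B_3}$ to lie in the movable/effective cone. Combined with the nefness of $L=J_0|_{D'}$ and the extremality of $\mathcal{C}^0$ in $\mathrm{Mov}_1(B_3)$, this bounds $K_{D'}\cdot F$ to the two admissible values. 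The main obstacle will be precisely this last step: ruling out higher-genus generic fibers cleanly almost certainly requires a careful case analysis along the Enriques--Kodaira classification of $D'$, sharpened by the F-theory effectivity constraints on $K_{B_3}$ and by the geometry of $\mathrm{Mov}_1(B_3)$, in line with the authors' remark that the technical proofs are relegated to Appendix~\ref{app:proofs}.
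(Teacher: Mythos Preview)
Your approach is genuinely different from the paper's. The paper's proof is essentially a one-line citation: it observes that if $\mathcal{C}^0\subset\hat D$ then $\mathcal{V}_{\hat D}$ can only stay finite by shrinking some other curve in $\hat D$, and then defers entirely to Section~2.2 of \cite{Lee:2018urn} for the statement that such a co-scaling is possible precisely when $\hat D$ is a rational or genus-one fibration with $\mathcal{C}^0$ in the base. You instead attempt a self-contained intersection-theoretic argument. That is a reasonable programme, but there are two gaps.

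First, your key lemma --- that a non-zero nef class $L$ with $L^2=0$ on a smooth projective surface is automatically semiample --- is not a general theorem. Counterexamples already occur on ruled surfaces over curves of genus $\geq 2$ (Mumford's construction of a nef line bundle with no sections on any multiple). To make this step rigorous you would need to invoke additional structure on the effective-cone generators $D'$ of an F-theory base, for instance that they are rational surfaces or Mori dream spaces, where nef does equal semiample. The paper avoids this entirely by outsourcing the fibration claim to the cited reference.

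Second, you mischaracterise the co-scaling: you write that ``any required co-scaling keep[s] the other $v^a$ bounded'', but in fact the whole point of the paper's argument is that some $v^a$ must be driven to \emph{zero} to cancel the linear $v^0$ terms in $\mathcal{V}_{D'}$, and it is the existence of such a shrinkable curve that forces the fibration. Your $(v^0)^2$-vanishing conclusion $(J_0|_{D'})^2=0$ survives this correction, so the damage is limited, but the logical flow is inverted relative to the paper: the paper derives the fibration from the \emph{possibility} of compensating the linear term, whereas you derive it statically from the quadratic vanishing and then need semiampleness to convert that into geometry.

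Your candid admission that the genus bound is the hard step is accurate; the paper does not prove it either but relies on \cite{Lee:2018urn} (with the footnoted remark that genus-one is the natural generalisation of the rational case treated there). Your adjunction-plus-effectivity sketch is along the right lines but would need the cited analysis to be made precise.
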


We provide the proof of this Proposition in Appendix \ref{app:proofs}. While in classical moduli space, this is a sufficient criterion, we must in addition ensure that in the process of co-scaling no divisor volume scales to zero. If this happens, quantum corrections, for instance from D3-brane instantons, become non-negligible and the classical analysis breaks down. Hence, we only consider limits in which no divisor volume shrinks to zero to ensure validity of the effective supergravity approximation. To this end, the co-scaling may also require taking some of the $v^{i}$, $i\neq 0$ to infinity. Accordingly, Proposition \ref{prop:primitiveEFT} only provides a necessary condition for the EFT string limit to exist for a given generator $D$ of $\text{Eff}^1(B_3)$. 

From Proposition \ref{prop:primitiveEFT}, it is clear that a primitive EFT string limit may not exist for every generator of $\text{Eff}^1(B_3)$ within a given K\"ahler cone chamber. This means that in case a gauge theory is realized on a 7-brane wrapping such a divisor, its weak coupling limit cannot be studied by only taking primitive EFT strings into account. We therefore need to broaden the notion of ``minimal'' EFT strings to include the large volume limits that cannot be realized as primitive EFT string limits within the chosen chamber of $\mathcal{K}_\text{ext}(B_3)$. Again, we start by studying limits in which we scale up a single $v^i$ and check whether, via suitable co-scalings, we can obtain an EFT string limit. If this is possible, we are interested in limits in which the set $\mathcal{I}$ in Definition \ref{def:EFTlimit} is minimized. We dub the resulting limits \emph{quasi-primitive} EFT string limits, defined as follows: 
\begin{definition}[Quasi-primitive EFT string limit]\label{def:quasiprimitive}
Given a K\"ahler cone generator $J_0$  we can associate to it a \emph{quasi-primitive} EFT string limit  as the limit  $v^0\rightarrow \infty$ co-scaled in such a way to obtain an EFT string limit and to minimize $|\mathcal{I}|>0$ for the set $\mathcal{I}$ as in Definition \ref{def:EFTlimit}. In particular, a minimal set must only contain divisors whose volume does not scale to infinity (at the same rate as the other divisors in $\mathcal{I}$) only as a result of the required co-scaling of some other $v^i \neq v^0 \to \infty$. 
\end{definition}
Notice that, according to this definition, every primitive string limit is quasi-primitive.
 Note furthermore that there can be multiple quasi-primitive EFT string limits for a given K\"ahler  cone generator $J_0$ depending on the chosen co-scaling.  At the same time, it is still possible that a generator of $\text{Eff}^1(B_3)$ is not contained in $\mathcal{I}$ for any quasi-primitive EFT string limit. 
This can happen if the required co-scaling $v^0 \neq v^i \to \infty$ either leads to divisors with expanding volume not contained in the set of divisors becoming large only as a consequence of $v^0 \to \infty$, or if we leave the class of EFT limits altogether because some of the expanding divisors do so at different rates.
We will discuss the weak coupling limits for these theories separately in Section \ref{ssec:nonEFTgaugetheory}.

Having defined the EFT string limits in field space, we may ask about the EFT strings giving rise to such limits via their backreaction. As noticed before, an EFT string in F-theory can be obtained by wrapping a D3-brane on a movable curve. Following the discussion in Section \ref{ssec:review}, for a given EFT string limit, we can identify the EFT string giving rise to the limit by considering the instantons that become suppressed in that limit. The relevant instantons in our setup are Euclidean D3-branes wrapped on effective divisors $D=\sum_i m^i D_i$ of $B_3$. Their action is simply given by 
\begin{align}
    S_{\bf m} =2\pi m^i \,T_i\,,
\end{align}
where $T_i$ are the chiral scalars introduced in \eqref{eq:Ti}. These instantons are charged under the axionic shift induced by the EFT string solution \eqref{eq:tiprofile}. In the F-theory setting, this amounts to the statement that the curve $C$ giving rise to the string and the instanton divisor have non-vanishing intersection, i.e. $C\cdot D\neq 0$. In an EFT string limit, the action of an instanton vanishes, i.e. $\re S_{D3}\rightarrow \infty$, if and only if the instanton is charged under shift symmetries induced by the string. For a given quasi-primitive EFT string limit, this allows us to identify the string giving rise to the corresponding limit. More precisely we have the following
\begin{proposition}\label{prop:EFTstring}
Given a quasi-primitive EFT string limit associated to a K\"ahler cone generator $J_0$, then the EFT string giving rise to this limit is obtained by wrapping a D3-brane on 
\begin{enumerate}[label={\normalfont P\ref*{prop:EFTstring}\alph*.},ref={\normalfont P\ref*{prop:EFTstring}\alph*}, labelindent=0pt]
    \item\label{list:EFT1new} $C= \alpha\, J_0^2$ if $J_0^2\neq 0$, or
    \item\label{list:EFT2new} $C= \alpha \,J_0\cdot J_i$ if $J_0^2=0$ with $J_i\neq J_0$ a suitable K\"ahler cone generator, 
\end{enumerate}
for some $\alpha \in \mathbb{Q}_{>0}$.
\end{proposition}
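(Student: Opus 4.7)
The plan is to identify the D3-brane curve by matching the asymptotic direction of the saxion flow sourced by the string to that of the geometric saxion form $\mathbf{s} = \tfrac12 J\wedge J$ along the quasi-primitive limit. First I would set up the dictionary between D3-brane curves and string charges: a D3-brane wrapping a curve $C\in H_2(B_3,\ZZ)$ produces a cosmic string whose charge under the 2-form $B_{2,i}$ dual to $T_i$ is the intersection number $e_i = C\cdot D_i$. Substituting this into the profile \eqref{eq:tiprofile} shows that the saxions $s_i=\re T_i$ diverge near the string core along the vector $(C\cdot D_i)_i$. Because the basis $\{C^i\}$ of curves dual to $\{D_i\}$ satisfies $C^i\cdot D_j = \delta^i_j$, the saxion vector $s_i C^i$ is precisely the curve class $C$ itself, so $C$ can be read off from the leading behavior of $\mathbf{s}(z)$ interpreted as a curve class in $B_3$.

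Next I would use the geometric identification $\mathbf{s} = \tfrac12 J\wedge J$, which as an element of $H_4(B_3)$ is Poincar\'e dual to a curve class through the relation $s_i = \tfrac12 J^2\cdot D_i$. Expanding $J=v^i J_i$ in the chosen basis of K\"ahler cone generators gives a quadratic form $\mathbf{s} = \tfrac12\, v^i v^j\, J_i\cdot J_j$ valued in curve classes, and the string-wrapping curve in the quasi-primitive limit must be proportional to its leading piece as $v^0\to\infty$. If $J_0^2\neq 0$, the term $\tfrac12(v^0)^2 J_0^2$ dominates; the minimality clause in Definition~\ref{def:quasiprimitive} forbids any further co-scaling $v^i\to\infty$ that would enlarge $|\mathcal{I}|$ beyond the subset forced by positivity of divisor volumes, so the leading direction of $\mathbf{s}$ is $J_0^2$, giving $C = \alpha J_0^2$ with $\alpha\in\mathbb{Q}_{>0}$ fixed by charge quantization. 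If instead $J_0^2=0$, the pure $(v^0)^2$ term vanishes and $\mathbf{s}$ can only diverge through a cross term; minimality then dictates co-scaling a suitable K\"ahler cone generator $J_i$ with $J_0\cdot J_i\neq 0$, so that the leading contribution to $\mathbf{s}$ is $v^0 v^i J_0\cdot J_i$, yielding $C=\alpha\, J_0\cdot J_i$.

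The main obstacle I foresee lies in the $J_0^2=0$ case: one must verify that the chosen $J_i$ actually produces a quasi-primitive limit in the sense of Definition~\ref{def:quasiprimitive}, i.e.\ that the resulting set of divisors with diverging volume is minimized and that no divisor volume collapses to zero during the co-scaling, and also that $J_0\cdot J_i$ lies in the movable cone of $B_3$ so that it can support a BPS D3-brane EFT string. I must also cross-check the instanton criterion in both directions: Euclidean D3-branes on a divisor $D$ should have diverging action precisely when $C\cdot D>0$, which through the duality translates to $D\in\mathcal{I}$, and conversely the backreaction of the D3-brane on the identified curve must drive the K\"ahler moduli exactly along the stipulated EFT string limit. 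This two-sided check, together with the constraints on $J_0$ supplied by Proposition~\ref{prop:primitiveEFT}, closes the identification.
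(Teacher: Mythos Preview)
Your approach is essentially the paper's: since $\mathbf{s}=s_i C^i$ with $s_i=\mathcal{V}_{D_i}$, reading off the leading curve-class direction of $\tfrac12 J\wedge J$ is the same as the paper's identification of the string charges from the ratios of diverging divisor volumes and the instanton-suppression criterion. One point to sharpen is the $J_0^2=0$ case: your phrase ``co-scaling a suitable K\"ahler cone generator $J_i$'' reads as sending $v^i\to\infty$, but the paper's co-scaling keeps one $v^{j_0}$ \emph{finite} and sends the remaining $v^{k\neq 0,j_0}\to 0$; if you scaled $v^i$ up, the leading piece of $\mathbf{s}$ would generically pick up a $(v^i)^2 J_i^2$ term and spoil the identification. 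With that clarification (and the analogous $v^{\hat\imath}\sim 1/\lambda$ co-scaling in the $J_0^2\neq 0$ case to keep divisors with only linear $v^0$ dependence finite, so that homogeneity in Definition~\ref{def:EFTlimit} is respected), your argument coincides with the paper's.
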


We prove this Proposition in Appendix \ref{app:proofs}. 

Since the curves in Conditions \ref{list:EFT1new} and \ref{list:EFT2new} have non-negative intersection with the generators of $\text{Eff}^1(B_3)$, these curves are necessarily movable. However, in general, these curves are not necessarily generators of the movable cone. Notice that Proposition~\ref{prop:EFTstring} {\it assumes} that for a given K\"ahler cone generator, an associated quasi-primitive EFT string limit exists within the chosen K\"ahler cone chamber. This means that there is a co-scaling such that neither does $|\mathcal{I}|$ increase, nor does any generator of $\text{Eff}^1(B_3)$ shrink to zero size.  This is a non-trivial assumption, which may fail in general. In Appendix \ref{app:examples}, we discuss examples where curves of the form as in Condition \ref{list:EFT1new} or \ref{list:EFT2new} do not give rise to a quasi-primitive EFT string limit. However, following the discussion around \eqref{eq:Kext}, we expect that for each generator of $\text{Mov}_1(B_3)$, there exists a chamber of $\mathcal{K}_{\rm ext}$ in which the corresponding primitive EFT string limit can be obtained. 
By Proposition~\ref{prop:EFTstring}, this means that for each generator of $\text{Mov}_1(B_3)$ there exists a chamber of the extended K\"ahler cone in which the generator can be written as in Condition \ref{list:EFT1new} or \ref{list:EFT2new}. Indeed, this is confirmed in the examples of Appendix \ref{app:examples}.

As we have seen in Proposition \ref{prop:EFTstring}, every quasi-primitive EFT string can be written as the product of two K\"ahler cone generators. For a general curve of the form $C=D_1\cdot D_2$, with $D_1$ and $D_2$ effective (but not necessarily movable) divisors, we can define the two quantities 
\begin{equation}
    Q_1=D_1^2 \cdot D_2\coma  Q_2= D_1 \cdot D_2^2\,,
\end{equation}
which only depend on the intersection numbers of $D_1$ and $D_2$. As we will see below, such curves $D_1\cdot D_2$ on $B_3$ can be nicely classified according to the value of 
\begin{equation}\label{eq:q}
    q=\Theta(Q_1) + \Theta(Q_2) \,,
\end{equation}
where we defined the step-function 
\begin{equation}
    \Theta(x) =\left\{\begin{matrix} 1\,,& x>0\\0\,,&x=0\\ -1\,,&x<0
    \end{matrix}\right.\,. 
\end{equation}
Accordingly, $q$ takes values in $q\in \{0,\pm1,\pm 2\}$ and is non-negative for movable curves hosting (quasi-)primitive EFT strings since for these $D_{1,2}$ are K\"ahler cone generators. As it turns out, the crucial properties of (quasi-)primitive EFT string only depend on the value of the topological quantity $q$ which can therefore be used to classify (quasi-)primitive EFT string. For instance for $q=0$ we can show the following 
\begin{proposition}\label{prop:q0}
A quasi-primitive EFT string limit for a string with $q=0$ is a primitive EFT string limit.
\end{proposition}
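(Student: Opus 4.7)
The plan is to translate the condition $q=0$ into vanishing intersections of the quasi-primitive EFT string curve $C$ with the K\"ahler cone generators from which it is built, and then exploit the minimality clause of Definition \ref{def:quasiprimitive} to conclude that $|\mathcal{I}|=1$. By Proposition \ref{prop:EFTstring}, the EFT string associated to a K\"ahler cone generator $J_0$ wraps $C=\alpha J_0^2$ in Case 1 ($J_0^2\neq 0$) or $C=\alpha J_0\cdot J_l$ in Case 2 ($J_0^2=0$, with $J_l$ another K\"ahler cone generator). Since the $J_i$ are nef and $C$ is movable, $Q_1,Q_2\geq 0$, so $q=0$ is equivalent to $Q_1=Q_2=0$. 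Explicitly, this yields $J_0^3=0$ in Case 1, and $J_0\cdot J_l^2=0$ in Case 2 (with $J_0^2\cdot J_l=0$ being automatic). Either way, one obtains $J_0\cdot C=0$, and in Case 2 also $J_l\cdot C=0$.

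Next, I would track the volumes $\mathcal{V}_D=\tfrac{1}{2}D\cdot J^2$ along the flow $v^0\to\infty$, writing $J=v^0J_0+\sum_{\alpha\neq 0}v^\alpha J_\alpha$. Recalling from Section \ref{ssec:review} that along the EFT string flow the saxion $s_a=\mathcal{V}_{D_a}$ shifts by $(C\cdot D_a)\log(|z_0|/|z|)/(2\pi)$, the divergent set $\mathcal{I}$ is contained in $\{D:\,D\cdot C>0\}$. The $q=0$ vanishings $J_0\cdot C=0$ (and $J_l\cdot C=0$ in Case 2) then exclude the K\"ahler cone generators $J_0$ and $J_l$ from $\mathcal{I}$, insofar as they lie in the effective cone. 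Moreover, the vanishings $J_0^3=0$ (Case 1) or $J_0\cdot J_l^2=0$ (Case 2) guarantee that the naive $v^0$-linear contributions to $\mathcal{V}_{J_0}$ (and $\mathcal{V}_{J_l}$) can be absorbed by a minimal co-scaling of some $v^\alpha$ that does not cause any effective generator to shrink to zero, consistent with the quasi-primitive prescription.

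To conclude $|\mathcal{I}|=1$, I would argue that the curve $C$ prescribed by Proposition \ref{prop:EFTstring} together with the $q=0$ vanishings defines an extremal ray of $\text{Mov}_1(B_3)$ dual to a unique generator of $\text{Eff}^1(B_3)$. The main obstacle is precisely this uniqueness: a priori several effective generators could intersect $C$ positively, and one must show that the constraints imposed by the vanishings rule this out. I would address this by a case-by-case geometric analysis. The condition $J_0^3=0$ with $J_0^2\neq 0$ geometrically encodes $J_0$ as the pullback of a class under a fibration $B_3\to S_2$, so that $J_0^2$ is proportional to a fiber curve that is intersected non-trivially only by the horizontal effective generator. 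An analogous fibration interpretation holds in Case 2, where $J_0^2=0$ and $J_0\cdot J_l^2=0$ together constrain $B_3$ to split into a product-like structure aligned with $C$. The explicit constructions in Section \ref{ssec:P1fibFn} and Appendix \ref{app:examples} would serve as consistency checks.
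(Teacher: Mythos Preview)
Your strategy is the same as the paper's: use Proposition \ref{prop:EFTstring} to write $C$ as $\alpha J_0^2$ (or $\alpha J_0\cdot J_l$), translate $q=0$ into the vanishings $J_0^3=0$ (resp.\ $J_0\cdot J_l^2=0$), interpret these as a fibration condition on $B_3$, and then argue that $C$ intersects exactly one generator of $\text{Eff}^1(B_3)$. But the crucial step---the uniqueness of that generator---is asserted rather than proved. You acknowledge this yourself (``The main obstacle is precisely this uniqueness\ldots''), and the sketch you give does not close the gap: from $J_0^3=0$, $J_0^2\neq 0$ alone it is not obvious that $J_0^2$ is the class of a \emph{generic fiber} of a rational or genus-one fibration, nor that in such a fibration every effective generator other than the (multi-)section contains a fibral curve and hence has zero intersection with $C$. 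Your Case~2 treatment (``product-like structure'') is vaguer still.

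The paper supplies exactly these missing ingredients. It invokes the classification result of \cite{Lee:2019tst} to identify the $q=0$ curve $C$ with the generic fiber of a rational or genus-one fibration $p:B_3\to B_2$, and then uses the explicit description of $\text{Eff}^1(B_3)$ for such fibrations (section $S_-$, $p$-vertical divisors, and blow-up exceptionals) to conclude that only $S_-$ meets $C$. Finally, it appeals to Proposition~\ref{prop:primitiveEFT} to verify that scaling up a curve in $S_-$ can be co-scaled so that the $p$-vertical and exceptional generators stay finite, which you also left implicit. Without these inputs your argument remains a plausible outline rather than a proof.
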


The proof for this proposition can be found in Appendix \ref{app:proofs}.  Like Proposition~\ref{prop:EFTstring}, Proposition \ref{prop:q0} {\it assumes} that there exists a co-scaling leading to the quasi-primitive EFT string limit. As we illustrate in the examples in Appendix \ref{app:examples}, there may exist geometries where a $q=0$ curve of the form $J_0^2$ as in Proposition \ref{prop:EFTstring} does in fact not give rise to a (quasi-)primitive EFT string limit within a chosen chamber of $\mathcal{K}_{\rm ext}$.

It was shown in \cite{Lee:2019tst}, that a curve with $q=0$ corresponds to the generic fiber of a genus-one/rational fibration and the D3-brane wrapped on it is dual to a critical type II/heterotic string. The relation between the $q=0$ curve $C$, the generator of the K\"ahler cone $J_0$ and the Mori cone generator $\mathcal{C}^0$ is illustrated in Figure \ref{fig:q=0}. 

On the other hand, in order for a $q=1$ curve $C$ to exist, there must exist at least one generator $J_0$ of the K\"ahler cone such that $J_0^2=0$. This implies that $B_3$ is a surface fibration with generic fiber $J_0$. From Proposition \ref{prop:EFTstring}, in order for the curve $C$ to have $q=1$, there must exist a second K\"ahler cone generator $J_1$ with $J_1^2\cdot J_0\neq 0$.
This means that the generator $J_1$ must be a vertical divisor with respect to the surface fibration. 
The curve $C$ is then a curve inside the surface fiber. Figure \ref{fig:q=1} provides an illustration of the relation between $J_0$, $C$ and the curve $\mathcal{C}^0$ with volume $\mathcal{V}_{\mathcal{C}^0}=v^0$ associated to the (quasi-)primitive EFT string limit. 

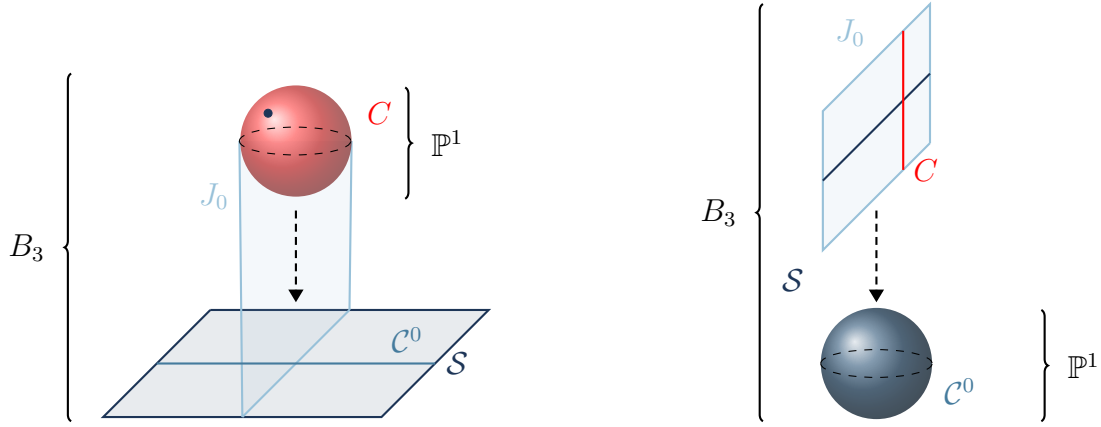
\begin{figure}[!htp]
    \centering
    \begin{subfigure}[t]{0.495\textwidth}
    \centering
    \begin{tikzpicture}[scale=3.7]
   \draw[fill=prcolor,fill opacity=0.1,draw=prcolor,thick] (0,0,0) -- (1,0,0) -- node[right,midway,opacity=1,prcolor] {$\mathcal{S}$} (1,0,1) -- (0,0,1) -- (0,0,0);
   \draw[fill=tercolor,fill opacity=0.1,draw=none,thick] (0.5,0,0) -- (0.5,0,1) -- node[left,pos=0.8,tercolor,opacity=1] {$J_0$} (0.297,0.8,0.5) -- (0.7,0.8,0.5) -- (0.5,0,0);
   \draw[thick,tercolor] (0.5,0,0) -- (0.5,0,1);
   \draw[thick,tercolor] (0.5,0,1) -- (0.297,0.8,0.5);
   \draw[thick,tercolor] (0.5,0,0) -- (0.7,0.8,0.5);
   \draw[thick,seccolor] (0,0,0.5) -- node[above,pos=0.9] {$\mathcal{C}^0$} (1,0,0.5);
   \shade[ball color = prhigh, opacity = 0.6] (0.5,0.8,0.5) circle (0.2);
   \node[prhigh] at (0.8,0.9,0.5) {$C$};
   \draw[dashed] (0.297,0.8,0.5) arc (180:0:0.2 and 0.05);
   \draw[dashed] (0.297,0.8,0.5) arc (180:0:0.2 and -0.05);
   \node[circle,draw=prcolor,fill=prcolor,inner sep=0pt,minimum size=3pt] at (0.4,0.9,0.5) {};
   \draw[thick,-Triangle,densely dashed] (0.5,0.55,0.5) -- (0.5,0.22,0.5);
   \draw [thick,decorate, decoration = {brace}] (-0.5,-0.4,0) --  node[left,pos=0.5,xshift=-0.2cm] {$B_3$} (-0.5,0.85,0);
   \draw [thick,decorate, decoration = {brace,mirror}] (0.7,0.4,0) --  node[right,pos=0.5,xshift=0.2cm] {$\PP^1$} (0.7,0.8,0);
\end{tikzpicture}
\caption{A $q=0$ EFT string curve $C=J_0^2$, with $J_0^3=0$, is a $\PP^1$-fiber.}
\label{fig:q=0}
    \end{subfigure}\hfill
    \begin{subfigure}[t]{0.495\textwidth}
    \centering
    \begin{tikzpicture}[scale=3.7]
   \draw[fill=tercolor,fill opacity=0.1,draw=tercolor,thick] (0,0,0) -- (0,0.5,0) -- node[above left ,midway,opacity=1,tercolor] {$J_0$} (0,0.5,1) -- (0,0,1) -- (0,0,0);
   \draw[thick,prcolor] (0,0.25,0) -- (0,0.25,1);
   \draw[thick,prhigh] (0,0.5,0.25) -- node[prhigh,pos=1,xshift=0.3cm] {$C$} (0,0,0.25);
   \draw[thick,-Triangle,densely dashed] (0,-0.05,0.5) -- (0,-0.38,0.5);
   \shade[ball color = seccolor!50!prcolor, opacity = 0.8] (0,-0.6,0.5) circle (0.2);
   \node[prcolor] at (-0.3,-0.3,0.5) {$\mathcal{S}$};
   \node[seccolor] at (0.3,-0.7,0.5) {$\mathcal{C}^0$};
   \draw[dashed] (-0.2,-0.6,0.5) arc (180:0:0.2 and 0.05);
   \draw[dashed] (-0.2,-0.6,0.5) arc (180:0:0.2 and -0.05);
   \draw [thick,decorate, decoration = {brace}] (-0.6,-1,0) --  node[left,pos=0.5,xshift=-0.2cm] {$B_3$} (-0.6,0.5,0);
   \draw [thick,decorate, decoration = {brace}] (0.4,-0.6,0) --  node[right,pos=0.5,xshift=0.2cm] {$\PP^1$} (0.4,-1,0);
\end{tikzpicture}
\caption{A $q=1$ EFT string curve $C=J_0\cdot J_i$ lies in a surface fiber of $B_3$.}
\label{fig:q=1}
    \end{subfigure}
    \caption{Schematic representation of quasi-primitive EFT strings with $q = 0$ (Figure \ref{fig:q=0}) and $q=1$ (Figure \ref{fig:q=1}). The limits correspond to expanding the base of the rational fibration ($q=0$) or of the surface fibration ($q=1$). EFT string limits with $q=2$ correspond to a homogeneous decompactification.}
    \label{fig:q=01}
\end{figure}

\subsection{A simple example}
\label{ssec:P1fibFn}

Let us illustrate our discussion so far, and in particular Proposition \ref{prop:EFTstring}, in a simple example. In Appendix \ref{app:examples}, we discuss various variations of this geometry, exemplifying how in settings where $\mathcal{K}_{\rm ext}$ has multiple chambers, not all primitive EFT string limits can necessarily be obtained in all chambers of $\mathcal{K}_{\rm ext}$. 

The example we would like to study corresponds to a $\PP^1$-fibration over $\mathbb{F}_n$. We denote the generators of the K\"ahler cone of $\FF_n$ by $j_0$ and $j_1$, chosen such that the intersection polynomial is
\begin{equation}
    \mathcal{I}(\FF_n) = n j_0^2 +j_0\cdot j_1\fstop
\end{equation}
The twist of the $\PP^1$-fibration is encoded in a line bundle $\mathcal{T}$ with 
\begin{equation}\label{ex1:twist}
   c_1 (\mathcal{T}) = s j_0  \coma s \geq 0\fstop
\end{equation}
The K\"ahler cone of $B_3 = \PP^1 \stackrel{\mathcal{T}}{\rightarrow} \FF_n$ is generated by
\begin{equation}
    J_0  = p^*j_0 \coma J_1 = p^*j_1 \coma J_3 = S_- + p^*c_1(\mathcal{T})\coma
\end{equation}
where $S_-$ is the exceptional section of the fibration $p :\, \PP^1 \rightarrow \FF_n$.\footnote{We call the third generator $J_3$ instead of $J_2$ to unify the notation with the example in Appendix \ref{app:toricBlF2}, where we will consider blow-ups of $\FF_n$.} The intersection ring for $B_3$ can be computed as
\begin{equation}
\begin{split}
   \mathcal{I}(B_3)  =&  s^2 nJ_3^3 +s nJ_0\cdot  J_3^2 +sJ_1\cdot J_3^2+nJ_0^2\cdot J_3+J_0 \cdot J_1 \cdot J_3\fstop
   \end{split}
\end{equation}
For a $\PP^1$-fibration the anticanonical class reads \cite{Friedman:1997yq}
\begin{equation}\label{eq:antiKP1fib}
    \overline{K} = 2S_- + p^*c_1(\mathcal{T}) + p^*c_1(B_2)\coma
\end{equation}
which means that for the present example $\overline{K} = 2J_3 -(s-2)J_0+(2-n)J_1$.
The cone of effective divisors is generated by 
\begin{equation}
\label{eq:EffcondivP1Fn}
   \text{Eff}^1(B_3) = \text{Cone}\left\langle D_0, D_1, D_3\right\rangle\,,
\end{equation}
where we have introduced the set of prime divisors $D_i$ that can be written in terms of the $J_i$ K\"ahler cone generators of $B_3$ as
\begin{equation}
\label{eq:EffcondivP1Fn-Ka}
   D_0=J_3 - sJ_0\coma D_1= J_0-nJ_1 \coma D_3=  J_1\,.
\end{equation}
These divisors can be obtained from the divisors in \eqref{eqn:effDivisorsBlF2} for $u=J_2=0$. Their respective volumes are
\begin{equation}
    \begin{split}
        \mathcal{V}_{D_0} & = \frac{1}{2}v^0(n v^0+2v^1)\coma \\
        \mathcal{V}_{D_1} & = v^1v^3\coma\\
        \mathcal{V}_{D_3} & = \frac{1}{2}v^3(2v^0+sv^3)\fstop
    \end{split}
\end{equation}
From \eqref{eq:EffcondivP1Fn}, we define the generators of the movable cone $\text{Mov}_1(B_3)=\overline{\text{Eff}}^1(B_3)^\vee$ as
\begin{equation}\label{ex1:Movablecone}
    C^0= J_0\cdot J_1\coma C^1= J_1\cdot J_3 \coma C^3=  J_0 \cdot J_3\coma
\end{equation}
whose volumes are
\begin{equation}
   \begin{split}
        \mathcal{V}_{C^0} & = v^3\coma \\
         \mathcal{V}_{C^1} & = v^0+ s v^3\coma\\
        \mathcal{V}_{C^3} & = n v^0+v^1+s nv^3\fstop
    \end{split}
    \end{equation}
Notice that we can write $J_0^2=nC^0$ and $J_3^2= sC^3$.

Let us now discuss the (quasi-)primitive EFT string limits and the curves in $B_3$ giving rise to these limits when wrapped by a D3-brane. 

Using Definition \ref{def:quasiprimitive}, we can consider the limits $v^a\rightarrow 0$ for some Mori cone generator volume $v^a$ and then use co-scalings in order to arrive at an EFT string limit in the sense of Definition \ref{def:EFTlimit}. For instance, for $v^0\sim \lambda \rightarrow \infty$ we can obtain a primitive EFT string limit by co-scaling $v^1\sim \lambda$ and $v^3\rightarrow \lambda^{-1}$. For this choice of scalings, $\mathcal{V}_{D_0}\rightarrow \infty$ while the volume of the other generators of $\text{Eff}^1(B_3)$ remain finite. In this case, a curve $C^0$ giving rise to this primitive EFT string limit is proportional to $J_0^2$ in accordance with Proposition \ref{prop:EFTstring}. Next, we can consider the limit $v^1\sim \lambda \rightarrow \infty$. In this case, there are two co-scalings possible. First, we can again take $v^0\sim \lambda$ and $v^3\rightarrow \lambda^{-1}$, corresponding to the same primitive EFT string limit as before. Another possible co-scaling is given by $v^0\sim \lambda^{-1}$ and yields a primitive EFT string limit in which only $\mathcal{V}_{D_1}$ becomes large. Its dual curve  $C^1=J_1\cdot J_3$ yields a $q=1$ string in agreement with Proposition~\ref{prop:EFTstring}. 

Finally, we can consider the limit $v^3\sim \lambda \rightarrow \infty$, which is sufficient in order to blow up $\mathcal{V}_{D_3}$ and realizes the last primitive EFT limit we were looking for. 
For $s\neq0$, we can summarize the curves giving rise to quasi-primitive EFT strings and the corresponding limits as follows:
\begin{equation}
\renewcommand{\arraystretch}{1.25}
\begin{tabular}{c|c|l|c}
    \text{Movable curve} & \text{$q$ factor} & \multicolumn{1}{c|}{Primitive EFT limit} & ${\cal V}_{D_i} \to \infty$\\
    \hline
   $C^0= J_0\cdot J_1$ & $q=0$ & $v^0,v^1\rightarrow \infty  \coma v^3 \rightarrow 0$ & $D_0$\\
    $C^1=J_1\cdot J_3$ & $q=1$ & $v^1 \rightarrow \infty \coma v^0 \rightarrow 0 \coma v^3 \simeq \text{const.}$ & $D_1$\\
    $C^3=J_0\cdot J_3$ & $q=2$ & $v^3 \rightarrow \infty \coma v^0\simeq \text{const.} \coma v^1 \simeq \text{const.}$ & $D_3$
\end{tabular}
\label{eq:EFTlimP1fibFn}
\end{equation}
The three EFT strings have $q=0,1,2$, respectively, and are precisely of the form required by Proposition \ref{prop:EFTstring} (recall $C^0=nJ_0^2$ and $C^3=sJ_3^2$). Thus, in this example, the number of (quasi-) primitive EFT strings agrees with the dimension of the scalar field space and all the generators of $\text{Mov}_1(B_3)$ give rise to a primitive EFT string. 

In Appendix \ref{app:examples}, we illustrate that some primitive EFT string limits may not be realizable in a given chamber of the K\"ahler cone. Put differently, 
there may be a generator of the effective cone which cannot acquire infinite volume in an
EFT string limit without any other effective cone generator becoming large. 
To realize the limit as an EFT string limit, one must instead pass to a different chamber in the K\"ahler cone by performing a flop transition. 

\section{Non-critical EFT strings and the Weak Gravity Conjecture}
\label{sec:NoncritEFTWGC}

We now build on the EFT string limits discussed in the previous section to investigate the weak coupling regimes for gauge theories on 7-branes in F-theory. Our goal is to test the asymptotic tower Weak Gravity Conjecture at weak coupling. 
Our analysis proceeds in three steps.

In the first step, we focus on weak coupling limits that can be obtained as {\it primitive} EFT string limits.\footnote{Since for each generator of $\text{Mov}_1(B_3)$ there exists a chamber of $\mathcal{K}_{\rm ext}$ in which the corresponding primitive EFT string limit can be attained, there exists a weak coupling limit of this kind for each gauge theory.} We therefore assume that a gauge theory with gauge group $G$ is realized on a divisor $\mathcal{S}$ containing a generator of $\text{Eff}^1 (B_3)$ which by itself is  homogeneously expandable (cf. Definition \ref{def:EFTlimit}) within a given chamber of $\mathcal{K}_{\rm ext}$. 
Using \cref{prop:primitiveEFT,prop:EFTstring}, we can then identify a curve $C$ on $B_3$ giving rise to the EFT string that becomes weakly coupled in the EFT string limit. By Proposition \ref{prop:EFTstring}, this ensures that $2m\equiv {\mathcal{S}}\cdot C>0$. As a result, the ${\cal N}=(0,2)$ supersymmetric worldsheet theory of the string has charged zero modes, as explained microscopically in \cite{Lawrie:2016axq}. Due to these charged zero modes, some string excitations are charged under the gauge theory, {\it provided} the --  in general non-critical -- EFT string has particle-like excitations. One might then contemplate that these charged states constitute a tower of super-extremal states required by the tower Weak Gravity Conjecture along the lines of \cite{Heidenreich:2021yda}.
For critical strings becoming light in 
Emergent String limits, this
had previously been shown to be the case  \cite{Lee:2018urn,Lee:2018spm,Lee:2019wij,Klaewer:2020lfg}.

As we will explain in Section \ref{ssec:Repulsiveforcecond}, in primitive EFT string limits corresponding to the weak coupling limit for gauge theory, the effective action resembles that of a perturbative, critical string. We can therefore evaluate the Repulsive Force Condition \cite{Palti:2017elp,Heidenreich:2019zkl} for possible string excitations explicitly. We will show that in general the excitations of the primitive EFT string cannot account for the states required by the tWGC due to a mismatch in the $\mathcal{O}(1)$ factors. The only exception are primitive EFT strings with $q=0$, 
which precisely correspond to the emergent strings in \cite{Lee:2018spm,Lee:2018urn,Lee:2019wij,Klaewer:2020lfg}.

In the second step, we analyze more general weak coupling limits realized as {\it quasi-primitive} EFT string limits in Section \ref{ssec:stringscalevsSC}. We will show that for such weak coupling limits, the tension of the relevant EFT string always lies at or above the species scale of a KK-tower becoming light in the limit, again unless $q=0$. 

This result is generalized, in step three (see Section \ref{ssec:nonEFTgaugetheory}), to 
the most general weak coupling limits for gauge theories on 7-branes which can not necessarily be obtained as quasi-primitive EFT string limits or even as EFT string limits in the first place.

Based on these observations, 
we argue, in Section \ref{ssec:4dWGC}, why it is consistent that we have not found any tower of marginally super-extremal states for the gauge theories associated to primitive EFT strings with $q> 0$ and why in general we do not even expect the tWGC to be realized by marginally super-extremal excitations of quasi-primitive EFT strings with $q>0$.

\subsection{Repulsive force condition}
\label{ssec:Repulsiveforcecond}

In this section, we will test the tower WGC for gauge theories on 7-branes in F-theory that become weakly coupled in a primitive EFT string limit in the sense of Definition \ref{def:EFTlimit}.

To this end, we will first show that for such a {primitive} EFT string limit the relevant part of the effective action has the general form
\begin{equation}
\label{eq:pertaction}
\begin{split}
     S_{4d} = &\frac{M_\text{\tiny Pl}^2}{2}\int \left(R\star \ID - \xi \frac{{\rm d}T_0\wedge \star {\rm d} \bar T_0}{(T_0+\bar T_0)^2}\right) - \frac{\kappa M_\text{\tiny Pl}^2}{8}\int \left(\re T_0 \;\text{tr}|F|^2 - i \im T_0\; \text{tr}(F\wedge F)\right)+\ldots\,. 
\end{split}
\end{equation}

Here the chiral field $T_0$ denotes the complexified volume of a generator $D_0$ of $\text{Eff}^1(B_3)$ with the property that $\re T_0\rightarrow \infty$ in the primitive EFT string limit. According to Definition \ref{def:EFTlimit}, $D_0$ is the only generator of
$\text{Eff}^1(B_3)$ which becomes large in the primitive EFT limit.
Furthermore, $F$ is the gauge field strength of any gauge theory becoming weakly coupled in the primitive EFT string limit.

To derive \eqref{eq:pertaction}, we first observe
that the gauge kinetic function for any gauge theory becoming weakly coupled in a primitive EFT string limit is asymptotically proportional to $T_0$. To see this, note that such a gauge theory  
necessarily originates from a 7-brane wrapping a divisor $\mathcal{S}$ that can be written as 
\begin{align}\label{eq:boldS}
    {\mathcal{S}} = \kappa D_0 + \ldots\,
\end{align}
for some numerical constant $\kappa$. In the primitive EFT string limit, where $\re T_0\gg 1$, the volume of $\mathcal{S}$ (and hence the inverse gauge coupling squared of the corresponding gauge theory) is, to leading order, given by $\kappa \re T_0$.

We now turn to the kinetic term for the complex scalar field $T_0$, which derives from the K\"ahler potential. 
In fact, we claim that, in the limit, the K\"ahler potential takes the form
\begin{equation}\label{Ksimplified}
    K=-\xi \log\left[T_0+\bar T_0\right] +\ldots\,, \qquad \xi = 1+q \,,
\end{equation}
from which the kinetic term in \eqref{eq:pertaction} follows.
Indeed, in our F-theory setup $K$ is given by \eqref{eq:K}, and we 
claim that in the primitive EFT string limit the volume $\mathcal{V}_{B_3}$ factorizes in such a way that $K$ acquires the form  \eqref{Ksimplified}. 
The reason for this factorization is that in the primitive EFT string limit, $\re T_0\rightarrow \infty$ while the volumes of all other divisors remain finite, see \eqref{eq:singleEFTlimit}. More precisely, for a primitive EFT string, we can show the following 
\begin{proposition}\label{prop:volumefactorization}
Given a primitive EFT string obtained in F-theory from a D3-brane wrapped on a curve $C^0$ in $B_3$ charged under an axion ${\rm Im}\,T_0$. Then in the corresponding EFT string limit, the volume of $B_3$ behaves as
\begin{equation}\label{eq:volumelimit}
    \mathcal{V}_{B_3}^2 \rightarrow ({\rm Re}\,T_0)^{1+q} \;P_{2-q}({\rm Re}\,T_{i\neq0})\,,
\end{equation}
where ${\rm Re}\,T_0$ is the saxionic partner of the axion ${\rm Im}\,T_0$, $q$ is the topological quantity associated to $C^0$ defined in \eqref{eq:q}, and $P_{2-q}$ is a polynomial of degree $2-q$ in the remaining saxions. This implies that the parameter $\xi$ appearing in \eqref{eq:pertaction} is given by 
\begin{equation}\label{eq:dqrel}
    \xi=1+q\,.
\end{equation}
\end{proposition}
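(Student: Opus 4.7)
The plan is to prove the claimed scaling by case analysis on the topological invariant $q$, using the explicit form of the EFT string curve $C$ from Proposition~\ref{prop:EFTstring} together with the duality relation $C \cdot D_j = \delta_{0j}$ intrinsic to a primitive EFT string.

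First I would fix coordinates: expand $J = v^i J_i$ in a K\"ahler cone basis in which $J_0$ is the generator singled out by Proposition~\ref{prop:EFTstring}, so that $C = \alpha J_0^2$ (case a, $J_0^2 \neq 0$ as a curve) or $C = \alpha J_0 \cdot J_1$ with $J_0^2 = 0$ (case b). Parametrize the primitive EFT limit by $v^0 \to \lambda \to \infty$, with co-scalings of the $v^{i \neq 0}$ forced by the combined conditions that $s_{i \neq 0}$ stay finite (primitivity) and that no $s_i \to 0$ (classical validity). The duality $C \cdot D_j = \delta_{0j}$ translates into $D_0 \cdot J_0^2 = 1/\alpha$ and $D_{i\neq 0} \cdot J_0^2 = 0$ in case a, and analogously $D_0 \cdot J_0 J_1 = 1/\alpha$, $D_{i\neq 0}\cdot J_0 J_1 = 0$ in case b. These intersection constraints are the main technical input.

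Next I would expand $s_0 = \tfrac12 D_0 \cdot J^2$ and $\mathcal{V}_{B_3} = \tfrac16 J^3$ in powers of $v^0$ and read off the leading $\lambda$-behavior. The point is that $q$ controls the highest power of $v^0$ that survives in $J^3$: for $q = 2$ (case a, $J_0^3 > 0$) the $(v^0)^3 J_0^3$ term dominates, giving $\mathcal{V}_{B_3} \sim \lambda^3$ together with $s_0 \sim (v^0)^2/(2\alpha) \sim \lambda^2$, hence $\mathcal{V}_{B_3}^2 \sim s_0^3$; for $q = 0$ in case a ($J_0^3 = 0$) the leading contribution is $3(v^0)^2 \sum_{i\neq 0} v^i J_0^2 J_i$, which after imposing the co-scaling that keeps $s_{i\neq 0}$ finite yields $\mathcal{V}_{B_3} \sim \lambda$ while $s_0 \sim \lambda^2$, giving $\mathcal{V}_{B_3}^2 \sim s_0$; for $q = 1$ in case b ($J_0 J_1^2 > 0$) the vanishing of $J_0^2$ forces the leading term of $J^3$ to be $3v^0(v^1)^2 J_0 J_1^2$, while $s_0 \sim v^0 v^1/\alpha$, yielding $\mathcal{V}_{B_3}^2 \sim s_0^2$; the residual sub-case $q = 0$ in case b (both $J_0^2 = 0$ and $J_0 J_1^2 = 0$) is handled along the same lines. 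In every case the leading match is $\mathcal{V}_{B_3}^2 \sim s_0^{1+q}$.

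The prefactor $P_{2-q}$ is then pinned down by dimensional counting: $(J^3)^2$ is homogeneous of degree $6$ in $v^i$ while $s_0^{1+q}$ has degree $2(1+q)$, so the residual factor is homogeneous of degree $2(2-q)$ in $v^i$, i.e., polynomial of degree $2-q$ in the remaining saxions $s_{i \neq 0}$ (each of degree $2$ in $v^i$). The duality constraints $D_{i\neq 0}\cdot J_0^2 = 0$ (or the case-b analog) block any secret $s_0$ factor from entering $P_{2-q}$. Given $\mathcal{V}_{B_3}^2 \to (\re T_0)^{1+q} P_{2-q}$, the identification $\xi = 1+q$ then follows at once from $K = -2\log\mathcal{V}_{B_3} = -(1+q)\log(T_0 + \bar T_0) - \log P_{2-q} + \text{const}$ in the limit and comparison with \eqref{Ksimplified}. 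The main obstacle is the case-by-case bookkeeping of the co-scalings in the sub-cases $q < 2$: one must check that the required co-scaling is simultaneously realizable within the chosen chamber of $\mathcal{K}_\text{ext}$ (the same issue discussed after Proposition~\ref{prop:EFTstring}), and that sub-leading terms in $J^3$ do not accidentally dominate the leading one or inject $\lambda$-dependence into $P_{2-q}$. This can be handled systematically using the vanishings $J_0^3 = 0$ or $J_0^2 = 0$ in combination with the intersection constraints stated above.
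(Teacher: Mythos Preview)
Your proposal is correct and follows essentially the same strategy as the paper: case analysis on $q \in \{0,1,2\}$ via Proposition~\ref{prop:EFTstring}, extracting the leading $\lambda$-power of $\mathcal{V}_{B_3}$ and of $s_0$ in the limit $v^0 \sim \lambda \to \infty$, and then matching degrees to identify $P_{2-q}$.

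The paper's proof differs mainly in that it carries out in full the co-scaling bookkeeping you flag as ``the main obstacle''. For $q=1$ it partitions the remaining K\"ahler cone generators into subsets $\mathcal{J}_1',\mathcal{J}_1'',\mathcal{J}_2$ according to their intersection pattern with $J_0$ and $J_1$, determines the forced scaling of each $v^a$ individually, and treats the trivial surface fibration $\Sigma \times \mathbb{P}^1$ separately; for $q=0$ it invokes the $\mathcal{K}_1/\mathcal{K}_2$ classification of \cite{Lee:2019tst,Klaewer:2020lfg} and points out a subtlety in reading off the factorization (the naive identification of both leading factors with $\mathcal{V}_{D_0}$ fails because the residual $(\kappa_{00\mu}v^\mu)^2 \to 0$ cannot be a polynomial in the finite $s_{i\neq 0}$). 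Your dimensional-counting shortcut for $P_{2-q}$ is cleaner than what the paper writes, but the claim that ``the duality constraints block any secret $s_0$ factor'' is exactly the place where the paper's explicit case work is needed --- that step is not automatic for $q=0$.
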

The proof of Proposition \ref{prop:volumefactorization} is technical and is provided in Appendix \ref{app:proofs}. With this key result, the effective action \eqref{eq:pertaction} indeed follows.

Notice that \eqref{eq:pertaction} resembles the effective action of the weakly coupled heterotic string in 4d once we identify $T_0$ with the heterotic axio-dilaton and $F$ with the field strength of the perturbative heterotic gauge group. In particular, this form of the action suggests that the axion decay constant (and hence the 2-form gauge coupling for the 2-form under which the string is charged) and the 1-form gauge couplings are all determined by a single saxion. This saxion takes over the r\^ole of the dilaton for the primitive EFT string. 
 In the special case $q=0$, i.e. $\xi =1$, the EFT string {\it is} the critical heterotic string, and the above observations are part of the content of the Emergent String Conjecture \cite{Lee:2019wij}.  The excitations of the emergent heterotic string indeed contain a marginally super-extremal tower, satisfying the tower WGC \cite{Lee:2018spm,Lee:2018urn,Lee:2019tst,Klaewer:2020lfg}.

Given this state of affairs, it is natural
to speculate that the WGC in primitive weak coupling limits for general values of $q$ is realized similarly.
To show this, one would first have to {\it postulate} that the primitive EFT string with $q>0$ can be treated as having a spectrum of excitations similar to a critical string. Of course, this is a strong assumption, but for the primitive EFT strings considered in this section one might motivate this by the observed
similarity between
 the effective action \eqref{eq:pertaction} and the action of a weakly coupled heterotic string, apart from the numerical factor $\xi$.

From an EFT point of view, the relation between weak coupling limits of effective 4d strings and the tWGC for the gauge symmetry has been analyzed in detail in \cite{Heidenreich:2021yda}. 
As a consequence of anomaly inflow on the string, the EFT string charged under $\theta = \im T$ needs to have excitations charged under the gauge group. It was then shown in \cite{Heidenreich:2021yda} that as a consequence of the WGC for the axion to which the string couples, these charged excitations of the string must satisfy the relation
\begin{equation}
    M_{k} \lesssim g_\text{\tiny YM} M_\text{\tiny Pl}\,,
\end{equation}
which is to be interpreted as a parametric relation up to $\mathcal{O}(1)$ coefficients. Here, $M_{k}$ is the mass of a charged string excitation at level $k$ (assuming it exists), $g_\text{\tiny YM}$ is the gauge coupling and $M_\text{\tiny Pl}$ the $4$d Planck mass. Notice that this parametric scaling is a necessary condition for the WGC for the gauge theory to be fulfilled by the tower of charged string excitations. However, to decide whether the WGC is indeed satisfied by these states, the $\mathcal{O}(1)$ coefficients must be determined.

We now turn to this question, assuming for simplicity a gauge group of the form $G= \U(1)$.\footnote{In this case, the gauge divisor $\mathcal{S}$ is to be identified with the height pairing of a rational section of the elliptic fibration (see e.g. \cite{Weigand:2018rez,Cvetic:2018bni} for reviews), but this technicality plays 
no essential r\^ole in the forthcoming discussion.}
To be precise, the tWGC is fulfilled if the states in the tower of charged excitations satisfy the relation 
\begin{equation}\label{eq:WGC}
    \frac{g_\text{\tiny YM}^2 q_k^2}{M_{k}^2} \geq \frac{1}{M_\text{\tiny Pl}^2} \left[\left.\frac{d-3}{d-2}\right|_{d=4} +\frac{1}{4} \frac{M_\text{\tiny Pl}^4}{M_k^4} g^{rs} \partial_r \left(\frac{M_k^2}{M_\text{\tiny Pl}^2} \right)\partial_s \left(\frac{M_k^2}{M_\text{\tiny Pl}^2} \right) \right]\,.
\end{equation}
Here, $q_k$ is the charge of the $k$-th excitation, and we have included the effect of scalar fields $\phi^r$ in the second term with $g^{rs}$ the inverse metric on the scalar field space. This relation arises from the Repulsive Force Conjecture \cite{Palti:2017elp,Heidenreich:2019zkl} that requires that the super-extremal state does not form bound states with itself, since its Coulomb repulsion is stronger than the attractive gravitational and Yukawa forces. In its form \eqref{eq:WGC}, it is assumed that we have a perturbative description for all three forces involved. In particular, the first term on the RHS arises from a Newton-like gravitational potential and therefore requires the scale at which we evaluate \eqref{eq:WGC} to correspond to weakly coupled gravity. It can be shown that in weak coupling limits, the equality in the condition \eqref{eq:WGC} corresponds to the extremality bound of a dilatonic Reissner-Nordstrom black-hole (cf. \cite{Lee:2018spm} for the corresponding discussion in 6d). 

To test this relation for the putative excitations of a primitive EFT string, we assume, as before, that the primitive EFT string is obtained from a D3-brane wrapping a curve $C^0$ in $B_3$. 
The gauge divisor $\mathcal{S}$ is characterized by 
 $m\equiv \frac{1}{2} \,C^0\cdot {\mathcal{S}}>0$ (cf. Proposition \ref{prop:EFTstring}). The gauge coupling $g_\text{\tiny YM}^2$ for the gauge group is given by 
\begin{align}
    \frac{2\pi}{g_\text{\tiny YM}^2} =  \mathcal{V}_{\bf S}\,,
    \label{eq:gengYMV}
\end{align}
which in the primitive EFT string limit reduces to
\begin{align}
    \mathcal{V}_{\bf S} = \kappa \,\re T_0 + \ldots\,. 
    \label{eq:VST0}
\end{align}
From the modular properties of the elliptic genus of the effective string, one can infer the existence of states at mass level $n_k$ and charge \cite{Lee:2019tst,Lee:2020gvu,Lee:2020blx,Klaewer:2020lfg}
\begin{align}\label{qk}
    q_k^2 = 4m n_k\,.  
\end{align}
As explained in \cite{Lee:2019tst,Lee:2020gvu,Lee:2020blx,Klaewer:2020lfg}, these are the candidates for the super-extremal states, and we henceforth focus on them.
As stressed before, via the action \eqref{eq:pertaction}, the EFT string resembles a perturbative string in the corresponding weak coupling limit. One may hence {\it assume} that the quantization of the string proceeds similarly to the quantization of critical strings, such that the mass of an excitation at level $k$ is given by 
\begin{align}\label{eq:Mk}
    M_k^2 = 8 \pi T_\text{\tiny EFT} (n_k-a)\,,
\end{align}
where $a$ is the vacuum energy of the string that can be calculated as $a=\frac12 \bar K\cdot C^0$ \cite{Lee:2019tst}. Thus, we can rewrite the LHS of \eqref{eq:WGC}, in the limit of large $n_k$, for the candidate super-extremal states with the property \eqref{qk} as
\begin{align}
    \frac{q_k^2 g_\text{\tiny YM}^2}{M_k^2} = \frac{m }{T_\text{\tiny EFT} \, \kappa \, \re T_0\, }\,. 
\end{align}
On the other hand, the tension of the EFT string in the EFT string limit is given by the linear multiplet 
\begin{align}\label{eq:TL0}
    \frac{T_\text{\tiny EFT}}{M_\text{\tiny Pl}^2}=e_0 L^0 = - \frac{e_0}{2}\frac{\partial K}{\partial \re T_0}=\frac{e_0}{2}\frac{\left(1+q\right)}{\re T_0}\,,
\end{align}
where we used Proposition \ref{prop:volumefactorization} to infer the dependence of $K$ on $\re T_0$ in the primitive EFT string limit. Here, $e_0$ is the charge of the string under the 2-form dual to $\im T_0$. Since the curve with $e_0=1$ corresponds to the string associated with the generator of the movable cone dual to $D_0$, we find 
\begin{align}
    2m = \kappa e_0\,,
\end{align}
leading to 
\begin{align}
   \frac{q_k^2 g_\text{\tiny YM}^2}{M_k^2/M_\text{\tiny Pl}^2} = \frac{1}{1+q}  \,.
\end{align}

Let us now turn to the RHS of \eqref{eq:WGC}. Whereas the first term always yields a contribution of $\frac{1}{2}$ in $4$d, the second term is determined by how the mass scale, and hence the EFT string tension, depends on the scalar fields in the theory. By \eqref{eq:Mk} and \eqref{eq:TL0} the mass of the excitations only depends on a single scalar field $L^0$. Hence 
\begin{equation}
    g^{rs}\partial_s \left(\frac{M_k^2}{M_\text{\tiny Pl}^2}\right)\partial_r \left(\frac{M_k^2}{M_\text{\tiny Pl}^2}\right)=\left(8\pi e_0 (n_k-a)\right)^2 g^{00}\,.
\end{equation}
We are thus left with evaluating the metric component $g^{00}$:
\begin{align}
    g^{00} = \frac{1}{2}\frac{\partial^2 K}{\partial (\re T_0)^2} = -\frac{\partial L^0}{\partial (\re T_0)}= \frac{1}{2}\frac{1+q}{(\re T_0)^2}\,. 
\end{align}
Putting things together, we obtain 
\begin{align}
    \frac{1}{4} \frac{M_\text{\tiny Pl}^4}{M_k^4} g^{rs} \partial_r \left(\frac{M_k^2}{M_\text{\tiny Pl}^2} \right)\partial_s \left(\frac{M_k^2}{M_\text{\tiny Pl}^2}\right)=\frac14 \left(\frac{2\re T_0}{(1+q)}\right)^2\frac{(1+q)}{2(\re T_0)^2}=\frac{1}{2(1+q)}\,. 
\end{align} 
For a string with given $q$, the repulsive force condition then requires 
\begin{align}\label{eq:WGCfinal}
    \frac{g_\text{\tiny YM}^2 q_k^2}{M_{k}^2/M_\text{\tiny Pl}^2} = \frac{1}{1+q} \stackrel{!}{\geq} \frac{1+\frac{q}{2}}{1+q}= \left[\left.\frac{d-3}{d-2}\right|_{d=4} +\frac{1}{4} \frac{M_\text{\tiny Pl}^4}{M_k^4} g^{rs} \partial_r \left(\frac{M_k^2}{M_\text{\tiny Pl}^2} \right)\partial_s \left(\frac{M_k^2}{M_\text{\tiny Pl}^2} \right)\right]\,. 
\end{align}
We notice that the inequality in the above expression is only satisfied for $-1<q\leq 0$. Thus, since for primitive EFT strings $q$ is restricted to $0$, $1$ or $2$, we expect such strings to lead to a tower of states satisfying the repulsive force condition \eqref{eq:WGC} only for $q=0$. The result for $q=0$, of course, matches with the computation in \cite{Klaewer:2020lfg} for the (heterotic) emergent string limit: In this case, one finds a marginally super-extremal tower of states, in the sense that at higher and higher excitation level the super-extremal states asymptotically become extremal.

Let us stress that still, as noticed in \cite{Heidenreich:2021yda}, the repulsive force condition is parametrically satisfied in that any dependence on $\re T_0$ drops out of the relation \eqref{eq:WGC}. This follows from the dependence of $K$ on $\re T_0$ as given by Proposition \ref{prop:volumefactorization} and the relation between the string tension and $\re T_0$, which in turn is a consequence of the BPS properties of the string. Thus, indeed, the BPS property of the string ensures that \eqref{eq:WGC} is parametrically satisfied. However, the $\mathcal{O}(1)$ coefficients do not match. These $\mathcal{O}(1)$ coefficients are sensitive to the actual particle excitations of the string via \eqref{qk} and \eqref{eq:Mk}. 

One might contemplate whether the relation \eqref{eq:Mk} receives corrections for non-critical strings with $q\geq 1$ which change the quantization condition to
\begin{align} \label{rescaledmasses}
    M_k^2 = 8\pi\,\mathfrak{n}(q)\,  T_\text{\tiny EFT} (n_k-a)\,,
\end{align}
for some $q$-dependent factor $\mathfrak{n}$. Following the same steps as before, one finds that $\mathfrak{n}$ would have to be given by 
\begin{align}
    \mathfrak{n}(q)= \frac{2}{2+q}\,
\end{align}
in order to obtain a marginally asymptotic tower, as for the emergent string with $q=0$. 
However, we actually do not expect that the tWGC is realized in this manner:
First, the simple rescaling \eqref{rescaledmasses} is not only completely ad hoc, but more importantly it does no longer work for more general limits beyond the primitive strings considered in this section.

Furthermore,  
as we argue in the next section, we do not expect the (quasi-)primitive EFT strings with $q\geq 1$ to have particle-like excitations in 4d in the first place. In fact, we will conclude from this, in Section \ref{ssec:4dWGC}, that the limits with $q\geq 1$ do not even motivate an asymptotic WGC in the usual sense.

\subsection{String scale vs. species scale for (quasi-)primitive EFT string limits}
\label{ssec:stringscalevsSC}

So far, we have treated the primitive EFT strings on similar grounds to a critical string in four dimensions and assumed that it has particle-like excitations. However, for this to be the case, we need to ensure that in the EFT string limit we can still consider the D3-brane string as an effective string in four dimensions. In this section, we will assess the validity of this assumption section by analyzing the relation between the KK-scale of the F-theory compactification and the tension of the primitive EFT string.
The consequences of this analysis for the WGC will then be discussed in Section \ref{ssec:4dWGC}.

We can in fact widen the scope of our analysis and focus on quasi-primitive EFT string limits in the sense of Definition \ref{def:quasiprimitive}, which include the primitive string limits studied in the previous section as special cases.\footnote{In the next section, the restriction to EFT string limits will be dropped altogether.}
As in \cite{Lanza:2021udy}, the relation between the KK-scale and the EFT string tensions is governed by the so-called scaling weight $w$ defined as 
\begin{equation}\label{eq:scalingweight}
    m_*^2 \sim  A\, M_\text{\tiny Pl}^2\left(\frac{T_\text{\tiny EFT}}{M_\text{\tiny Pl}^2}\right)^w\,,
\end{equation}
where $m_*$ is the cut-off scale of the EFT, i.e., the mass scale of the lightest tower, and $A$ is some constant depending on the free parameters of the string flow. In our discussion of EFT string limits in the K\"ahler field space of F-theory, the lightest scale is always (at least parametrically) the KK-scale. For our three cases $q=0,1,2$, we can calculate the respective value of $w$ and find
\begin{equation}\label{qwidentification}
    q=(0,1,2) \quad \longleftrightarrow \quad w=(1,2,2)\,. 
\end{equation}
Thus, unless $q=0$, the tension of the string is always above the KK-scale. These limits therefore correspond to decompactification limits, in agreement with the Emergent String Conjecture. However, one might still be tempted to view the strings as effectively four-dimensional objects as long as their tension remains below the species scale $\Lambda_\text{\tiny sp,KK}$ associated to the tower of KK-modes. 

We therefore should evaluate the species scale associated to the KK-tower with mass scale $M_\text{\tiny KK}$. When decompactifying $n$ dimensions, the number $N$ of KK states with mass $m^2\leq k^2 M_\text{\tiny KK}^2$ for some $k\in \mathbb{N}$ grows like 
\begin{equation}\label{Ngrowth}
    N\sim k^{n}\,. 
\end{equation}
On the other hand, the species scale in four dimensions is defined as \cite{Dvali:2007hz}
\begin{equation}
    \Lambda_\text{\tiny sp}^2 = \frac{M_\text{\tiny Pl}^2}{N_\text{\tiny sp}}\,,
\end{equation}
where $N_\text{\tiny sp}$ is the number of species with masses $\Lambda_\text{\tiny sp}$. Using \eqref{Ngrowth} we find 
\begin{equation}\label{eq:Lambdasp}
    \Lambda_\text{\tiny sp,KK}^2 = k_{\rm max}^2 M_\text{\tiny KK}^2 \stackrel{!}{=} \frac{M_\text{\tiny Pl}^2}{k_{\rm max}^n}\,\qquad \Rightarrow \qquad \frac{\Lambda_\text{\tiny sp,KK}^2}{M_\text{\tiny Pl}^2} = \left(\frac{M_\text{\tiny KK}^2}{M_\text{\tiny Pl}^2}\right)^{\frac{n}{2+n}}\,. 
\end{equation}
Here $k_{\rm max}$ is the maximal excitation level of the states with masses below the species scale. As we show in the sequel, in terms of the type IIB string scale $M_\text{\tiny IIB}$ the scale $\Lambda_\text{\tiny sp,KK}$ for the different values of $q$ is given as follows:
\begin{center}
\renewcommand{\arraystretch}{1.5}
    \begin{tabular}{c|c|c|c}
    $q$ & $0$& $1$ &$2$ \\ \hhline{=|=|=|=} 
    $\dfrac{\Lambda_\text{\tiny sp,KK}^2}{M_\text{\tiny Pl}^2}$ & $\sim \left(\dfrac{M_\text{\tiny IIB}^2}{M_\text{\tiny Pl}^2}\right)^{4/3}$ &$\sim \dfrac{M_\text{\tiny IIB}^2}{M_\text{\tiny Pl}^2}$&$\sim \dfrac{M_\text{\tiny IIB}^2}{M_\text{\tiny Pl}^2}$ \\
    \end{tabular}
\end{center}

\vspace{2mm}

Thus, for limits of type $q=1,2$ the KK species scale coincides with the Type IIB string scale $M_\text{\tiny IIB}$, which turns out to be the higher dimensional Planck mass for the decompactification limits under consideration. However, in order for $\Lambda_\text{\tiny sp,KK}$ to correspond to the actual species scale, we need to ensure that the tension of the EFT string giving rise to the asymptotic limit does not lie between $M_\text{\tiny KK}$ and $\Lambda_\text{\tiny sp,KK}$. If this was the case, the actual species scale would be set by $T_\text{\tiny EFT}$ \cite{Dvali:2009ks,Dvali:2010vm} and we could still consider the EFT string limit as effectively four-dimensional.\footnote{Here we assume for the time being, as before, that it makes sense to speak of a tower of EFT string excitations in four dimensions. The following considerations serve as a consistency check of this assumption, and in fact will show that for $q >0$ this assumption is not justified.} As we show below, the relation between $M_\text{\tiny IIB}$ and $T_\text{\tiny EFT}$ can be conveniently written as
\begin{equation}\label{IIBTensionrel}
    M_\text{\tiny IIB}^2 \sim M_\text{\tiny Pl}^2 \left(\frac{T_\text{\tiny EFT}}{M_\text{\tiny Pl}^2}\right)^{\frac{q+1}{2}}\,.
\end{equation}
Therefore for $q>0$ the EFT-string tension sits at or above $\Lambda_\text{\tiny sp,KK}$ and hence the EFT string should effectively be viewed as an object in a higher-dimensional theory. In particular, in this case $\Lambda_\text{\tiny sp,KK}$ is the \emph{actual} species scale since any other tower of states sits at or above $\Lambda_\text{\tiny sp,KK}$ and should therefore not be taken into account when calculating the species scale. On the other hand, for $q=0$ we have $T_\text{\tiny EFT}\precsim \Lambda^2_\text{\tiny sp,KK}$ such that here the effective asymptotes to a bona fide four-dimensional theory in the EFT string limit. Notice that for $q=0$, the emergent critical string, the KK-species scale does not coincide with $M_\text{\tiny IIB}$ as would be expected for a decompactification limit. This discrepancy in fact already signals that the $q=0$ limit cannot be a decompactification and that consistency requires the presence of an additional tower of states. Thus, for $q=0$ one could have inferred the existence of the emergent string merely based on the scaling of the species scale associated to the KK-tower.

We will now show how to obtain the relation \eqref{IIBTensionrel} between $\Lambda_\text{\tiny sp,KK}$, $M_\text{\tiny IIB}$ and $T_\text{\tiny EFT}$ for the different values of $q$:
\begin{enumerate}[align=left,wide, labelindent=0pt]
\item[$\bf{q = 0}$:]

Recall that in a $q=0$ EFT string limit,
the F-theory base $B_3$ admits a rational or genus-one fibration (see Figure \ref{fig:q=0}) and in the limit the volume of the base of this fibration is scaled up homogeneously as
$\lambda^2 \to \infty$, while the volume of the fiber $C$ shrinks as $\lambda^{-1}$ (each in Type IIB string units). 
Here $v^0\sim \lambda \rightarrow \infty$ is the volume of a curve ${\cal C}^0$ in the base of the fibration which becomes large.
The $q=0$ string arises from a D3-brane wrapped around the shrinking fiber. Altogether, the volume of $B_3$ scales up as $\mathcal{V}_{B_3}\sim \lambda$ (cf. \cite{Klaewer:2020lfg} and the proof of Proposition \ref{prop:q0} in Appendix \ref{app:proofs}).
It follows that the KK scale is 
given by 
\begin{align}\label{KKscaleq0}
     \frac{M_\text{\tiny KK}^2}{M_\text{\tiny Pl}^2} \sim \frac{1}{\lambda^2}\,   \quad \text{for}\, \,   q=0 \,,
\end{align}
where we used the standard relation between type IIB scale and $M_\text{\tiny Pl}$, 
\begin{equation}
    \frac{M_\text{\tiny Pl}^2}{M_\text{\tiny IIB}^2} = 4\pi \mathcal{V}_{B_3}\,.
\end{equation}
Since the base of the fibration 
expands while the fiber shrinks,
 the $\lambda\rightarrow \infty$ limit {\it na\"{\i}vely} resembles a decompactification to 8d, i.e., if we take into account the KK modes from the expanding base in the computation of the KK species scales as in \eqref{eq:Lambdasp}, we must set $n=4$. The species scale associated to the KK tower is then given by 
\begin{align}
    \frac{\Lambda_\text{\tiny sp,KK}^2}{M_\text{\tiny Pl}^2} \sim \left(\frac{1}{\lambda^2}\right)^{2/3} \sim \frac{1}{\lambda^{4/3}} \precsim \frac{1}{\lambda}\sim \frac{M_\text{\tiny IIB}^2}{M_\text{\tiny Pl}^2} \quad \text{for}\, \,   q=0 \,.
\end{align}

At the same time, if the theory actually decompactified to an effective theory in eight dimensions, $\Lambda_\text{\tiny sp,KK}$ would have to coincide with the higher-dimensional Planck scale, which would, up to order one factors, be set by the ten-dimensional string scale
${M_\text{\tiny Pl}}$.\footnote{In the putative eight-dimensional theory, no residual scaling limit is taken; the claim then follows from the usual relation between the Planck scale in ten dimensions and the Planck scale after decompactification.}
The parametric discrepancy between these two scales thus indicates that the $q=0$ limit cannot correspond to a bona fide decompactification limit, but requires the excitations of the $q=0$ EFT string for consistency. This is of course in precise agreement with the Emergent String Conjecture, according to which this type of limit is an effectively four-dimensional weak coupling limit, with the r\^ole of the new fundamental string played by the $q=0$ EFT string.

By Proposition \ref{prop:q0}, all quasi-primitive EFT string limits with $q=0$ are in fact primitive, and we can therefore use Proposition \ref{prop:volumefactorization} to find 
    \begin{equation}
   \mathcal{V}_{B_3}\sim (\re T_0)^{\frac{1}{2}} \quad \text{for}\, \,   q=0 \,.
    \end{equation}
    On the other hand, since the tension of the $q=0$ EFT string is given by 
    \begin{equation}
        \frac{T_\text{\tiny EFT}}{M_\text{\tiny Pl}^2}\sim (\re T_0)^{-1}\,\quad \text{for}\, \,   q=0 \,,
    \end{equation}
    we arrive at
    \begin{equation}\label{IIBTensionrel-q0}
        M_\text{\tiny IIB}^2 \sim M_\text{\tiny Pl}^2 \left(\frac{T_\text{\tiny EFT}}{M_\text{\tiny Pl}^2}\right)^{\frac{1}{2}}\quad \text{for}\, \,   q=0 \,.
    \end{equation}
    Therefore, as expected, the EFT string scale is below the species scale for the KK tower, as shown also in Figure \ref{fig:q=0sp}.

\item[$\bf{q = 1}$:] For $q=1$, we know, from the discussion at the end of Section \ref{ssec:Weakcouplinglimits}, that for such a curve to exist, $B_3$ needs to admit a surface fibration (cf. Figure \ref{fig:q=1}). In this case, the quasi-primitive EFT string limit for the D3-brane on the curve $C$ with $q=1$ is given by the limit where the base of this fibration blows up. Let us denote the volume of the base $\mathbb P^1$ by $v^0\sim \lambda \rightarrow \infty$. From the proof of Proposition \ref{prop:EFTstring} we know that the total volume scales as
\begin{equation} \label{volq1}
    \mathcal{V}_{B_3}\sim \lambda \quad \text{for}\, \,   q=1 \,,
    \end{equation}
because the surface fiber does not scale in the limit.
All this points to a decompactification limit to six dimensions, corresponding to the value $n=2$ in \eqref{eq:Lambdasp}. The KK scale is given by  
 \begin{align}\label{KKscaleq1}
     \frac{M_\text{\tiny KK}^2}{M_\text{\tiny Pl}^2} \sim \frac{1}{\lambda^2}\quad \text{for}\, \,   q=1 \,,
 \end{align}
leading to the species scale
\begin{align} \label{LspKKq1}
    \frac{\Lambda_\text{\tiny sp,KK}^2}{M_\text{\tiny Pl}^2} \sim \left(\frac{1}{\lambda^2}\right)^{1/2} \sim \frac{M_\text{\tiny IIB}^2}{M_\text{\tiny Pl}^2} \quad \text{for}\, \,   q=1 \,.
\end{align}
 The result \eqref{LspKKq1} confirms that indeed the $q=1$ limit corresponds to a decompactification limit to 6d, and we arrive at a six-dimensional theory for which no particular scaling limit is taken. If we do not take any scaling limit, the 6d Planck scale is indeed just set by $M_\text{\tiny IIB}$ (cf. \cite{Lee:2019tst} for a similar discussion of decompactification limits). 

To find the relation between $M_\text{\tiny IIB}$ and $T_\text{\tiny EFT}$ we proceed as follows: Let $\mathcal{C}^0$ be the curve associated to the $q=1$ quasi-primitive EFT string limit, such that $v^0\sim \lambda\rightarrow \infty$ in the quasi-primitive EFT string limit. Then from Proposition \ref{prop:EFTstring}, we have \eqref{volq1}.
    On the other hand, the same proposition tells us that the EFT string curve is given by $J_0\cdot J_1$ for some K\"ahler cone generator $J_1$. Since the volume of all divisors intersecting $J_0\cdot J_1$  blow up in the EFT string, but no other divisors, the limit cannot involve any co-scaling $v^1 \to \infty$.\footnote{To see this, we notice that there needs to be a divisor for which the volume contains a term $(v^1)^2$ since $J_1^2=0$. If this term would only be contained in the divisors intersecting $J_0\cdot J_1$, the EFT string limit could equally be reached just by sending $v^1\rightarrow \infty$ with EFT string given by a D3-brane on $\alpha J_1^2$. This would, however, correspond to a $q=2$ string.}  On the other hand, since $J_0\cdot J_1^2\neq 0$ we know that 
    \begin{align}
        \mathcal{V}_{J_0\cdot J_1} \sim v^1 + \ldots\,, 
    \end{align}
    where the dots stand for possibly sub-leading terms. Therefore, we find 
    \begin{align}
        M_\text{\tiny IIB}^2 \sim M_\text{\tiny Pl}^2 \left(\frac{T_\text{\tiny EFT}}{M_\text{\tiny Pl}^2}\right)\,\quad \text{for}\, \,   q=1 \,,
    \end{align}
    such that $T_\text{\tiny EFT}$ is of the order of $M_\text{\tiny IIB}$. The scalings are shown in Figure \ref{fig:q=1sp}.

\item[$\bf{q = 2}$:]  By Proposition \ref{prop:EFTstring}, a quasi-primitive EFT string limit with $q=2$ corresponds to the limit $v^0\sim \lambda \rightarrow \infty$ for a K\"ahler cone generator $J_0$ with $J_0^3\neq0$. Hence
the base $B_3$ blows-up homogeneously
and
\begin{align}
         \mathcal{V}_{B_3}\sim \lambda^3\,\quad \text{for}\, \,   q=2 \,.
    \end{align}
We thus encounter a decompactification to 10d, corresponding to a value of $n=6$ in \eqref{eq:Lambdasp}. The KK-scale is given by 
\begin{align}
     \frac{M_\text{\tiny KK}^2}{M_\text{\tiny Pl}^2} \sim \frac{1}{\lambda^4}\,\quad \text{for}\, \,   q=2 \,,
\end{align}
such that via \eqref{eq:Lambdasp} the species scale follows as
\begin{align}
     \frac{\Lambda_\text{\tiny sp,KK}^2}{M_\text{\tiny Pl}^2} \sim \left(\frac{1}{\lambda^4}\right)^{3/4} \sim \frac{M_\text{\tiny IIB}^2}{M_\text{\tiny Pl}^2}\quad \text{for}\, \,   q=2 \,.
\end{align}
Thus, the species scale coincides with $M_\text{\tiny IIB}$, the ten-dimensional Planck scale (recall that the axio-dilaton is not scaled).
On the other hand, to find the relation between $M_\text{\tiny IIB}$ and $T_\text{\tiny EFT}$ we can use
that, by Proposition \ref{prop:EFTstring}, the volume of the curve giving rise to the quasi-primitive EFT string scales like 
    \begin{align}
        \mathcal{V}_{\alpha J_0^2} \sim v^0 +\ldots \sim \lambda\,. 
    \end{align}
    Altogether, this leads to
    \begin{align}
         M_\text{\tiny IIB}^2 \sim M_\text{\tiny Pl}^2 \left(\frac{T_\text{\tiny EFT}}{M_\text{\tiny Pl}^2}\right)^{3/2}\,\quad \text{for}\, \,   q=2 \,,
    \end{align}
    such that $T_\text{\tiny EFT}$ is parametrically above $M_\text{\tiny IIB}$. The scalings are schematically shown in Figure \ref{fig:q=2sp}.
\end{enumerate}

\subsubsection*{Consequences for the nature of the EFT strings and the weak coupling limits}

For $q\geq 1$, the scale set by the tension of the EFT string is bounded from below by the higher dimensional Planck scale. Therefore, these strings do not have excitations that can be treated as particle-like excitations in a weakly-coupled theory of gravity. The two cases $q=1$ and $q=2$ are nonetheless considerably different: In the quasi-primitive EFT string limit for a $q=1$ string, the theory decompactifies to 6d since the volume of the  $\mathbb{P}^1$ base of a surface fibration $\Sigma\rightarrow \mathbb{P}^1$ is scaled up. On the other hand, the divisor $\mathcal{S}$ on which the weakly coupled gauge theory is realized must contain this $\mathbb{P}^1$. Therefore, the EFT string limit effectively gives rise to a six-dimensional gauge theory coupled to gravity. However, this six-dimensional gauge theory is, generically, not weakly coupled since after decompactification no additional limit is taken in the six-dimensional field space, at least not in the most general type of limits.\footnote{At the end of the next section, we will discuss situations in which an additional limit leads to a weak coupling regime in six dimensions.} On the other hand, the quasi-primitive EFT string is obtained by wrapping a D3-brane on a movable curve $C^0\subset \Sigma$. Hence, the resulting 6d theory still contains an effective string which, just as the gauge theory, is not weakly coupled. In particular, the tension of this string can never drop below the six-dimensional Planck scale since $C^0$ satisfies 
\begin{align} \label{C06d}
    C^0\cdot_{\Sigma} C^0>0\,. 
\end{align}
In the resulting six-dimensional theory, this string is thus a supergravity string in the language of \cite{Kim:2019vuc} (see also \cite{Long:2021jlv}) whose tension is bounded by the six-dimensional Planck scale. Notice that this is consistent with the estimate \eqref{IIBTensionrel} that the tension of the $q=1$ EFT string is always of the order of $M_\text{\tiny IIB}$.

The situation for the $q=2$ quasi-primitive EFT string limit is different: Here the theory decompactifies all the way to ten dimensions. The weakly coupled gauge theory now flows to an eight-dimensional defect theory within the full ten-dimensional gravitational bulk theory, such that effectively the gauge and the gravity sector completely decouple. In contrast to the $q=1$ EFT string, which remains an effective string in six dimensions, the $q=2$ string ceases to be an effective string in the full ten-dimensional theory since the limit resolves the internal directions of the D3-brane. This is consistent with the relation \eqref{IIBTensionrel} telling us that for $q=2$ the tension of the EFT string is always parametrically above $M_\text{\tiny IIB}$ in the weak coupling limit. Whereas the $q=1$ string can be thought of as a version of a six-dimensional supergravity string, the $q=2$ string does not have a higher-dimensional analogue. As an effective string, it thus only exists in four dimensions.  

The nature of the $q=0$ strings is notably different: Here the limit results in a genuinely four-dimensional gauge theory weakly coupled to gravity, which also satisfies the tWGC through the excitations of the $q=0$ EFT string (see also \cite{Lee:2019tst,Klaewer:2020lfg}). Notice that the $q=0$ string also has a higher-dimensional analogue since it already exists in eight dimensions, i.e., in F-theory compactified on an elliptic K3 surface. Moreover, in 8d the $q=0$ string could be viewed as a genuine supergravity string since its tension cannot drop below the 8d Planck scale (both, the tension of the D3-brane on the base of the K3 and the 8d Planck scale are proportional to the volume of the base of the K3). We can summarize our findings as follows: 
\begin{center}
    \begin{tabular}{c|c|c}
    & $\text{dim}_\text{eff}(\text{gauge theory)}$ & Supergravity string  \\
    \multirow{-2}{*}{4d effective string} &for $g_\text{\tiny YM}\rightarrow 0$ & exists in:\\
    \hhline{=|=|=}
    $q=0$ & $d=4$ & $d\leq 8$\\ \hline 
    $q=1$& $d=6$ & $d\leq 6$ \\ \hline 
    $q=2$& $d=8$ & $d=4$ \\ 
    \end{tabular}
\end{center}
\vspace{2mm}

Let us stress that while the $q=0$ string already exists in eight dimensions, its EFT limit does not lead to a decompactification for the associated gauge theories; by contrast the $q=2$ string only exists as a string in four dimensions, but its EFT string limit corresponds to a decompactification to eight dimensions for the gauge sector within a ten-dimensional gravitational bulk theory.

\subsubsection*{Comparison to supergravity strings in 5d}

It is instructive to compare our classification of the four-dimensional supergravity strings in terms of the quantity $q$ to the five-dimensional supergravity strings discussed in \cite{Katz:2020ewz}. Therefore, consider the Coulomb branch of a 5d supergravity theory with effective action 
\begin{align}
    S_{5d} = \frac{M_\text{\tiny Pl}^3}{2} \int \left( R\star \ID  - G_{IJ} d\phi^I \wedge \star d\phi^J - G_{IJ} F^I\wedge \star F^J -\frac16 C_{IJK} A^I \wedge F^J \wedge F^K\right)\,.
\end{align}
Here $F^I$ are the field strengths of the $\U(1)$ gauge fields $A^I$ in the 5d vector multiplets, $\phi^I$ the corresponding scalars, $G_{IJ}$ the metric on the scalar fields space and $C_{IJK}$ the integer coefficient of the cubic Chern-Simons term. The supergravity strings now arise as monopole strings for the gauge fields $A^I$ carrying charge 
\begin{align}
    p^I = \frac{1}{2\pi} \int_{S^2} A^I\,,
\end{align}
where $S^2$ is a sphere encircling the string. According to \cite{Katz:2020ewz}, a string is a supergravity string if all supersymmetrically compatible BPS-particles in the theory carry non-negative electric charge under the Abelian gauge field $A^I$. In \cite{Katz:2020ewz}, the worldsheet theory of these monopole strings is investigated and the anomaly inflow on the string due to space-time gauge theories is used to constrain the possible gauge theories in 5d supergravity theories. In particular, \cite{Katz:2020ewz} identifies a class of supergravity strings for which they conjecture that the worldsheet theory flows to a $\mathcal{N}=(0,4)$ SCFT with $\SU(2)$ R-symmetry. This class of supergravity strings is characterized by the condition $C_{IJK}p^Ip^Jp^K>0$. On the other hand, supergravity strings that arise from higher dimensional supergravity strings (e.g., 6d) or strings for which the worldsheet supersymmetry gets enhanced, can have $C_{IJK}p^Ip^Jp^K=0$. In view of our classification of supergravity string in 4d in terms of the parameter $q$ we are thus led to identify the $q=2$ strings as the 4d analogue of the $C_{IJK}p^Ip^Jp^K>0$ strings as these strings do not arise from a higher dimensional supergravity string. On the other hand, the $q=0,1$ strings should be viewed as being the analogues of the strings in the $C_{IJK}p^Ip^Jp^K=0$ class. In this paper, we do not attempt to scrutinize this analogy further, e.g., by investigating the worldsheet theory on the different 4d supergravity strings, but leave this task for future work. 

We observe that whereas in six and five dimensions, the relevant scale to distinguish different kinds of supergravity string is the respective Planck scale, the analysis in this section shows that the relevant scale in four dimensions is the species scale associated to the KK tower.\footnote{In the case $q=0$ this is not the actual species scale, but just the would-be species scale if we did not have an EFT string.} In particular, we have the three options: 
\begin{align}
    \frac{T_\text{\tiny EFT}}{\Lambda_\text{\tiny sp,KK}^2} \ll \mathcal{O}(1) \,, \qquad \frac{T_\text{\tiny EFT}}{\Lambda_\text{\tiny sp,KK}^2} \sim \mathcal{O}(1)\coma \text{or}\quad \quad \frac{T_\text{\tiny EFT}}{\Lambda_\text{\tiny sp,KK}^2} \gg \mathcal{O}(1)\,,
\end{align}
corresponding to $q=0,1,$ and $2$, respectively. Replacing $\Lambda_\text{\tiny sp,KK}$ by $M_\text{\tiny Pl}$ in six dimensions, the first case is the analogue of a string with charge vector $q$ satisfying $q \cdot q=0$ whereas the second case corresponds to $q \cdot q>0$. The last case does not have an analogue in six dimensions. 

We will discuss the consequences of the findings of this section for the WGC in Section \ref{ssec:4dWGC}.

\subsection{Non-EFT string weak coupling limits}
\label{ssec:nonEFTgaugetheory}

So far, we have focused our discussion on weak coupling limits for gauge theories that can be thought of as perturbative gauge theories for a quasi-primitive EFT string. We now turn to gauge theories whose weak coupling limits cannot be achieved as quasi-primitive EFT string limits in our chosen chamber of $\mathcal{K}_{\rm ext}$. For these gauge theories, we can show the following 

\begin{proposition}\label{prop:nonprimitive}
Given a generator $D_0$ of $\text{Eff}\,{}^1(B_3)$ such that the limit $\mathcal{V}_{D_0}\rightarrow \infty$ cannot be realized as a  quasi-primitive EFT string limit. Then the weak coupling limit for a gauge theory on any divisor ${\cal S} = \kappa D_0 + \ldots$ corresponds either to a limit in which the gauge theory effectively becomes a defect theory in an 8d or 10d gravitational bulk theory, or to a limit in which the gauge theory effectively becomes a generically non-weakly coupled 8d theory coupled to gravity.  
\end{proposition}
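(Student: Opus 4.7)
The plan is to exhaust every way of achieving $\mathcal{V}_{\mathcal{S}} \to \infty$ consistent with $\mathcal{V}_{D_0}\to\infty$ not being a quasi-primitive EFT string limit, and in each case identify the effective physics by the ratios of $\mathcal{V}_{\mathcal{S}}$, $\mathcal{V}_{B_3}$ and the KK species scale, reusing the machinery of Section~\ref{ssec:stringscalevsSC}. Since $\mathcal{S}=\kappa D_0+\ldots$ with $\kappa>0$, any weak coupling limit forces $\mathcal{V}_{D_0}\to\infty$. By the failure of Definition~\ref{def:quasiprimitive} in our chosen chamber, no single-parameter scaling $v^0\to\infty$ can be co-scaled so as to give a homogeneous EFT string limit with $D_0$ as the unique expanding generator (up to forced co-expansion) without driving some other divisor volume to zero or breaking the homogeneity condition. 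Using Proposition~\ref{prop:primitiveEFT}, one then concludes that any admissible flow realizing $\mathcal{V}_{D_0}\to\infty$ falls into one of two classes: (\emph{a}) an EFT string flow in the sense of Definition~\ref{def:EFTlimit} but with $|\mathcal{I}|>1$ (i.e., a non-primitive, non-quasi-primitive EFT flow), or (\emph{b}) a non-EFT flow in which several generators of $\text{Eff}^1(B_3)$ grow at parametrically distinct rates.

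In class (\emph{a}) the limit can be represented as a simultaneous flow along several quasi-primitive EFT strings, each with its own topological invariant $q_\alpha\in\{0,1,2\}$. Because $D_0$ does not admit a quasi-primitive flow by itself, at least one contributing string must have $q_\alpha\geq 1$, since $q_\alpha=0$ flows correspond by Proposition~\ref{prop:q0} to primitive limits of fibered divisors. The scaling analysis of Section~\ref{ssec:stringscalevsSC} then applies directly to each contributing direction: a $q_\alpha=2$ sub-limit decompactifies homogeneously to ten dimensions with $\mathcal{V}_{B_3}\sim\lambda_\alpha^3$, while $q_\alpha=1$ sub-limits blow up the base of a surface fibration and yield $\mathcal{V}_{B_3}\sim\lambda_\alpha$. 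Collecting these gives a precise scaling for $\mathcal{V}_{B_3}/\mathcal{V}_{\mathcal{S}}$. If the net effect is a full homogeneous blow-up of $B_3$ (i.e., the contributions amount to scaling all K\"ahler classes comparably), we have $\mathcal{V}_{\mathcal{S}}/\mathcal{V}_{B_3}\to 0$ and $g_{\text{\tiny YM}}\,M_{\text{\tiny Pl}}/M_{\text{\tiny IIB}}\to 0$, i.e., the gauge theory becomes a defect in a ten-dimensional bulk; if only a two-dimensional surface fibration base is blown up, $\mathcal{S}$ fibres over it and one recovers either an 8d-defect configuration within a bulk that decompactifies to ten dimensions or, if $\mathcal{S}$ fills that base, an effective 8d gauge theory whose 8d volume $\mathcal{V}_{\mathcal{S}}/\mathcal{V}_{\text{fiber}}$ need not diverge, so the 8d coupling is generically order one.

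Class (\emph{b}) is handled by decomposing the non-homogeneous flow into a sequence of EFT sub-flows associated with the distinct scaling rates, ordered from slowest to fastest. The slowest sub-flow modifies the finite data (K\"ahler moduli), while the dominant sub-flow is itself an EFT string limit of class (\emph{a}) type, to which the previous analysis applies; the conclusion therefore reduces to case (\emph{a}). The main obstacle I anticipate is a clean characterization of the dichotomy between \emph{defect} and \emph{coupled 8d} outcomes: one needs to verify that the mixed $q_\alpha$ combinations do not produce an exotic intermediate regime, which should follow from a case-by-case comparison of $\mathcal{V}_{\mathcal{S}}$ with the higher-dimensional volume of the decompactifying sub-geometry containing $\mathcal{S}$, precisely as in the end-of-section table of Section~\ref{ssec:stringscalevsSC}. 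A secondary subtlety is ensuring that no admissible flow avoids both class (\emph{a}) and class (\emph{b}), i.e., proving that the failure of the quasi-primitive property forces the list above to be exhaustive; this should follow directly from the structural characterization of EFT string limits in Definition~\ref{def:EFTlimit} together with Proposition~\ref{prop:primitiveEFT}.
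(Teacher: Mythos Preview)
Your overall strategy --- decompose any admissible flow into quasi-primitive sub-flows with invariants $q_\alpha$ and read off the physics from Section~\ref{ssec:stringscalevsSC} --- is plausible in spirit but has a genuine gap, and differs substantially from how the paper proceeds.

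The gap is in the decomposition step. You assert that a class-(\emph{a}) flow ``can be represented as a simultaneous flow along several quasi-primitive EFT strings, each with its own $q_\alpha$'', but nowhere do you establish this. A non-primitive EFT string is a D3-brane on some movable curve $C$, and while $C$ can be written as a positive combination of generators of $\text{Mov}_1(B_3)$, those generators need not all be of the form $\alpha J_i\cdot J_j$ in the given chamber (this is exactly the content of the discussion around Proposition~\ref{prop:EFTstring} and the examples in Appendix~\ref{app:examples}). More importantly, even granting a decomposition, you do not verify that the scalings of $\mathcal{V}_{B_3}$, $\mathcal{V}_{\mathcal{S}}$ and $\Lambda_{\text{\tiny sp,KK}}$ combine in the naive additive way you invoke. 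A concrete case your argument misses is this: suppose $\mathcal{V}_{D_0}$ contains a term $(v^0)^2$ with $J_0^3=0$, so $J_0^2$ is a $q=0$ curve, but every co-scaling needed to reach the primitive limit forces some effective divisor to shrink. Then the only admissible flow has $\mathcal{V}_{B_3}\sim\lambda^2\sim\mathcal{V}_{D_0}$, giving a decompactification to eight dimensions with a \emph{non-weakly coupled} 8d gauge theory. This is precisely the ``generically non-weakly coupled 8d theory'' alternative in the Proposition, and it does not arise as a $q=1$ or $q=2$ sub-limit in your decomposition; your class-(\emph{a}) analysis would incorrectly try to attribute it to a $q_\alpha=0$ component and then has no mechanism to produce the 8d conclusion.

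By contrast, the paper's proof is a direct, concrete case split on the intersection data. It first asks whether $\mathcal{V}_{D_0}$ contains a quadratic term $(v^0)^2$. If so, it splits on $J_0^3\neq 0$ (giving $\mathcal{V}_{B_3}\sim\lambda^3$, hence a 10d bulk with an 8d defect) versus $J_0^3=0$ (giving $\mathcal{V}_{B_3}\sim\lambda^2\sim\mathcal{V}_{D_0}$, hence the non-weakly coupled 8d case). If $\mathcal{V}_{D_0}$ is only linear, say $\mathcal{V}_{D_0}\sim v^0v^1$, it splits on $J_1^2=0$, $J_1^2\neq 0$ with $J_1^3=0$, and $J_1^3\neq 0$, computing $\mathcal{V}_{B_3}$ and the species scale explicitly in each sub-case. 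This avoids entirely the need to decompose flows into quasi-primitive pieces, and it is what makes the 8d-coupled case visible. If you want to rescue your approach, you would need to prove the decomposition lemma and then redo the scaling bookkeeping carefully; it is simpler to follow the paper's direct intersection-number case analysis.
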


The proof is again provided in Appendix~\ref{app:proofs}.

According to Proposition \ref{prop:nonprimitive}, weak coupling limits for gauge theories that do not correspond to a quasi-primitive EFT string limit always lead to higher dimensional theories. As for the quasi-primitive limits studied in the previous section, the asymptotic theory cannot contain any non-critical EFT string whose tension lies between the scale of the KK-tower and the associated KK species scale. By contrast, there can appear a critical string between these two scales. This is possible if we take a $q=1$ EFT string limit, leading to a decompactification to six dimensions, and on top of this an additional limit is taken in the six-dimensional theory. The combined limit is then no longer a $q=1$ EFT string limit. To allow for such a limit, $B_3$ must be a surface fibration over $\mathbb{P}^1$ (in order for a decompactification limit to six dimensions to exist) and the surface fiber itself must admit for a $\PP^1$-fibration (in order for a critical heterotic string to exist in the six-dimensional effective theory). 

To illustrate this point, we assume for simplicity that the F-theory base $B_3$ admits a simple fibration structure $p:\mathbb{P}^1_0\rightarrow \left(\mathbb{P}^1_1 \rightarrow \mathbb{P}^1_2\right)$. In this case, the Mori cone is generated by the curve classes of the three  $\PP^1_i$, $i=0,1,2$, with volumes  $v^i$. In order to reach a decompactification limit to six dimensions with a critical string below the species scale, we can assume the general scaling
\begin{align}
    v^0 \sim \lambda^{-a}\coma v^1 \sim \lambda^b \coma v^2\sim \lambda^c\,,\qquad a,b,c\geq 0\,. 
\end{align}
Here we have chosen $a\geq 0$ to engineer the relation $T_\text{\tiny het}\lesssim M_\text{\tiny IIB}^2$ between the tension of the heterotic string and the type IIB scale. In order for the limit $\lambda\rightarrow \infty$ to correspond to a decompactification to six (rather than to eight or ten) dimensions, the parameters must lie in the range where $c>b$ such that 
\begin{align}
  \frac{M_\text{\tiny Pl}^2}{M_\text{\tiny IIB}^2}  \sim  \mathcal{V}_{B_3} \sim \lambda^{b+c-a}\,,
\end{align}
independently of the chosen twists. The KK-scale is given by
\begin{align}
    \frac{M_\text{\tiny KK}^2}{M_\text{\tiny IIB}^2} \sim \lambda^{-c}\,,
\end{align}
and, by \eqref{eq:Lambdasp}, the species scale is 
\begin{align}
    \frac{\Lambda_\text{\tiny{sp,KK}}^2}{M_\text{\tiny Pl}^2} \sim \lambda^{\frac{a-b}{2}-c}\,.
\end{align}
Using $T_\text{\tiny het}/M_\text{\tiny IIB}^2 \sim \lambda^{-a}$ we find that $T_\text{\tiny het}\lesssim \Lambda_\text{\tiny sp,KK}^2$ provided $a+b\geq 0$, which is the case by assumption. However, if $a=b=0$ the theory does not undergo any additional limit after decompactifying to six dimensions. The simplest option to engineer an additional limit is to take $a=b\neq0$. In this case, the species scale coincides with  $M_\text{\tiny IIB}$, signaling that the six-dimensional Planck scale is constant in units of $M_\text{\tiny IIB}$. The resulting six-dimensional limit then corresponds to the emergent string limits analyzed in \cite{Lee:2018spm,Lee:2018urn}. The volume of the divisors scale as 
\begin{align}
    \mathcal{V}_{p^*(\PP_2^1)}\sim \lambda^{c-a} \coma \mathcal{V}_{p^*(\PP_1^1)}\sim \text{const.} \coma \mathcal{V}_{\PP_1^1\rightarrow \PP_2^1}\sim \lambda^{c+a}\,.
\end{align}
Notice that the condition $c>b=a$ ensures that no divisor is shrinking in the limit, which is necessary to retain perturbative control \cite{Klaewer:2020lfg}. We observe that a gauge theory on a 7-brane wrapping the divisor $\PP_1^1\rightarrow \PP_2^1$, the base $B_2$ of the fibration $p$, becomes weakly-coupled at the fastest rate, but unlike in the $q=0$ primitive EFT string limit, the gauge theory becomes effectively six-dimensional in the asymptotic limit.

To summarize, there also exist certain limits in which 
the tension of a critical string lies above the KK-scale but below the KK-induced species scale,
\begin{align}  \label{MKKTcrit}
M^2_\text{\tiny KK}  \ll T_\text{\tiny crit} \ll \Lambda^2_\text{\tiny sp,KK}   \,.
\end{align}
As a result, the actual species scale is set by $T_\text{\tiny crit}$. Since  $\Lambda_\text{\tiny sp,KK}$ can be identified with the Planck scale in the higher-dimensional, i.e., six-dimensional, theory, this implies that the latter undergoes an additional infinite distance limit corresponding to a weak coupling limit for the six-dimensional gauge theory on $B_2$. From this perspective it is not surprising that a relation of the form \eqref{MKKTcrit} is only possible for critical, rather than general EFT, strings since in six dimensions critical strings are the only strings which can become tensionless, in Planck units, in weak coupling limits.\footnote{Non-critical strings from D3-branes wrapping curves of negative self-intersection on the base of an elliptic $3$-fold become tensionless in the strongly coupled SCFT regime, while, as discussed around \eqref{C06d}, strings associated with curves of positive self-intersection cannot become tensionless in Planck units.} Therefore, in four dimensions, there cannot appear any other weakly coupled strings with a tension below $\Lambda_\text{\tiny sp,KK}^2$, which in the decompactification limit is the six-dimensional Planck scale.

\subsection{Consequences for the four-dimensional WGC} 
\label{ssec:4dWGC}

Let us summarize our findings so far and discuss their consequences for the WGC.

We have studied weak coupling limits 
for four-dimensional gauge theories realized on a stack of 7-branes in F-theory.
Our goal was to identify a marginally super-extremal tower of states as predicted by the asymptotic tWGC.
  A natural candidate for such states are the excitations of a certain (solitonic) EFT string.
The EFT string is
obtained by wrapping a D3-brane on a suitable curve on the base of the F-theory elliptic fibration such that the EFT string tension $T_\text{\tiny EFT}$
sits at the expected weak gravity scale
\begin{align}
    \Lambda_\text{\tiny WGC} \sim g_\text{\tiny YM} M_\text{\tiny Pl}\,.
\end{align}
Contrary to expectations based on this parametric behavior alone,
we have found that the excitations of the EFT string satisfy the asymptotic tower WGC in its form \eqref{eq:WGC} only if 
\begin{enumerate}[label=\alph*)]
    \item the solitonic EFT string with $T_\text{\tiny EFT} = \Lambda_\text{\tiny WGC}$ is a heterotic string and
    \item the gauge theory can be identified with a \emph{perturbative} gauge theory in the dual heterotic string duality frame associated with the EFT string.
\end{enumerate}
We arrived at this conclusion by classifying the weak coupling limits imposed by the backreaction of EFT strings. For these limits, we found that conditions a) and b) are necessary and sufficient to ensure that the extremality bound \eqref{eq:WGC} is satisfied by the hypothetical excitations of the EFT string and that the theory can be treated as a four-dimensional theory even in the weak coupling limit.
The hallmark of such situations is that the relevant scales satisfy the relation 
\begin{align}\label{eq:scales}
    M_\text{\tiny KK}^2 \sim T_\text{\tiny EFT} \sim \Lambda_\text{\tiny{WGC}}^2 \ll \Lambda_\text{\tiny{sp, KK}}^2\,.
\end{align}
Here, $\Lambda_\text{\tiny WGC}$ serves as the cut-off scale for the gauge theory required by the magnetic WGC. 
The crucial point is that the first three scales in \eqref{eq:scales} lie below the would-be species scale $\Lambda_\text{\tiny{sp, KK}}$ associated to the KK-tower of mass $M_\text{\tiny KK}$. This ensures that the EFT string and its excitations can be viewed as genuinely four-dimensional and weakly coupled to gravity. On the other hand, the fact that  $\Lambda_\text{\tiny WGC}\ll \Lambda_\text{\tiny sp,KK}$ indicates that also the gauge theory can be treated as a weakly coupled gauge theory in a weakly coupled four-dimensional theory of gravity even at the scale at which we expect to encounter the super-extremal states required by the WGC. Note that the actual species scale in this limit is determined by the excitations of the (critical) EFT string tension itself \cite{Dvali:2009ks,Dvali:2010vm}, and hence lies slightly above the WGC scale $\Lambda_\text{\tiny WGC}$.

In our classification of Section \ref{ssec:Weakcouplinglimits}, weak coupling limits of the above type correspond to the so-called $q=0$ EFT string limits.\footnote{Notice that one could also modify the $q=0$ EFT string limit slightly and impose a non-homogeneous scaling. As long as one can still interpret this limit as a $q=0$ EFT string limit plus small corrections, one still expects the asymptotic tWGC to be fulfilled, see the discussion in \cite{Klaewer:2020lfg}.}
Theses are the emergent string limits studied in four dimensions in \cite{Lee:2019tst,Klaewer:2020lfg}.

The situation for the weak coupling limits that cannot (to leading order) be described as a $q=0$ EFT string limit is different. We have found two possible asymptotic hierarchies:
\begin{equation}\label{scaleshierarchy}
\renewcommand{\arraystretch}{1.2}
\begin{array}{crcccl}
    & M_\text{\tiny KK} & \ll & \Lambda_\text{\tiny sp,KK} & \sim & \Lambda_\text{\tiny WGC}\\
    \text{or } & M_\text{\tiny KK} &\ll & \Lambda_\text{\tiny sp,KK} &\ll & \Lambda_\text{\tiny WGC}\,. 
\end{array}
\end{equation}
The first type of hierarchies is realized in generic $q=1$ EFT string limits in the language of Section \ref{ssec:Weakcouplinglimits} and similar non-EFT string limits (as discussed in Section \ref{ssec:nonEFTgaugetheory}). Since the magnetic WGC cut-off $\Lambda_\text{\tiny WGC}$ of the gauge theory parametrically lies at the KK species scale, the gauge theory should be viewed as a gauge theory in a higher-dimensional setting. 
Without further specializations of the limit, the higher-dimensional gauge theory is not weakly coupled.
It is therefore not surprising that the na\"{\i}ve tower of charged excitations of, e.g., the $q=1$ string does not satisfy the repulsive force condition \eqref{eq:WGC}. On the one hand,
in order to formulate a condition such as
\eqref{eq:WGC}, the gravitational and Coulomb force must effectively be described by a Newton and Coulomb potential. In a strongly coupled gauge and gravity theory, this assumption is certainly not valid. On the other hand, in the higher-dimensional duality frame also the EFT string is strongly coupled. The na\"{\i}ve assumption  of having a perturbative string spectrum, which underlies the analysis of Section \ref{ssec:Repulsiveforcecond}, is therefore not justified. Notice that this does not mean that there is no tower of super-extremal states at all, but that such a tower cannot come from the excitations of a perturbative string and that the actual super-extremality condition might differ significantly from \eqref{eq:WGC}.\footnote{Note also that by specifying the limit further one can engineer a weak coupling limit in the effective higher-dimensional theory such that the tWGC is satisfied by excitations of an emergent heterotic string in the higher dimensional theory. See Section \ref{ssec:nonEFTgaugetheory} for a discussion.} We will come back to this point in the next section.

The second type of hierarchies in \eqref{scaleshierarchy} is different and corresponds to $q=2$ EFT string limits and their analogous non-EFT string limit counterparts. Here, the gauge and the gravitational theory decouple entirely, as is signaled by the relation $\Lambda_\text{\tiny sp,KK}/\Lambda_\text{\tiny WGC} \gg 1$. This is consistent with the fact that in such limits, the gauge theory asymptotically reduces to a defect theory in a higher-dimensional gravity theory. Since gravity decouples from the gauge sector, the WGC conjecture should become trivial in this limit. 
Indeed, consider the charged states associated with $(p,q)$-strings with mass $M_k^2 \sim k M_\text{\tiny IIB}^2$. The charge-to-mass ratio for these states diverges asymptotically, 
\begin{align}
    \frac{g_{\YM}^2 q^2}{M_k^2/M_\text{\tiny Pl}^2} \rightarrow \infty\,,
\end{align}
 such that the constraint from the WGC becomes trivial, as expected. There is hence no need for an additional, {\it marginally} super-extremal tower of states coming from the excitations of an EFT string. On the contrary, it would be surprising if there was such a tower of states that marginally satisfies the WGC, since the gauge theory and the gravity theory decouple at scales corresponding to $\Lambda_\text{\tiny WGC}$.

Our analysis reveals that in order to test the asymptotic WGC in four dimensions based on the properties of axionic or EFT strings, it is not enough to just consider the weak coupling limit for a gauge theory in an EFT with some cut-off $\Lambda=m_*$.  Instead, one also needs to know the nature of the lightest tower of states with mass of order, $m_*$ since this determines the species scale. We have found that the WGC in its usual form \eqref{eq:WGC} is only  satisfied by the tower of excitations of an axionic string if the latter sets the species scale. Otherwise, the species scale associated with the KK tower per se is always at or below the string scale and one either ends up with a (generically) strongly coupled theory in higher dimensions or with a defect gauge theory from which gravity is decoupled.

\section{Discussion}
\label{sec:Conclusions}

In this section, we would like to extend the conclusions of our analysis of weak coupling limits   beyond the concrete realization of the gauge sector via 7-branes in F-theory.

Quite generally,
the nature of a weak coupling limit $g_{\YM}\rightarrow 0$ of a four-dimensional gauge theory coupled to quantum gravity is controlled by the ratio
of the magnetic weak gravity scale $\Lambda_\text{\tiny WGC}\sim g_\text{\tiny YM} M_\text{\tiny Pl}$ and the species scale $\Lambda_\text{\tiny sp}$ of the quantum theory. There are four possible conceivable regimes:

\begin{enumerate}[label={{\itshape\roman*})},ref={{\itshape\roman*})}]
    \item\label{list:limits0} 
   The WGC scale lies parametrically below the species scale, i.e. 
    \begin{align} \label{scale-rel0}
        \frac{\Lambda_\text{\tiny WGC}}{\Lambda_\text{\tiny sp}}\ll\mathcal{O}(1)\,.
    \end{align}
    This type of weak coupling limits characterizes decompactification limits in which $\Lambda_\text{\tiny WGC}$ can be identified with the scale of the KK tower and in which the latter furnishes the super-extremal tower for a KK $\U(1)$: The species scale is set by $\Lambda_\text{\tiny sp,KK}$, and  \eqref{scale-rel0} follows from \eqref{eq:Lambdasp} by identifying $M_\text{\tiny KK} \sim \Lambda_\text{\tiny WGC}$. By the Emergent String Conjecture, decompactification limits should be the only type of limits falling into the class \eqref{scale-rel0}.
    In particular, all
     limits in which the super-extremal tower is provided by BPS states (not associated with a tower of string excitations) should have a (possibly dual) interpretation of this type. Note that in string theory, the super-extremal states and the gauge sector are realized in the closed sector.

    \item\label{list:limits1} The WGC scale lies marginally below the species scale, i.e., 
    \begin{align} \label{scale-rel1}
        \frac{\Lambda_\text{\tiny WGC}}{\Lambda_\text{\tiny sp}}\lesssim \mathcal{O}(1)\,,
    \end{align}
    such that gravity remains weakly coupled at the WGC scale. This scenario is realized in heterotic string theory and for theories which asymptote, up to duality, to a four-dimensional heterotic theory in the weak coupling regime (emergent string limit).\footnote{Here the species scale is determined by the degeneracy of the heterotic string excitations and is thus (slightly) above the heterotic string scale \cite{Dvali:2009ks,Dvali:2010vm} and the WGC scale.} 
    In such limits, the gauge theory is effectively  weakly coupled in four dimensions and the WGC predicts a tower of states satisfying the repulsive force condition \eqref{eq:WGC}. 
    Indeed, the emergent string tower contains an infinite tower of charged states which are {\it marginally} super-extremal, as in Figure \ref{fig:extrplot}.
    By the Emergent String Conjecture, all weak coupling limits with the property \eqref{scale-rel1} are of the emergent string type.

    \item\label{list:limits2} The WGC scale is of the order of the species scale, i.e., 
    \begin{align}
        \frac{\Lambda_\text{\tiny WGC}}{\Lambda_\text{\tiny sp}}\sim \mathcal{O}(1)\,.
    \end{align}
    Hence gravity is strongly coupled at the WGC scale. As a consequence, the gauge theory cannot be viewed as a weakly coupled theory in four dimensions, and one  does not expect a tower of perturbative states satisfying the four-dimensional repulsive force condition as in \eqref{eq:WGC}. The EFT string limits with $q=1$ (and their generalizations) are examples of this behavior. Further refinements of the limit may result in a weak coupling limit of a higher-dimensional gauge theory coupled to gravity, in which the relation \eqref{scale-rel1}
    holds with respect to the higher dimensional weak gravity and species scale.

    \item\label{list:limits3} The WGC scale parametrically lies above the species scale, i.e.,
    \begin{align}
        \frac{\Lambda_\text{\tiny WGC}}{\Lambda_\text{\tiny sp}} \gg \mathcal{O}(1)\,. 
    \end{align}
    At the WGC scale, the gauge theory is effectively decoupled from the gravity sector, as for example in the EFT string limits with $q=2$. In open string realizations, the repulsive force condition is trivially satisfied by highly super-extremal open string states. Any marginally asymptotic tower of states, even if present, would necessarily decouple from the gauge theory. 
\end{enumerate}
Notice that the asymptotic decoupling of the gravitational and gauge sector does not need to correspond to a geometric decompactification limit in which  the gauge theory becomes a defect theory in a higher dimension, even though this was the case for the corresponding limits on 7-branes studied in this paper.
As an example, consider a D3-brane gauge theory in type IIB Calabi--Yau orientifolds. The gauge coupling strength is set by the string coupling, 
\begin{align}
    g_{\YM}^2 \sim g_s \,, 
\end{align}
such that the weak coupling limit corresponds to the regime where $g_s\rightarrow 0$ (with the other moduli unchanged). In this case, the species scale can be identified with $M_\text{\tiny IIB}$, the tension of the fundamental type IIB string. Since $M_\text{\tiny Pl}^2$ is suppressed with respect to $M^2_s$ by one additional power of $g_s$,
\begin{align}
    \frac{M_\text{\tiny Pl}^2}{M_\text{\tiny IIB}^2} \sim g_s^{-2}\,,
\end{align}
one concludes that 
\begin{align} \label{WGCsp-summary}
 \frac{\Lambda^2_\text{\tiny WGC}}{\Lambda^2_\text{\tiny sp}} 
 \sim \frac{g_{\YM}^2 M_\text{\tiny Pl}^2}{M_\text{\tiny IIB}^2} \sim g_s^{-1} \rightarrow \infty \,. 
\end{align}
This places the regime $g_s \to 0$ into the context of the weak coupling limits of \ref{list:limits3}, and
effectively decouples the gauge theory on the D3-branes from the gravity sector. In the present framework, this amounts to the well-known statement that for type IIB orientifolds in the limit $g_s\rightarrow 0$ the open and closed string sectors decouple. The WGC is now trivially satisfied by the open type IIB string excitations that become infinitely super-extremal in the limit $g_s\rightarrow 0$.  

A key difference of such perturbative open string towers compared to their heterotic counterparts is, of course, that the charges of the string states are only determined by the Chan-Paton factors, and hence the highest charge per excitation level does not increase with the level. Strictly speaking, then, the string excitations tower only provides a finite number of super-extremal states for a fixed value of $g_s \ll 1$. This is in contrast with the marginally super-extremal tower of infinitely many states which arise in the weak coupling limit of a heterotic string, even for a fixed value of the heterotic dilaton. See Figure \ref{fig:extrplot} for an illustration.

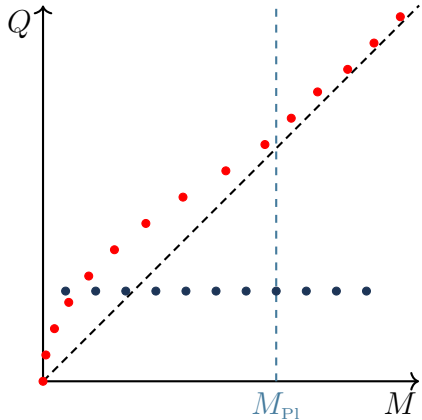
\begin{figure}[!tp]
    \centering
    \begin{tikzpicture}[scale=1,baseline]
        \draw[thick,->] (0,0) -- node[below,pos=0.95] {$M$} (5,0);
        \draw[thick,->] (0,0) -- node[left,pos=0.95] {$Q$} (0,5);
        \draw[thick,densely dashed] (0,0) -- (5,5);
        \draw[thick,dashed,seccolor] (3.10,0) -- node[below,pos=0] {$M_{\text{\tiny Pl}}$} (3.10,5);
        \foreach \y in {0,0.35,...,3} \node[circle,draw=prhigh,fill=prhigh,inner sep=0pt,minimum size=3pt] at (0.31*\y^2,\y) {};
        \foreach \y in {3.15,3.5,...,4} \node[circle,draw=prhigh,fill=prhigh,inner sep=0pt,minimum size=3pt] at (\y-0.2,\y) {};
        \foreach \y in {4.15,4.5,...,5} \node[circle,draw=prhigh,fill=prhigh,inner sep=0pt,minimum size=3pt] at (\y-0.1,\y) {};
         \foreach \x in {0.3,0.7,...,4.5} \node[circle,draw=prcolor,fill=prcolor,inner sep=0pt,minimum size=3pt] at (\x,1.2) {};
    \end{tikzpicture}
    \caption{Super-extremal states for weakly coupled open ({\color{prcolor}{blue}} dots) versus heterotic strings ({\color{prhigh}{red}} dots). For a fixed value of the dilaton, the open string only provides a finite number of super-extremal states, while the heterotic string contains a marginally super-extremal tower of states.}
    \label{fig:extrplot}
\end{figure}

One might wonder to what extent our results are in conflict with the WGC or its tower version.
First, our focus has been entirely on weak coupling limits of the gauge sector and hence on the {\it asymptotic WGC}.
It is in this regime where one has computational control and therefore the best chances of reliably identifying a super-extremal tower. More importantly perhaps, some of the original bottom-up arguments in favor of the WGC primarily hold in the weak coupling limit, and in fact in limits of type \ref{list:limits0} or \ref{list:limits1}: As was argued in \cite{Arkani-Hamed:2006emk}, if extremal black holes were not able to decay, this would be in tension with the covariant entropy bound in the limit $g_\text{\tiny YM} \to 0$, in which an infinite number of extremal black holes can be constructed with mass below every finite cutoff. Away from this asymptotic regime, no immediate contradiction with entropy considerations arises. 
Clearly, this does not mean that the WGC or its tower version does not hold also away from the weak coupling limit, but it is less clear, from a bottom-up point of view, why this would have to be the case.

Indeed, examples where the tower WGC is satisfied even at strong coupling have been studied in detail in \cite{Alim:2021vhs}, in the framework of M-theory compactifications to  five dimensions with eight supercharges: The asymptotic tower of states is formed by BPS states, at least in favorable circumstances. 
As long as the WGC bound and the BPS bound coincide, the appearance of a tower of exactly super-extremal states was found to be protected by supersymmetry. 
Our focus in this paper, by contrast, is on minimally supersymmetric settings in which the WGC states, if any, are not BPS. 
It is striking that in all bona fide weak coupling limits coupled to gravity, a marginally super-extremal tower of states can be identified beyond doubt, from the excitations of an (emergent) perturbative heterotic string. 
In all other weak coupling limits, where no such tower is available among the weakly coupled and hence well-controlled sets of states, 
the WGC is substantially less well motivated 
also from a bottom-up perspective -- either because the gauge theory is no longer weakly coupled after passing to the emergent higher dimensional duality frame or because the gauge and the gravity sector decouple.

While we do therefore not find any contradiction to the asymptotic tWGC (valid in the limit $g_\text{\tiny YM} \to 0$ for a gauge theory coupled to weakly coupled gravity), it would clearly be interesting to continue analyzing other potential sources of super-extremal towers beyond these regimes. 

There are two qualitatively different situations to consider. The first would be a gauge theory coupled to gravity away from the regime $g_\text{\tiny YM} \to 0$, but in situations where the BPS condition does not protect the WGC tower.\footnote{This includes limits of type \ref{list:limits2} after dualizing to the higher dimensional frame where the gauge theory is  -- in general -- not weakly coupled.}
It is tempting to hypothesize that traces of a marginally super-extremal tower should indeed be present whenever a continuous deformation connects the strongly-coupled theory to a weakly coupled regime without decoupling gravity, i.e., in situations where one {\it can} take a limit of type \ref{list:limits1}.
To settle this question, one would have to follow the excitations of 
the emergent perturbative heterotic string regime 
into the strongly coupled region of moduli space.
 First steps in this direction were taken already in \cite{Klaewer:2020lfg}, where subleading corrections to the charge-to-mass ratio were taken into account.
Clearly, whether a full tower of super-extremal states survives parametrically away from weak coupling, and without the protection of a BPS condition, is a considerably more ambitious question.

The second type of challenges for the WGC is posed by those theories which are not smoothly connected to a weak coupling limit of type \ref{list:limits1} and for which the weak coupling limit of the gauge theory necessarily implies a decoupling from gravity. We have exemplified one such instance in our discussion of the gauge theory on D3-branes in Type IIB orientifolds. 
To identify a tower of marginally super-extremal states, the only option seems to be to resort to the D$1$-brane, or more generally to $(p,q)$-strings, either by hypothesizing about their potential excitations or by arguing for stretched, possibly multipronged string networks giving rise to particle states in four dimensions:\footnote{In F-theory, such multipronged string networks can lead to a proliferation of charges, but from all we believe to know this requires strong coupling rather than weak string coupling.}
At the parametric level, $T_\text{\tiny D1} \sim \Lambda^2_\text{\tiny WGC}$, but in the limit $g_s \to 0$ under consideration, this implies that the D$1$-brane theory becomes asymptotically strongly coupled and hence, by design, decouples from the gauge theory because of \eqref{WGCsp-summary}. 
Again, not only does it not seem plausible to obtain a marginally super-extremal tower of states from such strongly coupled objects, it appears not even to be required in the limit in which \eqref{WGCsp-summary} holds.

More generally, a gauge theory in F-theory can never become weakly coupled along a direction compatible with \ref{list:limits1} if the base of the elliptic fibration does not admit a rational fibration, or if the stack of gauge 7-branes does not intersect the rational fiber. As we have shown, at least the (hypothetical) excitations 
of the weakly coupled EFT strings sitting at the weak gravity scale do apparently not satisfy the perturbative repulsive force condition and hence do not yield a marginally super-extremal asymptotic tower. Similarly to the theory on the D3-branes, it is hard to imagine which other states could produce a marginally super-extremal tower instead.
At the same time, a highly super-extremal pseudo-tower of states is provided by the open string sector in such situations; it comprises only finitely many super-extremal states for fixed $g_\text{\tiny YM} \ll 1$, but infinitely many in the limit  $g_\text{\tiny YM} \to 0$.

Upon circle compactification, these states will continue to act as the super-extremal states for the original gauge group; however, taking into account the effect of the KK $\U(1)$ as in \cite{Heidenreich:2015nta}, one might worry that the convex-hull condition will now be violated.
Even if this turned out to be the case, 
the gravity sector of the lower-dimensional theory  continues to decouple from the gauge theory, since the relation $\Lambda^2_\text{\tiny WGC}/\Lambda^2_\text{\tiny sp} \to \infty$ still holds. In this sense, there is no immediate contradiction with the WGC even after circle compactification, according to our more conservative interpretation of the WGC.

Barring these subtleties, it is fair to say that in all instances in which a marginally super-extremal tower of states has been reliably confirmed in gauge theories coupled to gravity in Minkowski space, this is either due to the BPS condition or an artifact of the specific shape of the charge-to-excitation-level diagram of the perturbative heterotic string. 
In fact, in all these cases, the tower WGC (in flat Minkowski space) follows from the Emergent String Conjecture in the sense that the super-extremal tower either furnishes a (dual) KK tower (in situations where the tower is BPS) or corresponds to the excitations of an emergent heterotic string. 
In all other situations, in particular away from weak coupling and amiss of a BPS protection, the tower WGC can in principle be violated and in fact is not even required for bottom-up consistency of the theory.
It will be interesting to find a counter-example to this preliminary interpretation, or, if corroborated, to understand further what it teaches us about the true rationale behind the WGC in general quantum gravity theories.

\section*{Acknowledgments}

We thank Rafael \'Alvarez-Garc\'ia, Daniel Kl\"awer, Ben Heidenreich, Luis Ib\'a\~nez, Seung-Joo Lee, Jacob Leedom, Wolfgang Lerche, Guglielmo Lockhart, Dieter L\"ust, Fernando Marchesano, Luca Martucci, Miguel Montero, Eran Palti,  Nicole Righi, Cumrun Vafa, Irene Valenzuela, and Jian Xiao for helpful discussions and correspondence. 
C. F. C., A. M. and T. W. are supported in part by Deutsche Forschungsgemeinschaft under Germany's Excellence Strategy EXC 2121  Quantum Universe 390833306 and by Deutsche Forschungsgemeinschaft through a German-Israeli Project Cooperation (DIP) grant ``Holography and the Swampland”. M. W. is supported in part by a grant from the Simons Foundation (602883, CV) and also by the NSF grant PHY-2013858. 

\appendix

\section{The geometry of EFT strings in F-theory}
\label{app:geometrytools}

This appendix briefly covers the necessary algebraic geometrical notions for describing EFT strings in F-theory \cite{Lanza:2020qmt,Lanza:2021udy,Lanza:2022zyg}. 
Accordingly, we consider a compact K\"ahler manifold $B_n$ of dimension $n$ as the base of an elliptically fibered Calabi-Yau.  
For further details on our assertions and definitions in what follows, we refer the reader to \cite{MR2095471}.

As a starting point, let us introduce the N\'eron-Severi group, the set of divisors modulo numerical equivalence, which in our case of interest is given by
\begin{equation}
N^1(B_n) = H^{1,1}(B_n) \cap H^2(B_n,\mathbb{Z})/\text{torsion}\,,
\end{equation}
and its real extension  $N^1(B_n)_{\mathbb{R}} = N^1(B_n) \otimes_{\mathbb{Z} } \mathbb{R}$.  
 Similarly, we introduce its intersection pairing counterpart, the set of curves modulo numerical equivalence, which is 
\begin{equation}
N_1(B_n) = H^{n-1,n-1}(B_n) \cap H^{2n-2}(B_n,\mathbb{Z})/\text{torsion}\,,
\end{equation}
and its real extension  $N_1(B_n)_{\mathbb{R}} = N_1(B_n) \otimes_{\mathbb{Z}}\mathbb{R}$.  
In the following, we describe the convex cones in $N^1(B_n)_{\mathbb{R}}$ and $N_1(B_n)_{\mathbb{R}}$ that are relevant to us:

\begin{enumerate}[align=left,wide,  labelindent=0pt]
\item[\textbf{The pseudo-effective cone}:] The cone of effective divisors $\text{Eff}^1(B_n)\subseteq N^1(B_n)_{\mathbb{R}}$ is the convex cone of all effective divisors, i.e.,   positive linear combinations of complex codimension-one cycles on $B_n$.  
The pseudo-effective cone $\overline{\text{Eff}}^1(B_n) \subseteq  N^1(B_n)_{\mathbb{R}}$ is the closure of $\text{Eff}^1(B_n)$.

\item[\textbf{The K\"ahler cone}:] The K\"ahler cone is the set $\mathcal{K}(B_n) \subset H^{1,1}(B_n,\mathbb{R})$ of classes of K\"ahler forms $\{J\}$ on $B_n$. 
We identify 
\begin{equation} 
\mathcal{K}(B_n) =\text{Amp}(B_n) \subset N^1(B_n)_{\mathbb{R}}\,,
\end{equation} which is the convex cone of all ample divisors on $B_n$, 
as $\int_{V} J^{k}  > 0$ for every $J \in \mathcal{K}(B_n)$ and all irreducible $V \subseteq B_n $ with $k = \text{dim}(V)>0$ according to the Nakai-Moishezon criterion for ampleness~\cite{MR2095471}.  
Similarly, we identify its closure to be 
\begin{equation}
\overline{\mathcal{K}}(B_n)  = \text{Nef}(B_n) \subset N^1(B_n)_{\mathbb{R}}\,,
\end{equation}
 which is the convex cone of all nef divisors on $B_n$, as  $\int_{V} J^{k}  \geq 0$ for every $J \in \overline{\mathcal{K}}(B_n)$ and all irreducible $V \subseteq B_n $ with $k = \text{dim}(V)>0$ following Kleiman's theorem~\cite{MR2095471}. 

\item[\textbf{The Mori cone}:] The cone of curves $\text{NE}(B_n) \subseteq N_1(B_n)_{\mathbb{R}}$ is the cone spanned by the classes of all effective  complex one-cycles on $B_n$. 
Its closure $\overline{\text{NE}}(B_n) \subseteq N_1(B_n)_{\mathbb{R}}$, also known as the \textit{Mori cone}, is dual to the nef cone, i.e
\begin{equation}
\overline{\text{NE}}(B_n) = \{ \mathcal{C} \in N_1(B_n)_{\mathbb{R}} \mid J \cdot \mathcal{C} \geq 0 \text{ for all } J \in \text{Nef}(B_n) \} \,.
\end{equation}

\item[\textbf{The Movable cone}:] A movable curve is an irreducible complex one-cycle that is a member of an analytic family $\{C_t\}_{t\in S}$ such that $\bigcup_{t \in S}C_t = B_n$.  
We denote by $\text{ME}(B_n)$ the convex cone generated by movable curves. We call its closure $\text{Mov}_1(B_n) = \overline{\text{ME}}(B_n)$ the \textit{movable cone}. 
If $B_n$ is projective, then the movable cone is dual to the pseudo-effective cone of divisors~\cite{MR3019449}, i.e.
\begin{equation}
\text{Mov}_1(B_n) = \left\{ C \in N_1(B_n)_{\mathbb{R}} \mid D \cdot C \geq 0 \text{ for all } D \in \overline{\text{Eff}}^1(B_n) \right\} \,.
\end{equation}
\end{enumerate}

Finally, let us define the integral cones $\text{Eff}^1(B_n)_{\mathbb{Z}} \equiv \text{Eff}^1(B_n) \cap N^1(B_n)$ and $\text{Mov}_1(B_n)_{\mathbb{Z}} \equiv  \text{Mov}_1(B_n) \cap N_1(B_n)$. Based on the results of~\cite{Lanza:2021udy}, we summarize the correspondence between EFT objects and geometry for a given compact K\"ahler $3$-fold $B_3$:

\begin{center}
\renewcommand{\arraystretch}{1.2}
    \begin{tabular}{c|c}
    EFT data & Geometric cone \\
    \hhline{=|=}
    Saxionic cone $\Delta$ & $\text{ME}(B_3)$ \\ \hline 
    EFT strings $\mathcal{C}_{\text{EFT}}^{S}$ & $\text{Mov}_1(B_3)_{\mathbb{Z}}$  \\ \hline 
    BPS instantons $\mathcal{C}_{I}$& $\text{Eff}^1(B_3)_{\mathbb{Z}}$ \\ 
    \end{tabular}
\end{center}
\vspace{2mm}

\section{Proofs of Propositions 1 - 5}
\label{app:proofs}

In this appendix, we provide the proofs of \cref{prop:primitiveEFT,prop:EFTstring,prop:volumefactorization,prop:q0,prop:nonprimitive}.

\begin{proof}[Proof of Proposition \ref{prop:primitiveEFT}]
In order to achieve $\mathcal{V}_D\rightarrow \infty$, we need to scale up  the volume $v^0$ of some curve $\mathcal{C}^0$ inside the Mori cone contained in $D$.\footnote{In case the K\"ahler cone is simplicial, $\mathcal{C}^0$ is simply the volume of a Mori cone generator, see also Footnote \ref{foot:non-simK}.} Now, if $\mathcal{C}^0$ is contained in another divisor $\hat D\notin \mathcal{I}$, also $\mathcal{V}_{\hat D}\rightarrow \infty$ unless we can compensate by shrinking another holomorphic curve contained in $\hat D$, i.e., we perform a co-scaling. In order for this to be possible, $\hat D$ needs to be a genus-one or rational fibration over $\mathcal{C}^0$ as follows from the results of \cite{Lee:2018urn} (see, e.g., their Section 2.2\footnote{In \cite{Lee:2018urn} only the case of a rational fibration is considered explicitly because of the extra requirement that the infinite distance limits lead to a vanishing gauge coupling for a 7-brane; more generally, also genus-one fibrations are compatible with infinite distance limits leaving the volume of a surface finite.}).\qedhere
\end{proof}

\begin{proof}[Proof of Proposition \ref{prop:EFTstring}]
 Consider a quasi-primitive EFT string associated to a K\"ahler cone generator $J_0$.
 By Definition \ref{def:quasiprimitive}, the EFT string limit is given by the limit $v^0\sim \lambda\rightarrow \infty$. In general, we can now differentiate between the two cases $J_0^2\neq 0$ and $J_0^2=0$ which we are going to discuss separately:

\begin{enumerate}[align=left,wide,  labelindent=0pt]

\item[$\bf{J_0^2\neq 0}$:]
In this case, there exists a non-empty set $\{D_i\}_{i\in \mathcal{J}}$ of generators of $\text{Eff}^1 (B_3)$, for some index set $\mathcal{J}$,  such that 
\begin{align}\label{VDv0}
    \mathcal{V}_{D_i} = \frac{\kappa_i}{2} (v^0)^2 + \ldots\,,
\end{align}
for some integers $\kappa_i$. Since this is the highest  possible power of $v^0$, it is clear that $D_i \in \mathcal{I}$ $\forall i\in \mathcal{J}$. There may now also exist a different set $\{\hat D_{\hat \imath}\}$ of generators whose volumes scale like 
\begin{align}\label{VDhatv0}
    \mathcal{V}_{\hat D} =\eta_{\hat{\imath}} v^0 v^{\hat \imath} + \ldots \,,
\end{align}
for some $v^{\hat \imath}$ and integers $\eta_{\hat \imath}$. Since the scaling differs from $\mathcal{V}_{D_i}$ we need to co-scale $v^{\hat \imath}\sim \frac1\lambda \rightarrow 0$ in order to minimize $|\mathcal{I}|$ in the sense of Definition \ref{def:quasiprimitive} and keep $\mathcal{V}_{\hat D_{\hat \imath}}$ constant.\footnote{As stressed in the main text, to prevent a divisor from shrinking in the limit $v^{\hat \imath}\rightarrow 0$, one might be forced to scale up another $v^{j\neq 0}$. This is fine as long as it does not change $|\mathcal{I}|$ by adding new expanding divisors.}

All instantons with action $\re S \rightarrow \infty$ are thus obtained from D3-branes wrapping divisors containing at least one generator in $\{D_i\}_{i\in \mathcal{J}}$. From \eqref{VDv0}, we infer that these divisors thus have non-zero intersection with a curve $\tilde{C}=J_0^2$, whereas the divisors $\hat D\notin \{D_i\}_{i\in \mathcal{J}}$ have necessarily vanishing intersection with $\tilde{C}$. It is always possible to find a set of curve classes $C^i$ s.t. $D_i\cdot C^j=\delta_i^j$.\footnote{In case $\text{Eff}^1(B_3)$ is simplicial, these are simply the generators of $\text{Mov}_1(B_3)$.} In terms of these curves, we can expand $\tilde{C}$ as
\begin{align}\label{CkappaCa}
    \tilde{C} = \sum_{i\in \mathcal{J}}\kappa_i C^i \,. 
\end{align}
Notice that the $\kappa_i$ are the charges which the effective string carries w.r.t. the $2$-forms dual to $\im T_i$. Hence, an instanton is asymptotically suppressed precisely if it is charged under the D3-brane string wrapped on $\tilde{C}= J_0^2$.  From the general form of the profile of an EFT string \eqref{eq:tiprofile}, it is expected that these charges give the ratio between the saxions that blow up in the EFT string limit. And indeed, from \eqref{VDv0}, we infer that the ratio between the $\mathcal{V}_{D_i}$ for $i\in \mathcal{J}$ is precisely given by the $\kappa_i$. This ratio is not altered if we re-scale all charges of the EFT string by the same factor. Therefore, in general an EFT string giving rise to the EFT string limit in question only needs to wrap a curve $C$ proportional to $\tilde C$, i.e., $C=\alpha \tilde{C}$. Here $\alpha\in \mathbb{Q}_{>0}$ has to be chosen such that all string charges are still integer quantized. Thus, indeed a curve giving rise to the EFT string associated to our limit satisfies Condition \ref{list:EFT1new}.

\item[$\bf{J_0^2 = 0}$:]
In this case there is no generator $D$ of $\text{Eff}^1(B_3)$ whose volume scales as in \eqref{VDv0}. However, there will be a non-empty set $\{D_i\}_{i\in\mathcal{J}}$ of generators of $\text{Eff}^1(B_3)$ with volume scaling as
\begin{align}
    \mathcal{V}_{D_i} =\sum_{j=1}^{h^{1,1}(B_3)}\kappa_{i j} v^0 v^j +\ldots \,,
\end{align}
where the omitted terms are independent of $v^0$.
In order to minimize $|\mathcal{I}|$, we can now fix one $j_0\neq 0$ and co-scale $v^k\rightarrow 0$ for $j_0\neq k\in \{1,\ldots, h^{1,1}(B_3)\}$. 
Since by assumption the quasi-primitive EFT string limit exists, it is possible to find such a co-scaling without taking the volumes of any additional generators of $\text{Eff}^1(B_3)$ to infinity.
Thus, an instanton becomes irrelevant if and only if it is charged under the string obtained from a D3-brane string on $\alpha J_0\cdot J_{j_0}$, $\alpha\in \mathbb{Q}_{>0}$, since only the action of these instantons contains a term $v^0 v^{j_0}$. Thus, in this case the EFT string is obtained from a D3-brane on $C=\alpha J_0\cdot J_{j_0}$ as in Condition \ref{list:EFT2new}.\qedhere
\end{enumerate}
\end{proof}

\begin{proof}[Proof of Proposition \ref{prop:q0}]
By Proposition \ref{prop:EFTstring},
a quasi-primitive EFT string is obtained from a D3-brane wrapping a curve $C^0$ proportional $J_0^2$ or, if $J_0^2=0$, to $J_0\cdot J_1$, where $J_0$ and $J_1$ are  K\"ahler cone generators. In order for the string to have 
$q=0$, we need that $J_0^3=0$.  From the analysis of \cite{Lee:2019tst}, it follows that $C^0$ is the generic fiber of a rational or genus-one fibration $p:\mathbb{P}^1 \rightarrow B_2$ over some base $B_2$. For such a fibration, the cone of effective divisors is generated by the exceptional section $S_-$, divisors obtained by restricting the fibration $p$ to the generators of $\text{Eff}^1(B_2)$ as well as possible exceptional divisors obtained by blowing-up $B_3$.\footnote{In case the genus-one fibration does not have a section, the r\^ole of the section is played by the multi-section.} Except for $S_-$ all generators of $\text{Eff}^1(B_3)$ necessarily contain fibral curves. Hence, the intersection of these divisors with $C^0$ vanishes. Thus, the curve $C^0$ only has non-vanishing intersection with $S_-$. Therefore, the EFT string limit has to correspond to a limit where we scale up the volume of a curve contained in $S_-$ which can be expanded in terms of some of the Mori cone generators of $B_2$. Since the other generators of $\text{Eff}^1(B_3)$ are fibrations over curves containing these generators, by Proposition \ref{prop:primitiveEFT} we conclude that the $q=0$ EFT string limit is primitive. Notice that in order to arrive at the actual primitive EFT string limit, we need to shrink all curves in the fiber of $B_3$. In order for the generators of $\text{Eff}^1(B_3)$ other than $D_0 = S_-$ not to shrink to zero size, the base $B_2$ must expand in a homogeneous way. In fact, from this discussion, it follows that the limit $v^0 \to \infty$ for $J_0^3=0$ and $J_0^2\neq 0$ can always be co-scaled to lead to a primitive EFT string limit. 
\end{proof}

\begin{proof}[Proof of Proposition \ref{prop:volumefactorization}]
Since we restrict ourselves to primitive EFT strings, we can invoke Proposition \ref{prop:EFTstring} and only consider curves $C$ satisfying either Condition \ref{list:EFT1new} or \ref{list:EFT2new} and which are generators of the movable cone. If $\text{Eff}^1(B_3)$ is non-simplicial, we assume that we have fixed a basis $\{D_i\}$ of generators of $\text{Eff}^1(B_3)$ and in the following only deal with those effective cone generators that are contained in this basis.  We now treat the cases $q=0,1,2$ separately: 

\begin{enumerate}[align=left,wide,  labelindent=0pt]
    \item[$\bf{q=2}$:] In this case, by Proposition \ref{prop:EFTstring} the curve $C$ has to be of the form
\begin{equation}
    C = J_0^2\,,
\end{equation}
for some $J_0$ satisfying $J_0^3\neq 0$. For simplicity, we set the parameter $\alpha$ appearing in Proposition \ref{prop:EFTstring} to $\alpha=1$ throughout this proof. On the other hand, since we assume a primitive EFT string limit, we have a single generator, $D_0$, in our basis of $\text{Eff}^1(B_3)$ such that
\begin{align}
    \mathcal{V}_{D_0} \sim (v^0)^2 +\ldots \,,
\end{align}
with the EFT string limit corresponding to $v^0\rightarrow \infty$. By definition, the volume of all other generators of $\text{Eff}^1(B_3)$ remains constant in the primitive EFT string limit so that we find 
\begin{equation}
    \mathcal{V}_{B_3}^2 \rightarrow (v^0)^6 \sim (\re T_0)^3 \,,
\end{equation}
since no other scalar $\re T_{a\neq 0}$ contains a term proportional to $(v^0)^2$. Thus, for $q=2$ we indeed recover \eqref{eq:volumelimit} in the asymptotic EFT string limit.

\item[$\bf{q=1}$:] The case of a $q=1$ primitive EFT string corresponds to the situation in Condition \ref{list:EFT2new}, i.e., to a curve $C$ given by 
\begin{equation}
    C = J_0 \cdot J_1 \,,
\end{equation}
for some K\"ahler cone generator such that $J_0^2=0$ and $J_1^2\neq 0$. The base $B_3$ can be viewed as surface fibration over $\mathbb{P}^1$ with $J_0$ being the class of the generic fiber. In the following, we assume that this fibration is non-trivial and that $B_3$ does not allow for a second surface fibration. The two special cases, corresponding to a trivial fibration or a second surface fibration, will be discussed below. Since by assumption there exists a primitive EFT string limit there also exists a single generator $D_0$ of $\text{Eff}^1(B_3)$ with non-zero intersection with $C$ whose volume receives a contribution 
\begin{equation}
    \mathcal{V}_{D_0}\sim v^0 v^{1}+\ldots \,.
\end{equation}
The volume of any other generators of the chosen basis of $\text{Eff}^1(B_3)$ cannot contain such a term proportional $v^0v^1$ as they would intersect $J_0\cdot J_{1}$ otherwise. Since $J_0^2=0$, there cannot appear a term $(v^0)^2$ in the volume of any divisor. The primitive EFT string limit is given by $v^0\rightarrow \infty$ while keeping $v^{1}$ finite.  Still, in order to find the behavior of $\mathcal{V}_{B_3}$, we need to determine the scalings of the other parameters $v^a$, $0,1\neq a$, in the limit $v^0\rightarrow \infty$. To this end, we group the remaining elements of our basis of K\"ahler cone generators into two classes: 
\begin{equation}
\begin{split}
    \mu \in \mathcal{J}_1 \quad &\text{if} \quad J_\mu^3\neq 0 \,,\\
    r \in \mathcal{J}_2 \quad &\text{if} \quad J_r^3=0\,. 
\end{split}
\end{equation} 
First, consider the scaling of $v^r$ for $r\in \mathcal{J}_2$. If $J_r\cdot J_0\neq 0$ we know from \cite{Lee:2019tst} that $[J_r\cdot J_0]=n [J_r^2]$ for some $n>0$. Notice that this assumes that $J_r\cdot J_r\cdot J_0=0$, which is necessarily the case if the surface fibration is non-trivial.\footnote{To see this, notice that $J_r^2$ is the fiber of a genus-one or rational fibration. $J_r\cdot J_r\cdot J_0\neq 0$ implies that $J_0$ is a section of this fibration. On the other hand, $J_0^2=0$ implies that the genus-one/rational fibration is trivial. This implies, however, that the surface fibration with $J_0$ the generic fiber is also trivial.} Since by assumption $J_r^2$ is a curve different from $J_0\cdot J_{1}$, we know that the volume of a generator $D_r$ of $\text{Eff}^1(B_3)$ satisfying $D_r \cdot J_r\neq 0$ cannot have a contribution proportional to $v^0v^{1}$. On the other hand, it will certainly have a contribution 
\begin{equation}
    \mathcal{V}_{D_r} \sim v^0 v^r \,,
\end{equation}
since by assumption $D_r\cdot  J_r\cdot J_0\neq 0$. In order for this volume not to diverge in the limit $v^{0}\sim \lambda \rightarrow \infty$ we need to scale $v^r\sim 1/\lambda$. If $J_0\cdot J_r=0$ we do not have such a condition and can keep the corresponding $v^r$ finite. 

Let us now turn to the scaling of $v^\nu$ for $1\neq \nu\in \mathcal{J}_1$. Let us split the set $\mathcal{J}_1$ into two sets $\mathcal{J}_1=\mathcal{J}_1'\cup \mathcal{J}_1''$, according to:
\begin{equation}
\begin{aligned}
    \nu'\in\mathcal{J}_1'      \quad & \text{if} &      [J_0\cdot J_{\nu'}] \,& =   & n [J_0\cdot J_{1}] &\quad \text{for some $n$} \,,\\
    \rho'' \in \mathcal{J}_1'' \quad & \text{if} &    [J_0\cdot J_{\rho''}] \,&\neq & n [J_0\cdot J_{1}] &\quad \text{for any $n$} \,.
\end{aligned}
\end{equation}
Thus the volume of the divisor $D_0$ contains a contribution
\begin{equation}
    \mathcal{V}_{D_0} \supsetsim v^0\sum_{\nu' \in \mathcal{J}_1'}a_{\nu'} v^{\nu'} + \ldots \,.
\end{equation}
Since $J_0\cdot J_1\neq J_{\nu'}^2\neq 0$, there needs to be a generator of $\text{Eff}^1(B_3)$ other than $D_0$ for which the volume contains a term $(v^{\nu'})^2$. In order for the volume of this generator to remain constant in the $q=1$ primitive EFT string limit, we need $v^{\nu'}$ to remain at best constant in the limit $v^0\rightarrow 0$. None of the generators $\text{Eff}^1(B_3)$ different from $D_0$ can further contain a term proportional to $v^0 v^{\nu'}$ since this would require a non-trivial intersection with $J_0\cdot J_{1}$. 

For $\rho''\in\mathcal{J}_1''$ we know that if $\mathcal{V}_{D_0}$ contains a factor proportional to $v^{0} v^{\rho''}$, then this term needs to appear in the volume of at least one other generator, $\hat D$, of $\text{Eff}^1(B_3)$ since otherwise $[J_0 \cdot J_{\rho''}]=n[J_0\cdot J_{1}]$ which by definition of $\mathcal{J}_1''$ is not the case. In order for the volume of this additional generator to remain finite in the limit $v^0\sim \lambda \rightarrow \infty$, we need to scale $v^{\rho''}\sim 1/\lambda$. 

To conclude, we have the scalings 
\begin{equation}\label{q1limit}
\begin{aligned}
    v^0\sim\lambda \coma         & v^r \,\,\,\sim \frac1\lambda \coma       & \text{for} &\quad r\in \mathcal{J}_3 \quad \text{if} \quad J_r\cdot J_0\neq 0\,,\\
    v^{\nu'}\precsim \text{const.}\coma &v^{\rho''} \sim \frac1\lambda \coma & \text{for} &\quad \nu'\in\mathcal{J}_1'\coma\rho''\in \mathcal{J}_1''\,. 
\end{aligned}
\end{equation}
Using these scalings, the leading behavior of $\mathcal{V}_{B_3}$ in this limit is set by 
\begin{equation}
 \mathcal{V}_{B_3} \sim \sum_{\mu',\nu'\in \mathcal{J}_1'} \frac{1}{2}\kappa_{0 \mu'\nu'}v^0 v^{\mu'}v^{\nu'}\fstop
\end{equation}
Thus $ \mathcal{V}_{B_3}^2$ is a polynomial of degree-2 in $v^0$. On the other hand, $ \mathcal{V}_{B_3}^2$ is a polynomial of degree-3 in the volumes of the generators of the effective cone. Since $J_0\cdot J_0=0$, none of these volumes is proportional to $(v^0)^2$ and therefore, in the limit determined by \eqref{q1limit}, $ \mathcal{V}_{B_3}^2$ needs to be proportional to the product of two divisor volumes each containing terms proportional to $v^0v^{\nu'}$ with $\nu'\in \mathcal{J}_1'$. From our previous discussion, we know that the only such divisor is $D_0$. Hence 
\begin{equation}
    \mathcal{V}_{B_3}^2\sim (\re T_0)^2\, P_1(\re T_{\mu'})\,,
\end{equation}
where we used that the remaining polynomial $P_1$ has to be independent of $\re T_0$. 

It remains to be shown that \eqref{eq:volumelimit} also holds if $B_3$ is a trivial fibration $\Sigma \times \mathbb{P}^1$ for some surface $\Sigma$. In this case $|\mathcal{J}_1|=0$. The K\"ahler cone is now generated by the class $J_0=[\Sigma]$ and $J_i=[\mathbb{P}^1\times j_i]$, with $j_i$ the K\"ahler cone generators. The cone of curves giving rise to primitive EFT strings in this case agrees with the cone of movable curves and is similarly generated by 
\begin{equation}
    \text{Mov}_1(B_3) =\text{Cone}\left\langle J_{i_0}^2, J_0\cdot J_i \right\rangle \,,
\end{equation}
for some $i_0$. Notice that the first curve class has $q=0$ whereas the second class of curves has $q=1$ unless $j_i^2=0$. The cone of effective divisors, on the other hand, is generated by 
\let\mathcal\mathdutchcal
\begin{equation}
   \text{Eff}^1(B_3)=\text{Cone}\left\langle D_0 = J_0, D_i=\delta_{ij}\left(\mathbb{P}^1 \times \mathcal{c}^j\right)\right\rangle \,,
\end{equation}
with $\mathcal{c}^i$ the Mori cone generators of $\Sigma$. The volume of these effective cone generators are given by 
\let\mathcal\mathcaldefault
\begin{equation}
    \mathcal{V}_{D_0} = \frac{1}{2}\eta_{ij} v^i v^j\coma \mathcal{V}_{D_i}=\delta_{ij}v^0v^j\,,
\end{equation}
where $\eta_{ij}$ is the intersection pairing on $\Sigma$. We are interested in the limit $\mathcal{V}_{D_i}\rightarrow \infty$ while keeping all other volumes finite. This is achieved for $v^0\sim \lambda \rightarrow \infty$, $v^{j\neq i}\sim 1/\lambda$ and $v^i$ finite. Since $\eta_{ii}\neq 0$ by assumption $\mathcal{V}_{D_0}$ remains finite in this limit. We then have 
\begin{equation}
    \mathcal{V}_{B_3}^2 \rightarrow  (v^0 v^i)^2 (v^i)^2 \simeq \mathcal{V}_{D_i}^2 \mathcal{V}_{D_0}\,,
\end{equation}
in accordance with \eqref{eq:volumelimit}. We can thus also confirm \eqref{eq:volumelimit} for the case $q=1$. The case that $B_3$ allows for two surface fibrations corresponds to a $q=0$ case to which we turn now. 

\item[$\bf{q=0}$:] In this case $C$ can either be as in Condition \ref{list:EFT1new}, i.e., $C=J_0^2$ with $J_0^3=0$, or a curve as in Condition \ref{list:EFT2new}, i.e., $C=J_0\cdot J_1$ with $J_0^2=J_1^2=0$. In the latter case, the base $B_3$ has two independent surface fibrations with generic fibers $J_0$ and $J_1$ which intersect over a curve $C= J_0 \cdot J_1$. However, using Proposition 6 in \cite{Lee:2019tst}, both fibrations have to be trivial, and the base is the product $\mathbb P^1 \times \mathbb P^1 \times \mathbb P^1$, for which $\mathcal{V}_{B_3}^2$ is just a product of all three divisor volumes and hence \eqref{eq:volumelimit} is true in any limit. We can thus consider $C = J_0^2$ without losing generality. Notice that $q=0$ is the heterotic string case, and \eqref{eq:volumelimit} follows from the analysis of \cite{Lee:2019tst, Klaewer:2020lfg}. To recover this result, let us first borrow the classification of the K\"ahler cone generators other than $J_0$:\footnote{In \cite{Lee:2019tst, Klaewer:2020lfg}, $\mathcal{K}_{1,2}$ are called $\mathcal{I}_{1,3}$, respectively.}
\begin{equation}
\begin{split}
    J_\mu \cdot J_0^2&\neq 0 \quad \text{if} \quad \mu \in \mathcal{K}_1\,,\\
    J_r\cdot J_0^2=0 \quad \text{and} \quad J_r\cdot J_s \cdot J_0&=0 \quad \text{if}\quad r,s\in \mathcal{K}_2\,. 
\end{split}
\end{equation}
We then observe that the volume of $D_0$ is given by
\begin{equation}
    \mathcal{V}_{D_0} \sim \frac{1}{2} (v^0)^2 + n_r v^r v^0 +\frac{1}{2} n_{rs} v^r v^s\,,
\end{equation}
for $D_0$ the generator of $\text{Eff}^1(B_3)$ dual to $C$, and $r,s \in \mathcal{K}_2$ in the language of \cite{Lee:2019tst}. On the other hand, we have 
\begin{equation}
    \mathcal{V}_{J_0} = \kappa_{00\mu}v^\mu \left(v^0 + n_r v^r\right)\,,
\end{equation}
for $\mu \in \mathcal{K}_1$. Since there is no term proportional to $(v^0)^2$ it follows that 
\begin{equation}
    J_0=\sum_{i\neq 0} a^i D_i \,,
\end{equation}
for some $a^i$ and $D_i\cdot C=0$.  Since in the EFT string limit all $\mathcal{V}_{D_{i\neq 0}}$ need to remain finite, we need to have $\kappa_{00\mu}v^\mu\rightarrow 0$, which is precisely the heterotic emergent string limit \cite{Lee:2019tst,Klaewer:2020lfg}. We then have 
\begin{equation}
    \mathcal{V}_{B_3}^2 \sim \underbrace{\left(\frac{1}{2} (v^0)^2 + n_r v^r v^0 +\frac{1}{2} n_{rs} v^r v^s\right)}_{=\re T_0} \underbrace{\left(\frac{1}{2} (v^0)^2 + n_r v^r v^* +\frac{1}{2} n_{rs} v^r v^s\right) (\kappa_{00\mu} v^\mu)^2}_{=P_2(\re T_{i\neq 0})}\,.
\end{equation}
 Notice that one might have been tempted to identify the first two factors as $\mathcal{V}_{D_0}$. However, the last term goes to zero and therefore cannot correspond to a polynomial in the remaining $\mathcal{V}_{D_i}$ which are assumed to be constant in the limit. We thus can also confirm \eqref{eq:volumelimit} for the case $q=0$. 
\qedhere
\end{enumerate}
\end{proof}

\begin{proof}[Proof of Proposition \ref{prop:nonprimitive}]
There are two ways in which the weak coupling limit for the gauge theory on $D_0$ can fail to be describable as a quasi-primitive EFT string limit:  $i)$ the weak coupling limit is an EFT string limit but not quasi-primitive or $ii)$ the weak coupling limit is not even an EFT string limit according to Definition \ref{def:EFTlimit}. We can further differentiate two cases: the volume of the divisor $D_0$ contains a term $(v^a)^2$ quadratic in some curve volume $v^a$ which scales to infinity, or it does not. Let us start with the first case, i.e., assume that the volume of $D_0$ contains a term 
\begin{align}
    \mathcal{V}_{D_0} \sim (v^0)^2 +\ldots\,. 
\end{align}
If the limit $\mathcal{V}_{D_0}\rightarrow \infty$ cannot be realized as a quasi-primitive EFT string, this means that $D_0$ or some other generators of \text{our basis of} $\text{Eff}^1(B_3)$ contain also terms linear in
$v^0$, but every co-scaling (performed in order that the scaling linear in $\lambda$ for $v^0\sim \lambda\rightarrow \infty$ is compensated) blows up 
 some other $v^1\sim \lambda$ and induces a large volume limit for some other generator in our basis of $\text{Eff}^1(B_3)$, or scales the volume of a divisor to zero so that we lose perturbative control.

 If $J_0^3\neq 0$, the volume of the base scales as $\mathcal{V}_{B_3}\sim \lambda^3 \to \infty$ such that the limit corresponds to an effective decompactification to 10d and any gauge theory on $D_0$ reduces to an 8d defect in 10d -- irrespective of whether the limit is an EFT limit or not. However, there is the possibility that $J_0^3= 0$ but still no (quasi-) primitive EFT string limit exists.\footnote{In Appendix \ref{app:examples} we discuss an example of this kind where any co-scaling necessary to get a set of homogeneously expanding divisors forces some generator of $\text{Eff}^1(B_3)$ to shrink.} In this case $\mathcal{V}_{B_3}\sim \lambda^2\sim \mathcal{V}_{D_0}$ and hence we decompactify effectively to 8d with the gauge theory corresponding to a non-weakly coupled gauge theory in 8d coupled to gravity as opposed to a defect theory. Thus, the weak coupling limits for gauge theories on $D_0$ obtained by blowing up $\mathcal{V}_{D_0}\sim \lambda^2$ and not corresponding to a quasi-primitive EFT string limit or not to an EFT string limit at all are either limits in which we obtain 8d defects in a 10d gravitational bulk theory or non-weakly coupled gauge theories in 8d coupled to gravity.

There is an alternative way to engineer a weak coupling limit for a gauge theory on $D_0$ provided $\mathcal{V}_{D_0}$ contains a term 
\begin{align}
    \mathcal{V}_{D_0} \sim v^0 v^1 + \ldots \,,
\end{align}
but no term quadratic in $v^0$ and/or $v^1$. Without loss of generality, let us assume that there is no such term for $v^1$. We can then try to reach the weak coupling limit for a gauge theory on $D_0$ by scaling $v^1\sim \lambda\rightarrow \infty$ without co-scaling $v^0$.\footnote{In case we also scale $v^0\sim \lambda$ and $v^0$ appears quadratically in the volume of some generator in our basis of $\text{Eff}^1(B_3)$ we can apply the logic of the previous case to this generator. In case $J_0^2=0$, $v^0$ only appears at most linearly in all relevant volumes, and we have a decompactification to a non-weakly coupled gauge theory in 8d since for $J_0^2=0$, we have to leading order $\mathcal{V}_{B_3}\sim \text{const. } v^1 v^0 \sim \mathcal{V}_{D_0}$ and if  both $v^1 \sim \lambda$ and $v^0  \sim \lambda$ are scaled up, this leads to a quadratic expansion
$\mathcal{V}_{B_3}\sim\mathcal{V}_{D_0} \sim \lambda^2$.
} We can now differentiate the three cases $J_1^2=0$, $J_1^2\neq 0$ but $J_1^3=0$ and $J_1^3\neq 0$. 
\begin{enumerate}[align=left,wide,  labelindent=0pt]
    \item[$\bf{J_1^2=0}$:] In this case $v^1$ appears at most linearly in the volume for all generators of $\text{Eff}^1(B_3)$. Thus, the limit $v^1\sim \lambda \rightarrow \infty$ is automatically a quasi-primitive EFT string limit for a $q=1$ string unless we super-impose it with another limit, as discussed in Section \ref{ssec:nonEFTgaugetheory}.

    \item[$\bf J_1^2\neq 0$, $\bf J_1^3=0$:]

    The proof of Proposition \ref{prop:q0} shows that for such intersection numbers, there exists a $q=0$ primitive EFT string limit for a D3-brane on the curve $J_1^2$ which takes $v^1\sim \lambda\rightarrow \infty$. In this $q=0$ limit,
     since $J_1^2\neq 0$, there exists a divisor whose volume scales to infinity as $(v^1)^2 \sim \lambda^2$.
    Since by assumption, $\mathcal{V}_{D_0}$ only scales linearly in $v^1\sim \lambda$ and we do not co-scale $v^0$, the weak coupling limit for $D_0$ under consideration  is certainly not the $q=0$ primitive EFT string limit associated with $J_1^2$.
    
    First, we observe that the weak coupling limit 
    where $\mathcal{V}_{D_0} \sim v^0 v^1$ linearly in $v^1$ can only exist if
    $J_1^2\cdot J_0\neq 0$: Otherwise, it would follow from the intersection numbers that   $J_0\cdot J_1 \sim J_1^2$,  and hence any scaling with $v^1 \sim \lambda$ is necessarily quadratic in $v^1$. If this were
    the case, we would just be considering the primitive EFT string limit associated with the $q=0$ string on the curve $J_1^2$, contrary to our assumption.
    
    Now, as noted already, the difference between the weak coupling with linear scaling in $v^1$ compared to the strict $q=0$ primitive EFT string limit is that we do not impose a co-scaling on $v^0$. Combined with $J_1^2\cdot J_0\neq 0$, this implies that the volume of $B_3$ behaves as 
    \begin{align}
        \mathcal{V}_{B_3} \sim (v^1)^2 v^0 +\ldots \sim \lambda^2\,.
    \end{align}
    As $J_1^2\cdot J_0\neq 0$, the tension of the string on $J_1^2$ has a contribution proportional to $v^0 \sim {\cal O}(1)$. The string scale for the $q=0$ string on the curve $J_1^2$ is therefore of order of $M_\text{\tiny IIB}$. The scaling of $\mathcal{V}_{B_3}$ identifies the limit as a decompactification limit to 8d rather than an emergent string limit. And indeed, using \eqref{eq:Lambdasp}, we find that the species scale of the KK-tower signaling the decompactification to 8d agrees with $M_\text{\tiny IIB}$. Notice that since we have a K\"ahler cone generator with $J_1^3=0$, the manifold $B_3$ needs to admit a rational/genus-one fibration. The decompactification to 8d corresponds to the limit where the base of this fibration becomes large. Now, since a curve $C^0$ in the movable cone with non-zero intersection with $D_0$ cannot be proportional to $J_1^2$, it cannot be just the fiber of the $p$-projection. Hence, in the 8d limit, we resolve the internal directions of the D3-brane wrapping $C^0$. Similarly, since $D_0\cdot J_1^2=0$ the divisor $D_0$ cannot contain the exceptional section of the fibration $p: B_3\rightarrow B_2$. Instead, $D_0$ has to be a combination of $p$-vertical or exceptional divisors. In the 8d theory, the gauge theory on the 7-brane stack wrapping $D_0$ is thus a defect theory in 8d.
    \item[$\bf J_1^3\neq0$:] In this case the volume of $B_3$ again scales as $\mathcal{V}_{B_3}\sim \lambda^3$ indicating a decompactification to 10d as in Section \ref{ssec:stringscalevsSC}. \qedhere
\end{enumerate}
\end{proof}

\section{Examples}
\label{app:examples}

In this appendix, we provide examples for F-theory bases $B_3$ for which we analyze the possible (quasi-)primitive EFT string limits. These example serve to illustrate the subtleties of EFT string limits in the F-theory K\"ahler field space in concrete setups. Based on the example discussed in Section \ref{ssec:P1fibFn}, we explore different blow-ups of $B_3= \PP^1\rightarrow \FF_n$: first in Section \ref{app:toricBlF2} the base $B_2$ is replaced by a blow-up of $\FF_n$ and second in Section \ref{app:toricdatacurveblowup} an example is discussed for which the fiber of $B_3=\mathbb{P}_1\rightarrow \FF_1$ is blown-up. Notice that in all the following examples the K\"ahler cone is simplicial.

\subsection{\texorpdfstring{$\mathbb{P}^1$}{}-fibrations over \texorpdfstring{$\text{Bl}(\mathbb{F}_2)$}{}}
\label{app:toricBlF2}

Let us consider a $\PP^1$-fibration over a base $B_2$ that is $\FF_n$ blown-up in a smooth point on its base $\mathbb{P}^1$. We denote the resulting base by $B_2 = \text{Bl}(\FF_n)$. Due to the blow-up the K\"ahler cone of $\text{Bl}(\FF_n)$ has an additional generator compared to $\FF_n$ which we call $j_2$. The resulting intersection polynomial is
\begin{equation}
    \mathcal{I}(B_2) = nj_0^2 +j_0\cdot j_1+ (n-1) j_2^2 + j_2 \cdot j_1 +nj_0\cdot j_2\fstop
\end{equation}
 The twist of the $\PP^1$-fibration is encapsulated in a line bundle $\mathcal{T}$ with
\begin{equation}
    c_1 (\mathcal{T}) = s j_0 + t j_1 + u j_2\coma s,\, t, \, u \geq 0\coma
\end{equation}
where, compared to \eqref{ex1:twist}, we also allowed for a twist depending on the exceptional divisor in $B_2$. The K\"ahler cone generators of $B_3 = \PP^1 \stackrel{\mathcal{T}}{\rightarrow} \text{Bl}(\FF_n)$ are given by
\begin{equation}
    J_0  = p^*j_0 \coma J_1 = p^*j_1 \coma J_2 = p^*j_2 \coma J_3 = S_- + p^*c_1(\mathcal{T})\coma
    \label{eq:KcP1fibBlFn}
\end{equation}
where $S_-$ is the zero section of the fibration $p :\, \PP^1 \rightarrow B_2$. The intersection ring for $B_3$ reads
\begin{equation}
    \begin{split}
        \mathcal{I}(B_3) =&  \left(s^2 n+2 s t+2 s u n+2 t u+u^2 (n-1)\right)J_3^3 \,+\\
   & +(s n+t+u n)J_0 \cdot J_3^2 +(s n+t+u (n-1))J_2 \cdot J_3^2
   +(s+u)J_1 \cdot J_3^2 \,+ \\
   & +nJ_0^2 \cdot J_3 +(n-1)J_2^2 \cdot J_3 +nJ_0\cdot J_2\cdot J_3 +J_0\cdot J_1\cdot
   J_3+J_1\cdot J_2\cdot J_3\fstop
    \end{split}
\end{equation}
Notice that for $J_2=u=0$, one obtains the example discussed in Section \ref{ssec:P1fibFn}. In what follows, we discuss toric constructions for these geometries when $n=2$. 

An explicit  realization for $B_2  = \text{Bl}(\mathbb{F}_2)$ is encoded by the toric data
\let\mathcal\mathdutchcal
\begin{align}
\begin{blockarray}{crrrrr}
	&&&\mathcal{c}^0&\mathcal{c}^1&\mathcal{c}^2\\
\begin{block}{c(rr|rrr)}
    d_1& 0 & -1& 1 & -2& 0 \\
    d_2& -1& -1& 1 & 0 & -1\\
    d_3& -1& -2& -1& 1 & 1 \\
	d_4& 1 & 0 & 0 & 1 & 0 \\
	d_5& 0 & 1 & 0 & 0 & 1 \\
\end{block}
\end{blockarray}\text{\hspace{0.2cm}.}
	\label{eqn:f10toricdata}
\end{align}
Here $d_a$ and $\mathcal{c}^a$ denote the toric divisors and Mori cone generators for $B_2$ respectively, where the latter intersect with the K\"ahler cone generators $\{j_a\}_{a=0,1,2}$ in $B_2$  as  $\mathcal{c}^a \cdot j_b = \delta^a_b$. 
Notice that the toric divisor $d_2$ is the exceptional divisor associated with the blow-up of $\mathbb{F}_2$. Moreover, we obtain that the first Chern class is $c_1(B_2) = j_0 + j_2$.

\let\mathcal\mathcaldefault
 Now, we construct smooth $\PP^1$-fibrations determined by the twist bundle $c_1(\mathcal{T}) = s j_0 + u j_2$ 
for the toric projective bundle $B_3 = \mathbb{P}(\mathcal{O}_{B_2} \oplus \mathcal{O}_{B_2}(s j_0 + u j_2))  \rightarrow B_2$. 
 Under this consideration we obtain the following toric data for the choices of parameters $s,u \in  \{0,1\}$:
\begin{align}
\begin{blockarray}{crrrrrrr}
	&&&&\mathcal{C}^0&\mathcal{C}^1 & \mathcal{C}^2 & \mathcal{C}^3\\
\begin{block}{c(rrr|rrrr)}
    D_0& 0 & 0 & 1& -s& 0 & -u& 1 \\
    D_1& 0 & -1& s & 1 & -2& 0 & 0 \\
    D_2& -1& -1& 0 & 1 & 0 & -1& 0 \\
    D_3& -1& -2& 0 & -1& 1 & 1 & 0 \\
	D_4& 1 & 0 & 2s& 0 & 1 & 0 & 0 \\
	D_5& 0 & 1 & u & 0 & 0 & 1 & 0 \\
    D_6& 0 & 0 &-1 & 0 & 0 & 0 & 1 \\
\end{block}
\end{blockarray} \text{\hspace{0.2cm}.}
	\label{eqn:ToricP1overBlF2}
\end{align}
Here the anticanonical class reads
\begin{equation}
    \overline{K}(B_3) = 2 J_3 + (1-s) J_0 + (1-u) J_2\,. 
\end{equation}
The $P\vert Q$ matrix \eqref{eqn:ToricP1overBlF2} determines the linear equivalence relations among toric divisors, which read
\begin{align}
\begin{split}
    &D_4 \sim D_2 + D_3\coma D_5 \sim D_1 +D_2+2D_3  \coma \\ 
    &D_6 \sim D_0+(s+u)D_1+(2s+u) D_2 + 2(s+u)D_3\,. 
\end{split}
     \label{eqn:LinEqs1}
\end{align}
As each  $D_{\rho}$ toric divisor is prime, an effective divisor in $B_3$ has the form $\sum_{\rho \in \Sigma(1)} a_\rho [D_\rho]$ with all $a_\rho \geq0$, where $\Sigma(1)$ is the set of rays that span the fan $\Sigma$ for $B_3$. 
From \eqref{eqn:LinEqs1}, we determined $[D_4]$, $[D_5],$ and $[D_6]$ to be positive linear combinations of $\{[D_i]\}_{i=0,1,2,3}$. 
Hence, the latter set is a minimal basis that spans $ N^1(B_3)$ and the cone of
effective divisors is  $\text{Eff}^1(B_3)_{\mathbb{Z}} = \text{Cone} (\{[D_i]\})$. 
Taking into account the basis of K\"ahler divisors $J_i\cdot \mathcal{C}^j = \delta_i^j$, we obtain the expressions
\begin{align}
\begin{split}
 [D_0] = J_3 -sJ_0 -uJ_2\coma [D_1] = J_0-2J_1\coma [D_2] = J_1 - J_2 \coma [D_3] = J_2 + J_3 -J_0\,. 
 \label{eqn:effDivisorsBlF2}
 \end{split}
\end{align}
By abuse of notation, we drop the $[\cdot]$ symbol from now on and refer to each effective divisor $[D_i]$ by $D_i$.  

Before proceeding to discuss the physics for $B_3$, let us remark that we can perform the blow-down $\text{Bl}(\mathbb{F}_2) \rightarrow \mathbb{F}_2$ by removing the lattice point in the toric data that is associated to the exceptional divisor $D_2$, and then, one proceeds with the same computation we used to obtain \eqref{eqn:effDivisorsBlF2}. In this way, we can verify expression \eqref{eq:EffcondivP1Fn-Ka} in our constructed examples.\footnote{Upon blow-down the divisor class $D_0$ maps to $J_3 -s J_0$, while $D_3$ maps to $J_1$.} The same method can be applied for other values $n\neq 2$ for $\mathbb{F}_n$.  

Different twist parameters $s, u\in \{0,1\}$ in~(\ref{eqn:ToricP1overBlF2}) give rise to inequivalent polytopes with a different number of triangulations. For concreteness, in the following discussion and subsections, we focus on the choice $s=1$ and $u=0$. 
In this case, the depicted triangulation in~(\ref{eqn:ToricP1overBlF2}) realizes a  $\PP^1$-fibration over Bl($\FF_2$) with twist bundle  $c_1(\mathcal{T})=j_0$. The cone of effective divisors is  
\begin{align}
\label{eqn:Effabc}
   \text{Eff}^1(B_3)& =  \text{Cone}\left\langle D_0,D_1,D_2,D_3\right\rangle\coma
\end{align}
whose expression in terms of the  K\"ahler cone generators of $B_3$ follows from  \eqref{eqn:effDivisorsBlF2}. 
The volumes of the generators are  given by 
\begin{equation}\label{ex2:divisorvolumes}
\begin{aligned}
    \mathcal{V}_{D_0} &= v^0\left(v^0+v^1\right)+v^2 \left(2v^0+ v^1+ \frac12 v^2\right)\,,\\
    \mathcal{V}_{D_1}&=v^3 v^1\,,\\
    \mathcal{V}_{D_2}&=v^3 v^2\,,\\
    \mathcal{V}_{D_3}&= \frac{1}{2} v^3 \left(2 v^0+ v^3\right)\,.
\end{aligned}
\end{equation}
 We can easily identify the generators of the movable cone $\text{Mov}_1(B_3)=\overline{\text{Eff}}^1(B_3)^\vee$ as
\begin{equation}
 C^0 = J_0\cdot J_1\coma C^1= J_1\cdot J_3\coma C^2= J_2\cdot J_3\coma C^3= J_0 \cdot J_3 \,,
\end{equation}
with volumes
\begin{equation}
   \begin{split}
        \mathcal{V}_{C^0} & = v^3\coma \\
         \mathcal{V}_{C^1} & = v^0+v^2+ v^3\coma\\
        \mathcal{V}_{C^2} & = 2 v^0+v^1+ v^2+2 v^3\coma\\
        \mathcal{V}_{C^3} & = 2 v^0+v^1+2 v^2+2v^3 \fstop
    \end{split}
    \end{equation}
Note that $J_0^2 = 2 J_0\cdot J_1$ and $J_3^2= J_0\cdot J_3$.
\begin{figure}[!htp]
    \centering
      \begin{tikzpicture}[baseline]
    \draw[line width = 1pt] (-8,6) -- (-8,2) -- (-3.5,2) -- (-3.5,6) -- (-8,6);
    \draw[line width = 1pt] (-8,0) -- (-8,-4) -- (-3.5,-4) -- (-3.5,0) -- (-8,0);
   \draw[line width = 1pt] (8,6) -- (8,2) -- (3.5,2) -- (3.5,6) -- (8,6);
    \draw[line width = 1pt] (8,0) -- (8,-4) -- (3.5,-4) -- (3.5,0) -- (8,0);
    \draw[line width = 1pt] (-3,8) -- (-3,-6) -- (3,-6) -- (3,8) -- (-3,8);
    %
    \shade[ball color = prcolor!50!white, opacity = 0.9] (0,6.2) circle (1.6);
    \draw[dashed] (-1.6,6.2) arc (180:0:1.6 and 0.5);
    \draw[dashed] (-1.6,6.2) arc (180:0:1.6 and -0.5);
    \node at (0,6.2) {$\PP^1_f$};
    \node at (-2.5,6.2) {$B_3:$};
    \shade[ball color = prhigh!50!white, opacity = 0.3] (-1,1) circle (1);
    \draw[dashed] (-2,1) arc (180:0:1 and 0.4);
    \draw[dashed] (-2,1) arc (180:0:1 and -0.4);
    \node at (-1,1) {$\PP^1_A$};
    \shade[ball color = prcolor!50!white, opacity = 0.7] (1,1) circle (1);
    \draw[dashed] (0,1) arc (180:0:1 and 0.4);
    \draw[dashed] (0,1) arc (180:0:1 and -0.4);
    \node at (1,1) {$\PP^1_B$};
    \shade[ball color = prhigh!50!white, opacity = 0.5] (0,-4.2) circle (1.6);
    \draw[dashed] (-1.6,-4.2) arc (180:0:1.6 and 0.5);
    \draw[dashed] (-1.6,-4.2) arc (180:0:1.6 and -0.5);
    \node at (0,-4.2) {$\PP^1_b$};
     \draw[line width = 1.5pt,-Triangle,densely dashed] (0,4.2) -- (0,2.2);
     \draw[line width = 1.5pt,-Triangle,densely dashed] (0,-0.2) -- (0,-2.2);
    \shade[ball color = prhigh!50!white, opacity = 0.3] (-6,5.2) circle (0.65);
    \draw[dashed] (-6.65,5.2) arc (180:0:0.65 and 0.3);
    \draw[dashed] (-6.65,5.2) arc (180:0:0.65 and -0.3);
    \node at (-6.,5.2) {$\PP^1_A$};
    \shade[ball color = prcolor!50!white, opacity = 0.7] (-4.7,5.2) circle (0.65);
    \draw[dashed] (-5.35,5.2) arc (180:0:0.65 and 0.3);
    \draw[dashed] (-5.35,5.2) arc (180:0:0.65 and -0.3);
    \node at (-4.7,5.2) {$\PP^1_B$};
    \node at (-7.5,5.2) {$D_0:$};
    \shade[ball color = prhigh!50!white, opacity = 0.5] (-5.35,2.8) circle (0.65);
    \draw[dashed] (-6,2.8) arc (180:0:0.65 and 0.3);
    \draw[dashed] (-6,2.8) arc (180:0:0.65 and -0.3);
    \node at (-5.35,2.8) {$\PP^1_b$};
    \draw[line width = 1.5pt,-Triangle,densely dashed] (-5.35,4.5) -- (-5.35,3.6);
    \shade[ball color = prcolor!50!white, opacity = 0.9] (-5.35,-0.8) circle (0.65);
    \draw[dashed] (-6,-0.8) arc (180:0:0.65 and 0.3);
    \draw[dashed] (-6,-0.8) arc (180:0:0.65 and -0.3);
    \node at (-5.35,-0.8) {$\PP^1_f$};
    \node at (-7.5,-0.8) {$D_1:$};
    \shade[ball color = prhigh!50!white, opacity = 0.5] (-5.35,-3.2) circle (0.65);
    \draw[dashed] (-6,-3.2) arc (180:0:0.65 and 0.3);
    \draw[dashed] (-6,-3.2) arc (180:0:0.65 and -0.3);
    \node at (-5.35,-3.2) {$\PP^1_b$};
    \node at (-5.35,-2) {$\times$};
    \shade[ball color = prcolor!50!white, opacity = 0.9] (6.15,5.2) circle (0.65);
    \draw[dashed] (5.5,5.2) arc (180:0:0.65 and 0.3);
    \draw[dashed] (5.5,5.2) arc (180:0:0.65 and -0.3);
    \node at (6.15,5.2) {$\PP^1_f$};
    \node at (4,5.2) {$D_2:$};
    \shade[ball color = prcolor!50!white, opacity = 0.7] (6.15,2.8) circle (0.65);
    \draw[dashed] (5.5,2.8) arc (180:0:0.65 and 0.3);
    \draw[dashed] (5.5,2.8) arc (180:0:0.65 and -0.3);
    \node at (6.15,2.8) {$\PP^1_B$};
    \node at (6.15,4) {$\times$};
    \shade[ball color = prcolor!50!white, opacity = 0.9] (6.15,-0.8) circle (0.65);
    \draw[dashed] (5.5,-0.8) arc (180:0:0.65 and 0.3);
    \draw[dashed] (5.5,-0.8) arc (180:0:0.65 and -0.3);
    \node at (6.15,-0.8) {$\PP^1_f$};
    \node at (4,-0.8) {$D_3:$};
    \shade[ball color = prhigh!50!white, opacity = 0.3] (6.15,-3.2) circle (0.65);
    \draw[dashed] (5.5,-3.2) arc (180:0:0.65 and 0.3);
    \draw[dashed] (5.5,-3.2) arc (180:0:0.65 and -0.3);
    \node at (6.15,-3.2) {$\PP^1_A$};
    \draw[line width = 1.5pt,-Triangle,densely dashed] (6.15,-1.55) -- (6.15,-2.45);
    \end{tikzpicture}
    \caption{We show the base $B_3=\mathbb{P}^1\rightarrow \text{Bl}(\FF_n)$ (center) for $u=t=0$ as well as the topology of the divisors $D_i$, $i=0,\ldots,3$.}
    \label{fig:fibrationstructureP1BlFn}
\end{figure}
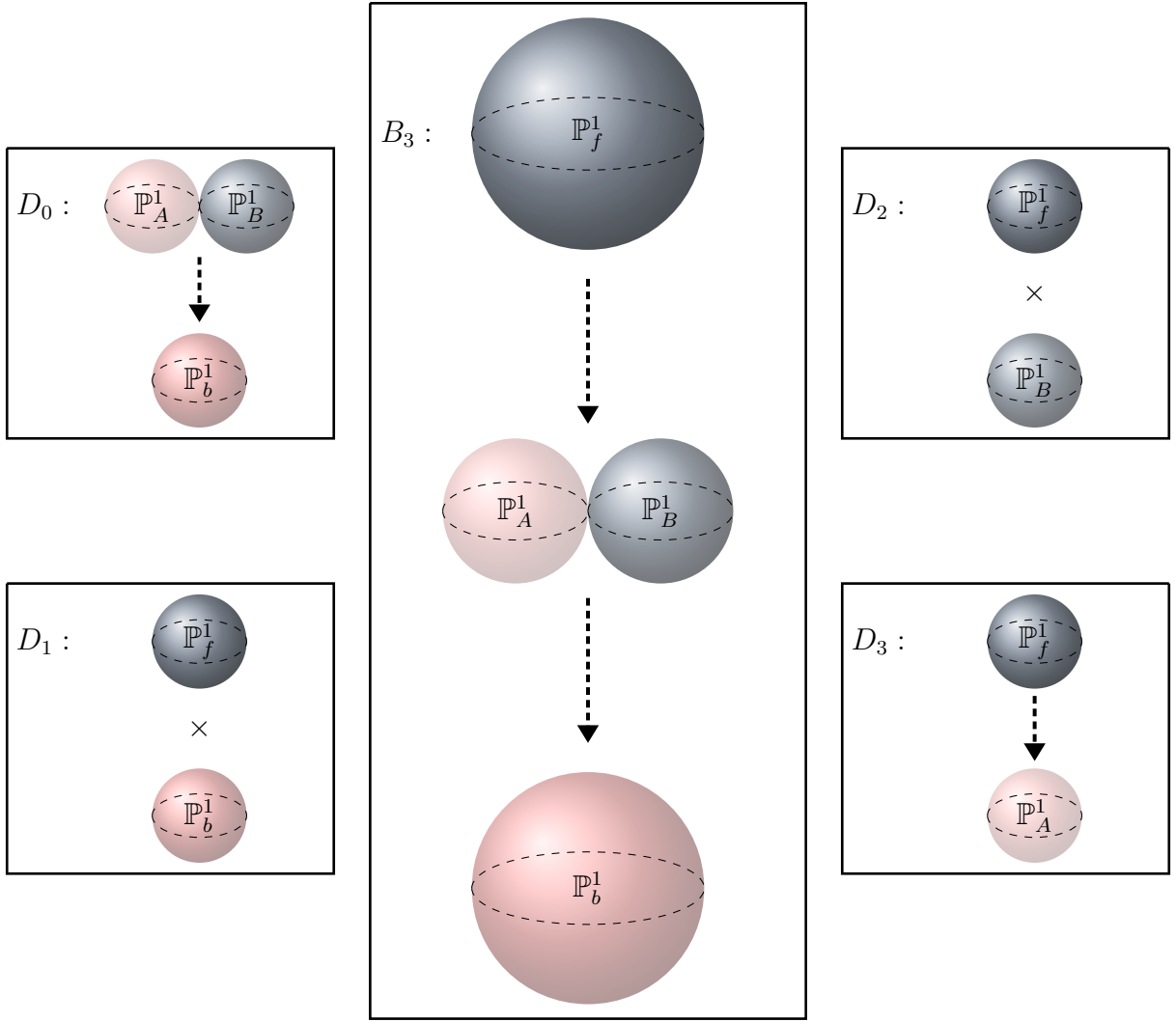 

Let us identify those curves giving rise to the (quasi)-primitive EFT string limits for the chosen $B_3$. Compared to the example in Section \ref{ssec:P1fibFn}, we now have the additional divisor $D_2$ and the associated generator $C^2$ of the movable cone. We first notice that, by Proposition \ref{prop:primitiveEFT}, the limit $\mathcal{V}_{D_2}\rightarrow \infty$ cannot be realized as a primitive EFT string limit within the given K\"ahler cone. To see this, consider Figure \ref{fig:fibrationstructureP1BlFn} where we show $B_3$ and the topology of the divisors. The divisors are expressed in terms of the fibral curve $\mathbb{P}^1_f$ of $B_3$, the base $\mathbb{P}^1_b$ of $\text{Bl}(\FF_2)$ and the two fibral curves $\mathbb{P}^1_A$ and $\mathbb{P}^1_B$ which together give the generic fiber of $\mathbb{F}_n$. Figure \ref{fig:fibrationstructureP1BlFn} illustrates Proposition \ref{prop:primitiveEFT} since the two curves $\mathbb{P}^1_f$ and $\mathbb{P}^1_B$ contained in $D_2$ are also contained in other generators of $\text{Eff}^1(B_3)$ as fibers of a non-trivial fibration. Hence, blowing up either of the two curves will also blow up either $D_0$ or $D_3$ such that there is no primitive EFT string limit associated to $\mathcal{V}_{D_2}\rightarrow \infty$.

\begin{figure}[!htp]
    \centering
   \begin{tikzpicture}[scale=1,rotate around y = 80, rotate around z = 15, rotate around x = 10]
     \draw[fill=black!50!white,opacity=0.2,draw=none] (0,0,0) -- (6,0,0) -- (6,0,6) -- node[black,pos=0.2,left,opacity=1] {$\mathcal{V}_{D_2} = \infty$} (0,0,6) -- (0,0,0);
     \draw[fill=prhigh!80!white,opacity=0.8,draw=none,left color=white, right color=prhigh!80!white, anchor=south] (0,0,0) -- (6,0,0) to[bend left=15] (5,2.5,5) to[bend right=20] (0,0,6) -- (0,0,0);
     \draw[line width=1pt,dashed,draw=black!60!white] (1,0,0) to[bend left=15] (1,0.5,1) to[bend right=20] (0,0,1);
     \draw[line width=1pt,dashed,draw=black!60!white] (2,0,0) to[bend left=15] (2,1,2) to[bend right=20] (0,0,2);
     \draw[line width=1pt,dashed,draw=black!60!white] (3,0,0) to[bend left=15] (3,1.5,3) to[bend right=20] (0,0,3);
     \draw[line width=1pt,dashed,draw=black!60!white] (4,0,0) to[bend left=15] (4,2,4) to[bend right=20] (0,0,4);
     \draw[line width=1.5pt] (0,0,0) -- (5,2.5,5);
      \draw[line width=1.5pt,-Triangle] (0,0,0)  -- node[above left,pos=1]
     {$\mathcal{V}_{D_0}^{-1}$} (6,0,0);
     \draw[line width=1.5pt,-Triangle] (0,0,0) -- node[above,pos=1] {$\mathcal{V}_{D_2}^{-1}$} (0,5,0);
     \draw[line width=1.5pt,-Triangle] (0,0,0) -- node[below,pos=1] {$\mathcal{V}_{D_3}^{-1}$} (0,0,6);
   \end{tikzpicture}
    \caption{A sketch of part of the K\"ahler field space for $B_3=\mathbb{P}^1\rightarrow \text{Bl}(\FF_n)$ with $u=0$.}
    \label{fig:sketch}
\end{figure}
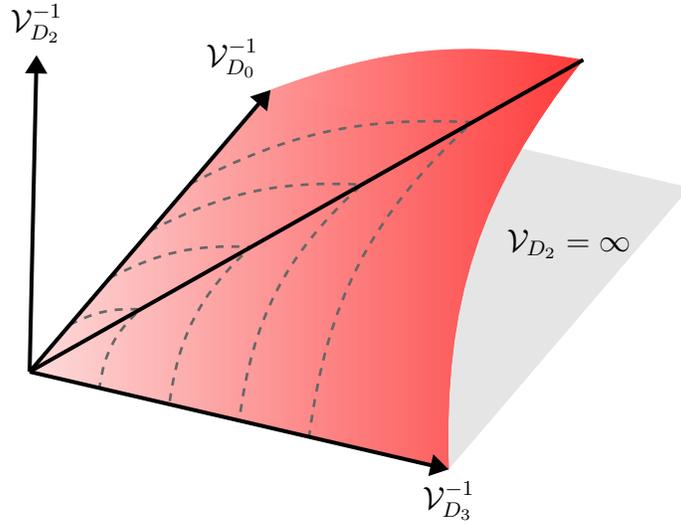

On the other hand, the limit $\mathcal{V}_{D_2}\rightarrow \infty$ cannot be realized as a quasi-primitive EFT string limit either. To see this, notice that we either need to blow-up $v^2\rightarrow \infty$ or $v^3\rightarrow \infty$. However, in both limits we have another generator of $\text{Eff}^1(B_3)$ blowing up at a faster rate such that it is not an EFT string limit. For $v^2\sim \lambda \rightarrow \infty$ one might try to co-scale $v^3\sim \lambda$ to obtain an EFT string limit as defined in Definition \ref{def:EFTlimit}. However, in this way $|\mathcal{I}|$ is not minimized such that $v^2\rightarrow \infty$ with this co-scaling does not qualify as a quasi-primitive EFT string limit according to Definition \ref{def:quasiprimitive}. Thus, in this example we only find three (quasi)-primitive EFT string limits, despite the dimension of the field space being four. In particular, a gauge theory on $D_2$ is of the kind discussed in Section~\ref{ssec:nonEFTgaugetheory}. Thus, the weak coupling limit for this gauge theory cannot necessarily be realized as an EFT string limit within the given chamber of the K\"ahler cone. This is illustrated in Figure \ref{fig:sketch}. There we show parts of the K\"ahler field space, including the directions corresponding to the volumes of the divisors $D_0,D_2$ and $D_3$ in the vicinity of $\mathcal{V}_{D_i}=\infty$. The red surface corresponds to the boundary of the chosen chamber of the K\"ahler cone embedded in the saxionic field space. From our previous discussion, it is clear that the bulk of the gray plane corresponding to $\mathcal{V}_{D_2}=\infty$ cannot be reached within this chamber of the K\"ahler cone chamber, but we need to perform at least one flop transition in the base $B_3$. The necessary flop transitions are discussed in the following. Notice, however, that the origin in Figure~\ref{fig:sketch} can be reached in a linear fashion corresponding to an EFT string obtained by a D3-brane on $a_0 C^0+a_2 C^2+a_3 C^3$ for some coefficients $a_i$. This is, however, not a quasi-primitive EFT string.  

To summarize, only the generators $D_i$ for $i=0,1,3$ are each individually homogeneously expandable. Hence, classically all three curves $C^0$, $C^1$ and $C^3$ are expected to give rise to primitive EFT strings with the following associated limits:
\begin{equation}
\renewcommand{\arraystretch}{1.25}
\begin{tabular}{c|c|l|c}
    \text{Movable curve} & \text{$q$ factor} & \multicolumn{1}{c|}{Primitive EFT limit} & ${\cal V}_{D_i} \to \infty$ \\
    \hline
  $C^0 = J_0\cdot J_1$ & $q=0$ & $v^0, v^1, v^2 \rightarrow \infty \coma v^3\rightarrow 0$ & $D_0$\\
    $C^1 = J_1\cdot J_3$ & $q=1$ & $v^1 \rightarrow \infty \coma v^0,v^2 \rightarrow 0 \coma v^3 \simeq \text{const.}$ & $D_1$\\
    $C^3 = J_0\cdot J_3$ & $q=2$ & $v^3 \rightarrow \infty \coma v^1,v^2\rightarrow 0 \coma v^0 \simeq \text{const.}$ & $D_3$
\end{tabular}
\end{equation}
However, in the limit associated to $C^1$, we effectively blow-down $D_2$. Hence, this limit might be obstructed at the quantum level. On the other hand, co-scaling $v^3\rightarrow \infty$ would increase $|\mathcal{I}|$ and therefore the EFT string limit would not be quasi-primitive anymore. Thus, even though all curves $C^i$ with $i=0,1,3$ are as in Proposition \ref{prop:EFTstring} not all of them are associated to quasi-primitive EFT string limits. In total, we only have two primitive EFT strings, despite the dimension of the field space being four.

The associated polytope of this example -- with $s=1$ and $u=0$ -- allows for two additional triangulations  besides \eqref{eqn:ToricP1overBlF2} that  correspond to different chambers of the extended K\"ahler cone  $\mathcal{K}_{\text{ext}}(B_3)$.  Thus, $B_3$ relates via flop transitions to geometries that we  call $B_3'$ and $B_3''$. In terms of  toric data, we obtain
 \begin{align}
\begin{blockarray}{crrrrrrrrrrrl}
	&&&&\mathcal{C}_{\mathrm{I}}^1 & \mathcal{C}_{\mathrm{I}}^2 &\mathcal{C}_{\mathrm{I}}^3& \mathcal{C}_{\mathrm{I}}^4 & \mathcal{C}_{\mathrm{II}}^2 & \mathcal{C}_{\mathrm{II}}^3 &\mathcal{C}_{\mathrm{II}}^4 & \mathcal{C}_{\mathrm{II}}^5  \\
\begin{block}{c(rrr|rrrr|rrrr)l}
         D_0& 0 &  0 & 1 & -1& -1& 0 & 1 & -2& 0 &  0  & 1 \\
         D_1& 0 &  -1& 1 & -1& 1 & 1 & -1 & 0 & 1 &  -2 & 1\\
         D_2& -1&  -1& 0 & 1 & 0 & 1 & -1 & 1 & 1 &  0  & -1\\
         D_3& -1&  -2& 0 & 0 & 0 & -1& 1  & 0 & -1&  1  & 0 \\
         D_4& 1 &  0 & 2 & 1 & 0 & 0 & 0  & 1 & 0 &  1  & -1\\ 
         D_5& 0 &  1 & 0 & 0 & 1 & 0 & 0  & 1 & 0 &  0  & 0 \\ 
         D_6& 0 &  0 & -1& 0 & 0 & 1 & 0  & 0 & 1 &  0  & 0 \\
\end{block}
\end{blockarray}\text{\hspace{0.2cm}.}
	\label{eqn:f10toricdata-flops}
\end{align}
Here $\mathcal{C}_{\mathrm{I}}^i$ are the Mori cone generators of $B_3'$, while $\mathcal{C}_{\mathrm{II}}^i$ are the Mori cone generators of $B_3''$. 
More explicitly, the flop map can be realized via the following curves' basis transformation: 
\begin{equation}
\text{Flop I: }
\left\{ \renewcommand{\arraystretch}{1.1}\begin{array}{lcl}
\mathcal{C}_{\mathrm{I}}^1 &=& \mathcal{C}^0 + \mathcal{C}^1   \\
\mathcal{C}_{\mathrm{I}}^2 &=& \mathcal{C}^0 + \mathcal{C}^2 \\
\mathcal{C}_{\mathrm{I}}^3 &=& \mathcal{C}^0 + \mathcal{C}^3  \\
\mathcal{C}^4_{\mathrm{I}} &=& -\mathcal{C}^0
\end{array}\right.
\coma
\text{Flop II: }
\left\{ \renewcommand{\arraystretch}{1.1}\begin{array}{lcccl}
\mathcal{C}_{\mathrm{II}}^2 & = & \mathcal{C}_{\mathrm{I}}^1+\mathcal{C}_{\mathrm{I}}^2 & = & 2\mathcal{C}^0 + \mathcal{C}^1 + \mathcal{C}^2 \\
\mathcal{C}_{\mathrm{II}}^3 & = & \mathcal{C}_{\mathrm{I}}^3                            & = & \mathcal{C}^0 + \mathcal{C}^3 \\
\mathcal{C}_{\mathrm{II}}^4 & = & \mathcal{C}_{\mathrm{I}}^1+\mathcal{C}_{\mathrm{I}}^4 & = & \mathcal{C}^1  \\
\mathcal{C}_{\mathrm{II}}^5 & = & -\mathcal{C}_{\mathrm{I}}^1                           & = & -\mathcal{C}^0-\mathcal{C}^1
\end{array}\right.
\,.
\end{equation}
 Nevertheless, notice that the linear equivalence relations among toric divisors are preserved, i.e.,
\begin{equation}
D_4 \sim  D_2+D_3  \coma D_5 \sim D_1 +D_2+ 2 D_3   \coma D_6 \sim D_0 +D_1+2D_2 +2D_3 \,, 
\end{equation}
which implies that, again, the effective cones $\text{Eff}^1(B_3')$ and $\text{Eff}^1(B_3'')$ are spanned by the basis $\{D_i\}_{i=0,1,2,3}$ in each respective case.  

\subsubsection{Primitive EFT string limit of $C^1$}

In order to find the primitive EFT string limits for the other two generators of $\text{Mov}_1(B_3)$ we first perform a flop transition on the base $B_3$ to $B'_3$. The geometry now corresponds to a $\PP^1$-fibration over $\mathbb{F}_1$ with twist bundle $c_1(\mathcal{T})=j_0+j_1$ for which we blow-up the fiber over a point in $B_2$ leading to an additional fibral curve over $B_2$. The flop can thus be understood by blowing-down a fibral curve of $\text{Bl}(\FF_2)$ and replacing it by a fibral curve in the $\PP^1$-fiber over $\mathbb{F}_1$. The twist of the original curve over $\mathbb{P}^1_b$ is reflected now in the change of the twist bundle $c_1(\mathcal{T})=j_0\rightarrow j_0+j_1$. The cone of effective divisors is spanned again by \eqref{eqn:Effabc}, but now expressed in terms of the K\"ahler cone generators of $B'_3$ as 
\begin{equation}
D_0 = J_4-J_2 - J_1 \coma D_1  = J_2+J_3-J_4-J_1 \coma D_2 = J_1 + J_3 -J_4 \coma D_3 = J_4 - J_3 \,.  
\end{equation}
Here we introduced the new generators $\{J_i\}_{i=1,2,3,4}$ of the K\"ahler cone to which we associate the volume $v^i$ of the dual generators of the Mori cone $\{\mathcal{C}_{\rm I}^i\}_{i=1,2,3,4}$.

The intersection ring compatible with the fibration structure of $ B'_3$ and $B_3$ is
\begin{align}
\begin{split}
I(B'_3) = &2 J_3^3+2 J_2\cdot J_3^2+J_1\cdot J_3^2+3 J_4\cdot J_3^2+J_2^2\cdot J_3+3 J_4^2\cdot J_3+J_2\cdot J_1\cdot J_3\,+\\
&+2 J_2\cdot J_4 \cdot J_3+J_1 \cdot J_4\cdot J_3+3 J_4^3+2 J_2\cdot J_4^2+J_1\cdot J_4^2\,+\\
&+J_2^2\cdot J_4+J_2\cdot J_1\cdot J_4\,.
\end{split}
\end{align} 
The volume for the effective divisors read
\begin{align}
\begin{split}
\mathcal{V}_{D_0}&=  \frac{1}{2}v^2( v^2 + v^1)  \,, \\
\mathcal{V}_{D_1}&  = \frac{1}{2}(v^4)^2 + v^3 v^1 + v^4 v^3 + v^1 v^4\,,  \\
\mathcal{V}_{D_2} & = \frac{1}{2} (v^4)^2 + v^3 v^2 + v^4 v^3+ v^2 v^4 \,,\\
\mathcal{V}_{D_3} &  = \frac{1}{2}(v^3)^2  \,.
\end{split}
\end{align}
The movable cone $\text{Mov}_1(B'_3) = \overline{\text{Eff}}^1(B'_3)^\vee$ is then spanned by the 
curves
\begin{align}
C^0 = J_1\cdot J_2 \coma C^1 = J_1\cdot J_3\coma C^2 = J_2\cdot J_4 \coma C^3 = J_3^2 \,.
\end{align}
The volume for these curves are
\begin{align}
\begin{split}
\mathcal{V}_{C^0 }&= v^3  + v^4 \,, \\
\mathcal{V}_{C^1 }&  = v^2 + v^3 + v^4 \,,  \\
\mathcal{V}_{C^2} & = v^1+v^2 + 2v^3+2v^4\,, \\
\mathcal{V}_{C^3} &  =  v^1 + 2v^2 +3 v^3+2v^4 \,.
\end{split}
\end{align}
Notice the splitting of $C^0 $ in terms of the Mori cone generators $\mathcal{C}^3$ and $\mathcal{C}^4$. 
We identify $C^0$ as the heterotic curve since $C^0 \cdot \bar{K}(B'_3) =2$. Using the adjunction formula, we obtain $g(C^3) =0$ where we used $\bar{K}(B'_3) = 2 J_3 + J_2$.   
Also, note that $C^3=J_1^2 $ with $J_1^3=0$ holds. 

In this chamber of the K\"ahler cone, we can now realize the EFT string limit for the D3-brane on $C^1$ as 
\begin{equation}
\renewcommand{\arraystretch}{1.25}
\begin{tabular}{c|c|l|c}
    \text{Movable curve} & \text{$q$ factor} & \multicolumn{1}{c|}{Primitive EFT limit} & ${\cal V}_{D_i} \to \infty$ \\
    \hline
  $C^1 = J_1\cdot J_3$ & $q=1$ & $v^1 \rightarrow \infty \coma v^2\rightarrow 0\coma v^3,v^4\simeq \text{const.}$ & $D_1$\\
\end{tabular}
\end{equation}
without shrinking any additional divisor. This is, however, the only primitive EFT string limit in this chamber that can be reached without leaving the perturbative regime. In order to realize the primitive EFT string limit for the D3-brane on $C^2$ we need to perform a second flop corresponding to yet another triangulation of the polytope.

\subsubsection{Primitive EFT string limit of $C^2$}

To reach the third chamber of the K\"ahler cone, we need to blow-down the curve $\mathcal{C}_{\rm I}^1$ in $B'_3$ and replace it by yet another fibral curve $\mathcal{C}_{\rm II}^5$. The resulting manifold is a $\PP^1$-fibration over $\PP^2$ for which the fiber splits into three curves $\mathcal{C}_{\rm II}^3$, $\mathcal{C}_{\rm II}^4$ and $\mathcal{C}_{\rm II}^5$.  We realize the resulting base $B''_3$ as a third triangulation for the polytope underlying also $B_3$ and $B'_3$.
The generators of the cone of effective divisors \eqref{eqn:Effabc} in terms of the K\"ahler cone generators of $B''_3$ are 
\begin{equation}
D_0 = J_5-2J_2 \coma D_1  = J_3+J_5-2J_4 \coma D_2 = J_3-J_5 + J_2\coma D_3 = J_4 - J_3 \,.  
\end{equation}
Here we take $\{J_i\}_{i=2,3,4,5}$ to be the K\"ahler cone generators dual to $\{\mathcal{C}_{\mathrm{II}}^i\}$. The intersection ring compatible with the fibration structure of $B''_3$ and $B_3$ reads
\begin{align}
\begin{split}
I(B''_3) = &2 J_3^3+4 J_5\cdot J_3^2+2 J_2\cdot J_3^2+3 J_4\cdot J_3^2+4 J_5^2\cdot J_3+J_2^2\cdot J_3+3 J_4^2\cdot J_3\,+\\
&+2 J_5\cdot J_2\cdot J_3+4 J_5\cdot J_4\cdot J_3+2 J_2\cdot J_4\cdot J_3+4 J_5^3+3 J_4^3+J_4\cdot J_2^2\,+\\
   &+4 J_5\cdot J_4^2+2 J_2\cdot J_4^2+2J_5^2\cdot J_2+4 J_5^2\cdot J_4+J_2^2\cdot J_4+2 J_5\cdot J_2\cdot J_4\fstop
\end{split} 
\end{align}
The volumes of the effective divisors are
\begin{align}
\begin{split}
\mathcal{V}_{D_0 }&=  \frac{1}{2}(v^2)^2 \,,  \\
\mathcal{V}_{D_1}&  = \frac{1}{2}(v^4)^2 + v^4 v^3\,,  \\
\mathcal{V}_{D_2} & = \frac{1}{2} (v^4)^2 +  2 v^3 v^5+(v^5)^2+v^3 v^2+v^5 v^2+v^3 v^4+2 v^5 v^4+v^2 v^4\,,\\
\mathcal{V}_{D_3} &  = \frac{1}{2}(v^3)^2   \,.
\end{split}
\end{align}
Moreover, here we have that  $\bar{K}(B''_3) = 2 J_3 + J_2$.  The movable cone $\text{Mov}_1(B''_3) = \overline{\text{Eff}}^1(B''_3)^\vee$ is then spanned by the 
curves $C^i$ given by
\begin{align}
C^0 = J_2^2 \coma C^1 = J_3\cdot \left(J_4-J_2\right)\coma C^2 = J_3\cdot J_2 \coma C^3 = J_3^2\,.
\end{align}
The volumes of such curves read
\begin{align}
\begin{split}
\mathcal{V}_{C^1 }&= v^3+v^4 + v^5 \,, \\
\mathcal{V}_{C^2} &  = 2v^3 + 4v^5 + 2v^2 +3 v^4\,, \\
\mathcal{V}_{C^3 }&  = v^3 + 2v^5 + v^2+v^4 \,,  \\
\mathcal{V}_{C^4} & = 2v^3+2v^5+v^2 + 2v^4\,.
\end{split}
\end{align}
In this case we again have a single primitive EFT string limit corresponding to 
\begin{equation}
\renewcommand{\arraystretch}{1.25}
\begin{tabular}{c|c|l|c}
    \text{Movable curve} & \text{$q$ factor} & \multicolumn{1}{c|}{Primitive EFT limit} & ${\cal V}_{D_i} \to \infty$ \\
    \hline
  $C^2 = J_2\cdot J_3$ & $q=2$ & $v^5 \rightarrow \infty\coma v^2,v^3,v^4\simeq \text{const.}$ & $D_2$ \\
\end{tabular}
\end{equation}
Note that in this chamber of the extended K\"ahler cone we $2J_3\cdot J_2 = J_5^2$ such that the curve $C^2$ is indeed of the form required by Proposition \ref{prop:EFTstring}. Even though also the curve $C^0$ and $C^3$ are of the required form, there are no primitive EFT string limits associated to D3-branes wrapped on these curves in this chamber of the K\"ahler cone, as we would leave the perturbative regime while imposing such a limit.

\subsection{Curve blow-up in a \texorpdfstring{$\PP^1$}{}-fibration over \texorpdfstring{$\FF_n$}{}}
\label{app:toricdatacurveblowup}

Another interesting example is blowing-up a curve in the base $\FF_n$ of geometries in Section \ref{ssec:P1fibFn}. Here, we will discuss the case that the blown-up curve is a generic fiber of $\FF_1$ with twist choices $s=1$ for $B_3 = \mathbb{P}^1\rightarrow \FF_1$. 
The toric data for such a blow-up $\mu:\tilde{B}_3\rightarrow B_3$ takes the form
\begin{align}
\begin{blockarray}{crrrrrrrrrrrl}
	&&&&\mathcal{C}^0 & \mathcal{C}^1 &\mathcal{C}^3& \mathcal{C}^4 & \mathcal{C}_{\mathrm{I}}^0 & \mathcal{C}_{\mathrm{I}}^3 &\mathcal{C}_{\mathrm{I}}^4 & \mathcal{C}_{\mathrm{I}}^5  \\
\begin{block}{c(rrr|rrrr|rrrr)l}
         D_0& 0 &  0 & 1 & -1& 0 & 1 & 0 & - 1& 1 &  0 & 0  \\
         D_1& 0 &  -1& 1 &  1&-1 & 0 & 0 &  1 & 0 &  -1& 1\\
         D_3& 1&   0& 1 & 0 & 0 & -1 & 1 & 0 & -1&  1 & 0\\
         D_4& 1&  0 & 0 & 0 & 1 & 1 & -1  &  0 & 1 &  0 & -1 \\
         D_5& 0 & 1 & 0 & 1 & 0 & 0 & 0  & 1 & 0 &  0 & 0 \\ 
         D_6& -1 & -1 & 0 & 0 & 1 & 0 & 0  &0 & 0 &  1 & -1\\ 
         D_7& 0 &  0 & -1& 0 & -1 & 0 & 1  & 0 & 0 &  0 & 1 \\
\end{block}
\end{blockarray}\text{\hspace{0.2cm}.}
	\label{eqn:blowcurve}
\end{align}
Here the polytope realizing $\tilde{B}_3$ has another triangulation, whose associated geometry we call $\tilde{B}_3'$.  In \eqref{eqn:blowcurve}, we denote by $\mathcal{C}^i$ and $\mathcal{C}_{\mathrm{I}}^i$ the Mori cone curves for $\tilde{B}_3$ and $\tilde{B}_3'$ respectively.  

In $\tilde{B}_3$, the toric divisor $D_4$ is the  exceptional divisor $E$ that shrinks to the $\PP^1$-fiber of $\FF_1$ upon blow-down.  Moreover, we notice the following linear equivalence relations among toric divisors,
\begin{equation}
D_5 \sim  D_1 +  D_3 + D_4\,,\quad D_6 \sim D_3 + D_4 \coma  D_7 \sim D_0 +D_1 +D_3\,, 
\end{equation}
which imply that the cone of effective divisors is spanned by the divisors 
\begin{equation}
\text{Eff}^1(\tilde{B}_3) = \text{Cone}\left\langle D_0, D_1, D_3, D_4\right\rangle\,,
\label{eq:Effabccurveblow}
\end{equation}
which can be expressed in terms of the   K\"ahler cone generators $\{ J_i\}_{i=0,1,3,4}$  of $\tilde{B}_3$ as
\begin{equation}\label{ex3:effectiveKahler}
D_0 = J_3 - J_0 \coma D_1  = J_0-J_1 \coma D_3 = J_4 - J_3 \coma  D_4 = J_3 - J_4 +J_1\,.  
\end{equation}
Before moving on, let us analyze the geometry in some detail, see also Figure \ref{fig:fibrationstructureBlP1F1}. Therefore, we start with a  $\PP^1$-fibration over $\mathbb{F}_1$ with twist given by the section $j_0$ of $\mathbb{F}_1$ satisfying $j_0^2=1$. The generic fiber of $\tilde{B}_3$ is given by $\mathcal{C}^3$, the base of $\FF_{1}$ by $\mathcal{C}^1$ and a generic fiber of $\FF_1$ by $\mathcal{C}^0$. The blow-up now replaces $\mathcal{C}^3$ by $\mathcal{C}^3+\mathcal{C}^4$ along the exceptional fiber $\mathcal{C}^0_p$ of $\mathbb{F}_1$ at one point in $p\in \mathcal{C}^1$. Schematically, this is shown in Figure \ref{fig:fibrationstructureBlP1F1}. The blow-up is done in such a way that the zero section $D_0$ of $B_3$ wraps $\mathcal{C}^4$ over $\mathcal{C}^0_p$. Therefore, the divisor $D_0$ should be a connected sum made up by a copy of $\FF_1$ and $\mathcal{C}_p^0\times \mathcal{C}^4$ glued together along $\mathcal{C}_p^0$. We therefore expect the volume of $D_0$ to be given by 
\begin{align}
    \mathcal{V}_{D_0} = \mathcal{V}_{\FF_1} + v^0 v^4\,. 
\end{align} 
We can think of the fibration as a surface fibration over $\mathcal{C}^1$ with one exceptional fiber. The volume of a non-exceptional surface fiber corresponds to $\mathcal{V}_{D_3}$ and is not affected by the blow-up and hence simply given by \begin{align}
    \mathcal{V}_{D_3}=\frac12 (v^3)^2 +v^3v^0\,. 
\end{align}
The exceptional divisor $D_4$ is contained in the exceptional surface fiber and its volume can be calculated by taking the volume as 
\begin{align}
  \mathcal{V}_{D_4} = \underbrace{\frac{1}{2} \left(v^3+v^4\right)^2 + v^0 \left(v^3+v^4\right)}_{=\mathcal{V}_{J_1}} - \underbrace{\left(\frac{1}{2} (v^3)^2 +v^0v^3\right)}_{=\mathcal{V}_{D_3}} = \frac12 (v^4)^2 +v^0 v^4+ v^4 v^3\,.
\end{align} 
Here, we notice that $J_1$ is the pull-back of the $\mathcal{C}^0$ to the full fibration. Finally, we have the volume associated to the pull-back $D_1$ of $\mathcal{C}^1$. Since $p\in \mathcal{C}^1$ the fiber over $\mathcal{C}^1$ will contain both $\mathcal{C}^3$ and $\mathcal{C}^4$. Since the original $\tilde{B}_3$ does not include a twist over $\mathcal{C}^1$ for $\mathcal{V}_{D_1}$ we need to subtract a term proportional $(v^3)^2$ such that we arrive at 
\begin{align}
    \mathcal{V}_{D_1} = v^1 \left(v^3+v^4\right) + \frac{1}{2} \left(v^3+v^4\right) -\frac12 (v^3)^2 \,.
\end{align}

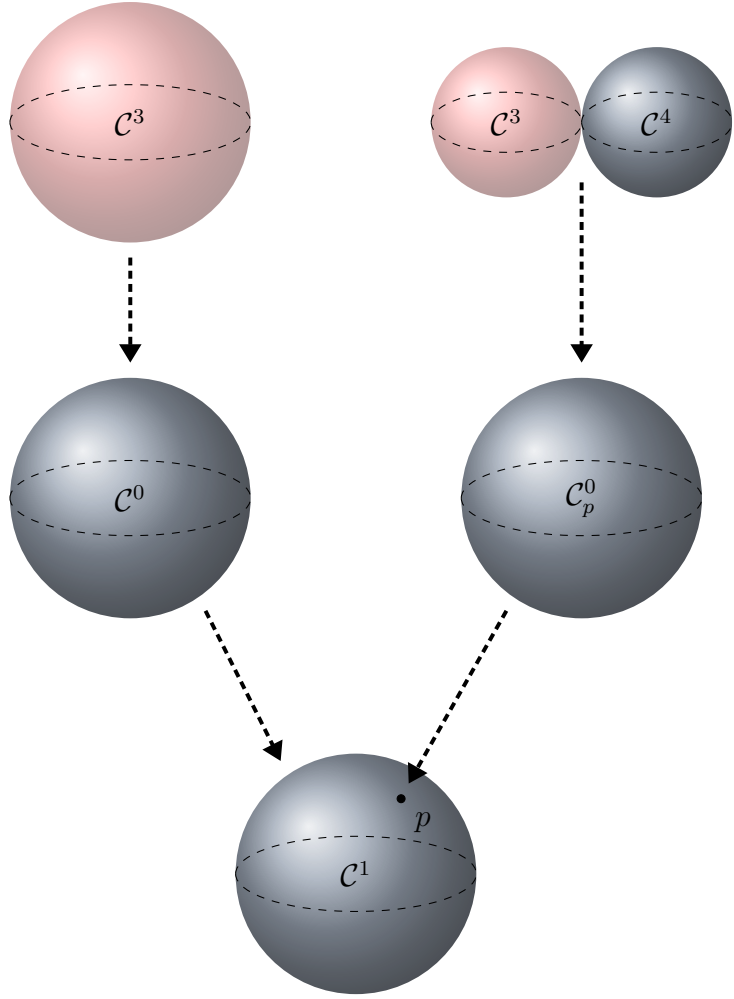
\begin{figure}[!htp]
    \centering
      \begin{tikzpicture}[baseline]
    \shade[ball color = prhigh!50!white, opacity = 0.5] (-3,6) circle (1.6);
    \draw[dashed] (-4.6,6) arc (180:0:1.6 and 0.5);
    \draw[dashed] (-4.6,6) arc (180:0:1.6 and -0.5);
    \node at (-3,6) {$\mathcal{C}^3$};
    \shade[ball color = prhigh!50!white, opacity = 0.5] (2,6) circle (1);
    \draw[dashed] (1,6) arc (180:0:1 and 0.4);
    \draw[dashed] (1,6) arc (180:0:1 and -0.4);
    \node at (2,6) {$\mathcal{C}^3$};
    \shade[ball color = prcolor!50!white, opacity = 0.9] (4,6) circle (1);
    \draw[dashed] (3,6) arc (180:0:1 and 0.4);
    \draw[dashed] (3,6) arc (180:0:1 and -0.4);
    \node at (4,6) {$\mathcal{C}^4$};
    \shade[ball color = prcolor!50!white, opacity = 0.9] (-3,1) circle (1.6);
    \draw[dashed] (-4.6,1) arc (180:0:1.6 and 0.5);
    \draw[dashed] (-4.6,1) arc (180:0:1.6 and -0.5);
    \node at (-3,1) {$\mathcal{C}^0$};
    \shade[ball color = prcolor!50!white, opacity = 0.9] (3,1) circle (1.6);
    \draw[dashed] (1.4,1) arc (180:0:1.6 and 0.5);
    \draw[dashed] (1.4,1) arc (180:0:1.6 and -0.5);
    \node at (3,1) {$\mathcal{C}^0_p$};
    \shade[ball color = prcolor!50!white, opacity = 0.9] (0,-4) circle (1.6);
    \draw[dashed] (-1.6,-4) arc (180:0:1.6 and 0.5);
    \draw[dashed] (-1.6,-4) arc (180:0:1.6 and -0.5);
    \node at (0,-4) {$\mathcal{C}^1$};
    \draw[line width = 1.5pt,-Triangle, densely dashed] (-3,4.2) -- (-3,2.8);
    \draw[line width = 1.5pt,-Triangle, densely dashed] (3,5.2) -- (3,2.8);
    \draw[line width = 1.5pt,-Triangle, densely dashed] (-2,-0.5) -- (-1,-2.5);
    \draw[line width = 1.5pt,-Triangle, densely dashed] (2,-0.5) -- (0.7,-2.8);
    \node[circle,draw=black,fill=black,inner sep=0pt,minimum size=3pt,label={below right:$p$}] at (0.6,-3) {};
     \end{tikzpicture}
    \caption{We show the base $\tilde{B}_3$ of a blow-up of the $\PP^1$-fiber of a $\PP^1$-fibration over $\FF_1$. The zero section is in the {\color{prcolor}{darker}} curves.}
    \label{fig:fibrationstructureBlP1F1}
\end{figure}
The volumes for the generators of $\text{Eff}^1(\tilde{B}_3)$ can be calculated also straightforwardly using the intersection ring, which reads
\begin{equation}
\begin{split}
\mathcal{I}(\tilde{B}_3) = &J_3^3+2 J_3^2\cdot J_4 +J_0 J_3^2+J_1\cdot J_3^2+2  J_3\cdot J_4^2+J_0^2\cdot J_3+2  J_0 \cdot J_3 \cdot J_4\,+\\ 
&+ J_1\cdot
   J_3 \cdot J_4+J_0 \cdot J_1 \cdot J_3+2 J_4^3+J_0^2\cdot J_4 +2  J_0\cdot J_4^2+ J_1\cdot J_4^2\,+\\
   &+ J_0 \cdot J_1 \cdot J_4\fstop
   \end{split}
\end{equation}
Using \eqref{ex3:effectiveKahler} we can calculate 
\begin{align}\label{volumeex3}
\begin{split}
\mathcal{V}_{D_0} &  = \frac{1}{2}(v^0)^2 +v^4  v^0 +v^0 v^1 \,, \\
\mathcal{V}_{D_1}&  = \frac{1}{2}(v^4)^2 + v^3 v^4 + v^3 v^1 + v^4 v^1\,,  \\
\mathcal{V}_{D_3}&= \frac{1}{2} (v^3)^2 + v^3 v^0  \,, \\
\mathcal{V}_{D_4} & = \frac{1}{2} (v^4)^2 + v^3 v^4 + v^4 v^0 \,,
\end{split}
\end{align}
in agreement with our previous discussion. 

In order to find the curves giving rise to (quasi-)primitive EFT string in this limit, let us first consider the movable cone $\text{Mov}_1(\tilde{B}_3) = \overline{\text{Eff}}^1(\tilde{B}_3)^\vee$ being generated by 
\begin{equation}
\begin{split}
    C^0= J_0\cdot J_1\coma C^1=J_1\cdot J_3\coma C^3= J_0\cdot J_3\coma C^4=J_3\cdot( J_4 -  J_1)\fstop
\end{split}
\end{equation}
Let us note that, as in Section \ref{ssec:P1fibFn}, $J_0^2=sJ_0\cdot J_1$ and $J_3^2=sJ_0\cdot J_3$, where here we use $s=1$. The volumes for such curves are
\begin{equation}
\begin{split}
\mathcal{V}_{C^0} &  = v^3 + v^4 \,, \\
\mathcal{V}_{C^1}&  = v^0+v^3 + v^4  \,,  \\
\mathcal{V}_{C^3}&= v^0  + v^1+ v^3 +2 v^4  \,, \\
\mathcal{V}_{C^4} & = v^0 + v^1+ v^3 + v^4 \,.
\end{split}
\end{equation}
Notice the splitting of $C^0 $ in terms of the Mori cone generators $\mathcal{C}^3$ and $\mathcal{C}^4$. Hence, we identify $C^0$ as the heterotic curve since $C^0 \cdot \bar{K}(\tilde{B}_3) =2$  and using the adjunction formula we obtain $g(C^0) =0$. Here we used that  $\bar{K}(\tilde{B}_3) = J_0 + J_3 +J_4$.  Also, note that $C^0=J_0^2 $ with $J_0^3=0$ holds. Compared to the previous example, we observe that here, one of the generators of $\text{Mov}_1(B_3)$, $C^4$ is not of the form $J_i \cdot J_j$ for two K\"ahler cone generators $J_i$ and $J_j$. Still, by Proposition \ref{prop:EFTstring} we expect that the curves giving rise to quasi-primitive EFT strings to be of this form. And indeed, we find the primitive EFT limits:
\begin{equation}
\renewcommand{\arraystretch}{1.25}
\begin{tabular}{c|c|l|c}
    \text{Movable curve} & \text{$q$ factor} & \multicolumn{1}{c|}{Primitive EFT limit} & ${\cal V}_{D_i} \to \infty$ \\
    \hline
         $C^0=J_0\cdot J_1$ & $q=0$ & $v^0,v^1\rightarrow \infty \coma v^3,v^4\rightarrow 0$ & $D_0$\\
          $C^1=J_1\cdot J_3$ & $q=1$ & $v^1\rightarrow \infty \coma v^0\rightarrow 0\coma v^3,v^4 \simeq \text{const.}$ & $D_1$ \\
    $C^3=J_0\cdot J_3$ & $q=2$ & $v^3 \rightarrow \infty \coma v^1,v^4 \rightarrow 0 \coma v^0 \simeq \text{const.}$ & $D_3$
\end{tabular}
\end{equation}
There is a quasi-primitive limit obtained for
\begin{equation}
    \renewcommand{\arraystretch}{1.25}
    \begin{tabular}{c|c|l|c}
    \text{Movable curve} & \text{$q$ factor} & \multicolumn{1}{c|}{Quasi-primitive EFT limit} & ${\cal V}_{D_i} \to \infty$ \\
    \hline
    $\tilde{C}^4= J_4^2$ & $q=2$ & $v^4 \rightarrow \infty \coma v^0 \rightarrow 0 \coma v^1,v^3 \simeq \text{const.}$ & $D_1$, $D_4$
\end{tabular}
\end{equation}
Let us also note that $\tilde{C}^4 = C^1+C^4 = J_4^2$ for this choice of twist. Hence, we again find that all curves leading to (quasi-)primitive EFT strings are as in Proposition \ref{prop:EFTstring}. 

\subsubsection{Primitive EFT string limit for $C^4$}

Let us now discuss the second chamber of the extended K\"ahler cone in which we can realize the primitive EFT string limit for the divisor $D_4$. The corresponding base $\tilde{B}'_3$ can be obtained by flopping the curve $\mathcal{C}^1$. 
 More precisely, the flop map reads
 \begin{equation}
    \text{Flop I: } \left\{ \renewcommand{\arraystretch}{1.1}\begin{array}{lcl}
\mathcal{C}_{\mathrm{I}}^0 &=& \mathcal{C}^0    \\
\mathcal{C}_{\mathrm{I}}^3 &=& \mathcal{C}^3  \\
\mathcal{C}_{\mathrm{I}}^4 &=& \mathcal{C}^1 + \mathcal{C}^4  \\
\mathcal{C}_{\mathrm{I}}^5 &=& -\mathcal{C}^1
\end{array}\right.
\fstop
 \end{equation}
Taking the K\"ahler cone generators $\{J_i\}_{i=0,3,4,5}$ such that $J_i \cdot \mathcal{C}_{\mathrm{I}}^j = \delta_i^j$, the  basis of effective divisors in $\tilde{B}_3'$ can be expressed  as
\begin{equation}
 D_0 = J_3- J_0\coma  D_1  = J_5 +J_0-J_4  \coma  D_3 = J_4  - J_3  \coma  D_4 = J_3 - J_5\,.  
\end{equation}
The anticanonical class is $\bar{K}(\tilde{B}_3') = J_0 +J_3 +J_4$ and the intersection ring of $\tilde{B}'_3$  reads
\begin{align}
\begin{split}
I(\tilde{B}'_3) = &J_3^3+J_5\cdot J_3^2+J_0\cdot  J_3^2+2 J_4\cdot  J_3^2+J_0^2\cdot  J_3+2 J_4^2\cdot  J_3+J_5\cdot  J_0\cdot  J_3 \,+\\
& +J_5\cdot  J_4\cdot  J_3+2J_0\cdot  J_4\cdot  J_3+2 J_4^3+J_5\cdot  J_0^2+J_5\cdot  J_4^2+2 J_0\cdot  J_4^2\,+\\
&+J_0^2\cdot  J_4+J_5\cdot  J_0\cdot  J_4\,.
\end{split}
\end{align}
The volumes of the effective divisors follows as
\begin{align}
\begin{split}
\mathcal{V}_{D_0} &  = \frac{1}{2}(v^0)^2 +v^0 v^4  \,, \\
\mathcal{V}_{D_1 }&  = \frac{1}{2}(v^4)^2 + v^3 v^4\,,  \\
\mathcal{V}_{D_3 }&= \frac{1}{2} (v^3)^2 + v^0 v^3  \,, \\
\mathcal{V}_{D_4} & = \frac{1}{2}(v^4)^2 +  v^5 v^3 +v^5 v^0+v^5 v^4  +v^3 v^4 +v^0 v^4\,.
\end{split}
\end{align}
We can compute that the movable cone $\text{Mov}_1(\tilde{B}'_3) = \overline{\text{Eff}}^1(\tilde{B}'_3)^\vee$ is then spanned by the
curves $C^i$ given by
\begin{align}
 C^0 = J_0^2 \coma  C^1 = J_3\cdot(J_4-J_5) \coma  C^3 = J_3^2 \coma  C^4 = J_5\cdot J_3\,.
\end{align}
The volume for these curves read
\begin{align}
\begin{split}
\mathcal{V}_{C^0} &  = v^3 +v^4 +v^5\,, \\
\mathcal{V}_{C^1 }&  =v^0+v^3 +v^4+v^5 \,,  \\
\mathcal{V}_{C^3 }&= v^0 + v^3 + 2v^4  + v^5 \,, \\
\mathcal{V}_{C^4} & = v^0 + v^3  + v^4\,.
\end{split}
\end{align}
From these expressions, we see that there only are two (quasi-)primitive EFT string limits that can be realized in this chamber of the K\"ahler cone corresponding to 
\begin{equation}
\begin{tabular}{c|c|l|c}
    \text{Movable curve} & \text{$q$ factor} & \multicolumn{1}{c|}{Primitive EFT limit} & ${\cal V}_{D_i} \to \infty$ \\
    \hline
     $C^3=J_3^2$ & $q=2$ & $v^3\rightarrow \infty \coma v^4,v^5\rightarrow 0\coma v^0 \simeq \text{const.}$ & $D_3$ \\
     $C^4=J_3\cdot J_5$ & $q=1$ & $v^5 \rightarrow \infty\coma v^0,v^3,v^4\simeq \text{const.}$ & $D_4$\\
\end{tabular}
\end{equation}
Notice that the first primitive EFT string limit can also be obtained in the K\"ahler cone chamber discussed previously whereas the second limit corresponds to the one we could not realize in the other chamber of the K\"ahler cone. Further, note that the curve $C^4$ gives rise to a $q=1$ EFT string. Therefore, after flopping the curve $\mathcal{C}^1$, the resulting $\tilde{B}'_3$ allows for a surface fibration over $\mathbb{P}^1$ with the generic fiber corresponding to $J_5$.

\bibliographystyle{JHEP}
\bibliography{mybib}

\providecommand{\href}[2]{#2}\begingroup\raggedright\begin{thebibliography}{10}

\bibitem{Vafa:2005ui}
C.~Vafa, \emph{{The String landscape and the swampland}},
  \href{https://arxiv.org/abs/hep-th/0509212}{{\ttfamily hep-th/0509212}}.

\bibitem{Brennan:2017rbf}
T.D.~Brennan, F.~Carta and C.~Vafa, \emph{{The String Landscape, the Swampland,
  and the Missing Corner}},
  \href{https://doi.org/10.22323/1.305.0015}{\emph{PoS} {\bfseries TASI2017}
  (2017) 015} [\href{https://arxiv.org/abs/1711.00864}{{\ttfamily
  1711.00864}}].

\bibitem{Palti:2019pca}
E.~Palti, \emph{{The Swampland: Introduction and Review}},
  \href{https://doi.org/10.1002/prop.201900037}{\emph{Fortsch. Phys.}
  {\bfseries 67} (2019) 1900037}
  [\href{https://arxiv.org/abs/1903.06239}{{\ttfamily 1903.06239}}].

\bibitem{vanBeest:2021lhn}
M.~van Beest, J.~Calder\'on-Infante, D.~Mirfendereski and I.~Valenzuela,
  \emph{{Lectures on the Swampland Program in String Compactifications}},
  \href{https://arxiv.org/abs/2102.01111}{{\ttfamily 2102.01111}}.

\bibitem{Grana:2021zvf}
M.~Gra\~na and A.~Herr\'aez, \emph{{The Swampland Conjectures: A Bridge from
  Quantum Gravity to Particle Physics}},
  \href{https://doi.org/10.3390/universe7080273}{\emph{Universe} {\bfseries 7}
  (2021) 273} [\href{https://arxiv.org/abs/2107.00087}{{\ttfamily
  2107.00087}}].

\bibitem{Arkani-Hamed:2006emk}
N.~Arkani-Hamed, L.~Motl, A.~Nicolis and C.~Vafa, \emph{{The String landscape,
  black holes and gravity as the weakest force}},
  \href{https://doi.org/10.1088/1126-6708/2007/06/060}{\emph{JHEP} {\bfseries
  06} (2007) 060} [\href{https://arxiv.org/abs/hep-th/0601001}{{\ttfamily
  hep-th/0601001}}].

\bibitem{Banks:2006mm}
T.~Banks, M.~Johnson and A.~Shomer, \emph{{A Note on Gauge Theories Coupled to
  Gravity}}, \href{https://doi.org/10.1088/1126-6708/2006/09/049}{\emph{JHEP}
  {\bfseries 09} (2006) 049}
  [\href{https://arxiv.org/abs/hep-th/0606277}{{\ttfamily hep-th/0606277}}].

\bibitem{Palti:2020mwc}
E.~Palti, \emph{{A Brief Introduction to the Weak Gravity Conjecture}},
  \href{https://doi.org/10.31526/lhep.2020.176}{\emph{LHEP} {\bfseries 2020}
  (2020) 176}.

\bibitem{Harlow:2022gzl}
D.~Harlow, B.~Heidenreich, M.~Reece and T.~Rudelius, \emph{{The Weak Gravity
  Conjecture: A Review}},  \href{https://arxiv.org/abs/2201.08380}{{\ttfamily
  2201.08380}}.

\bibitem{Heidenreich:2015nta}
B.~Heidenreich, M.~Reece and T.~Rudelius, \emph{{Sharpening the Weak Gravity
  Conjecture with Dimensional Reduction}},
  \href{https://doi.org/10.1007/JHEP02(2016)140}{\emph{JHEP} {\bfseries 02}
  (2016) 140} [\href{https://arxiv.org/abs/1509.06374}{{\ttfamily
  1509.06374}}].

\bibitem{Heidenreich:2016aqi}
B.~Heidenreich, M.~Reece and T.~Rudelius, \emph{{Evidence for a sublattice weak
  gravity conjecture}},
  \href{https://doi.org/10.1007/JHEP08(2017)025}{\emph{JHEP} {\bfseries 08}
  (2017) 025} [\href{https://arxiv.org/abs/1606.08437}{{\ttfamily
  1606.08437}}].

\bibitem{Montero:2016tif}
M.~Montero, G.~Shiu and P.~Soler, \emph{{The Weak Gravity Conjecture in three
  dimensions}}, \href{https://doi.org/10.1007/JHEP10(2016)159}{\emph{JHEP}
  {\bfseries 10} (2016) 159}
  [\href{https://arxiv.org/abs/1606.08438}{{\ttfamily 1606.08438}}].

\bibitem{Andriolo:2018lvp}
S.~Andriolo, D.~Junghans, T.~Noumi and G.~Shiu, \emph{{A Tower Weak Gravity
  Conjecture from Infrared Consistency}},
  \href{https://doi.org/10.1002/prop.201800020}{\emph{Fortsch. Phys.}
  {\bfseries 66} (2018) 1800020}
  [\href{https://arxiv.org/abs/1802.04287}{{\ttfamily 1802.04287}}].

\bibitem{Cheung:2014vva}
C.~Cheung and G.N.~Remmen, \emph{{Naturalness and the Weak Gravity
  Conjecture}},
  \href{https://doi.org/10.1103/PhysRevLett.113.051601}{\emph{Phys. Rev. Lett.}
  {\bfseries 113} (2014) 051601}
  [\href{https://arxiv.org/abs/1402.2287}{{\ttfamily 1402.2287}}].

\bibitem{Ooguri:2006in}
H.~Ooguri and C.~Vafa, \emph{{On the Geometry of the String Landscape and the
  Swampland}},
  \href{https://doi.org/10.1016/j.nuclphysb.2006.10.033}{\emph{Nucl. Phys. B}
  {\bfseries 766} (2007) 21}
  [\href{https://arxiv.org/abs/hep-th/0605264}{{\ttfamily hep-th/0605264}}].

\bibitem{Harlow:2015lma}
D.~Harlow, \emph{{Wormholes, Emergent Gauge Fields, and the Weak Gravity
  Conjecture}}, \href{https://doi.org/10.1007/JHEP01(2016)122}{\emph{JHEP}
  {\bfseries 01} (2016) 122}
  [\href{https://arxiv.org/abs/1510.07911}{{\ttfamily 1510.07911}}].

\bibitem{Heidenreich:2018kpg}
B.~Heidenreich, M.~Reece and T.~Rudelius, \emph{{Emergence of Weak Coupling at
  Large Distance in Quantum Gravity}},
  \href{https://doi.org/10.1103/PhysRevLett.121.051601}{\emph{Phys. Rev. Lett.}
  {\bfseries 121} (2018) 051601}
  [\href{https://arxiv.org/abs/1802.08698}{{\ttfamily 1802.08698}}].

\bibitem{Grimm:2018ohb}
T.W.~Grimm, E.~Palti and I.~Valenzuela, \emph{{Infinite Distances in Field
  Space and Massless Towers of States}},
  \href{https://doi.org/10.1007/JHEP08(2018)143}{\emph{JHEP} {\bfseries 08}
  (2018) 143} [\href{https://arxiv.org/abs/1802.08264}{{\ttfamily
  1802.08264}}].

\bibitem{Ooguri:2016pdq}
H.~Ooguri and C.~Vafa, \emph{{Non-supersymmetric AdS and the Swampland}},
  \href{https://doi.org/10.4310/ATMP.2017.v21.n7.a8}{\emph{Adv. Theor. Math.
  Phys.} {\bfseries 21} (2017) 1787}
  [\href{https://arxiv.org/abs/1610.01533}{{\ttfamily 1610.01533}}].

\bibitem{Nakayama:2015hga}
Y.~Nakayama and Y.~Nomura, \emph{{Weak gravity conjecture in the AdS/CFT
  correspondence}},
  \href{https://doi.org/10.1103/PhysRevD.92.126006}{\emph{Phys. Rev. D}
  {\bfseries 92} (2015) 126006}
  [\href{https://arxiv.org/abs/1509.01647}{{\ttfamily 1509.01647}}].

\bibitem{Aharony:2021mpc}
O.~Aharony and E.~Palti, \emph{{Convexity of charged operators in CFTs and the
  weak gravity conjecture}},
  \href{https://doi.org/10.1103/PhysRevD.104.126005}{\emph{Phys. Rev. D}
  {\bfseries 104} (2021) 126005}
  [\href{https://arxiv.org/abs/2108.04594}{{\ttfamily 2108.04594}}].

\bibitem{Antipin:2021rsh}
O.~Antipin, J.~Bersini, F.~Sannino, Z.-W.~Wang and C.~Zhang, \emph{{More on the
  weak gravity conjecture via convexity of charged operators}},
  \href{https://doi.org/10.1007/JHEP12(2021)204}{\emph{JHEP} {\bfseries 12}
  (2021) 204} [\href{https://arxiv.org/abs/2109.04946}{{\ttfamily
  2109.04946}}].

\bibitem{Palti:2022unw}
E.~Palti and A.~Sharon, \emph{{Convexity of charged operators in CFTs with
  multiple Abelian symmetries}},
  \href{https://doi.org/10.1007/JHEP09(2022)078}{\emph{JHEP} {\bfseries 09}
  (2022) 078} [\href{https://arxiv.org/abs/2206.06703}{{\ttfamily
  2206.06703}}].

\bibitem{Alim:2021vhs}
M.~Alim, B.~Heidenreich and T.~Rudelius, \emph{{The Weak Gravity Conjecture and
  BPS Particles}}, \href{https://doi.org/10.1002/prop.202100125}{\emph{Fortsch.
  Phys.} {\bfseries 69} (2021) 2100125}
  [\href{https://arxiv.org/abs/2108.08309}{{\ttfamily 2108.08309}}].

\bibitem{Gendler:2020dfp}
N.~Gendler and I.~Valenzuela, \emph{{Merging the weak gravity and distance
  conjectures using BPS extremal black holes}},
  \href{https://doi.org/10.1007/JHEP01(2021)176}{\emph{JHEP} {\bfseries 01}
  (2021) 176} [\href{https://arxiv.org/abs/2004.10768}{{\ttfamily
  2004.10768}}].

\bibitem{Bastian:2020egp}
B.~Bastian, T.W.~Grimm and D.~van~de Heisteeg, \emph{{Weak gravity bounds in
  asymptotic string compactifications}},
  \href{https://doi.org/10.1007/JHEP06(2021)162}{\emph{JHEP} {\bfseries 06}
  (2021) 162} [\href{https://arxiv.org/abs/2011.08854}{{\ttfamily
  2011.08854}}].

\bibitem{Palti:2021ubp}
E.~Palti, \emph{{Stability of BPS states and weak coupling limits}},
  \href{https://doi.org/10.1007/JHEP08(2021)091}{\emph{JHEP} {\bfseries 08}
  (2021) 091} [\href{https://arxiv.org/abs/2107.01539}{{\ttfamily
  2107.01539}}].

\bibitem{Lee:2018urn}
S.-J.~Lee, W.~Lerche and T.~Weigand, \emph{{Tensionless Strings and the Weak
  Gravity Conjecture}},
  \href{https://doi.org/10.1007/JHEP10(2018)164}{\emph{JHEP} {\bfseries 10}
  (2018) 164} [\href{https://arxiv.org/abs/1808.05958}{{\ttfamily
  1808.05958}}].

\bibitem{Lee:2018spm}
S.-J.~Lee, W.~Lerche and T.~Weigand, \emph{{A Stringy Test of the Scalar Weak
  Gravity Conjecture}},
  \href{https://doi.org/10.1016/j.nuclphysb.2018.11.001}{\emph{Nucl. Phys. B}
  {\bfseries 938} (2019) 321}
  [\href{https://arxiv.org/abs/1810.05169}{{\ttfamily 1810.05169}}].

\bibitem{Lee:2019tst}
S.-J.~Lee, W.~Lerche and T.~Weigand, \emph{{Modular Fluxes, Elliptic Genera,
  and Weak Gravity Conjectures in Four Dimensions}},
  \href{https://doi.org/10.1007/JHEP08(2019)104}{\emph{JHEP} {\bfseries 08}
  (2019) 104} [\href{https://arxiv.org/abs/1901.08065}{{\ttfamily
  1901.08065}}].

\bibitem{Lee:2020gvu}
S.-J.~Lee, W.~Lerche, G.~Lockhart and T.~Weigand, \emph{{Quasi-Jacobi forms,
  elliptic genera and strings in four dimensions}},
  \href{https://doi.org/10.1007/JHEP01(2021)162}{\emph{JHEP} {\bfseries 01}
  (2021) 162} [\href{https://arxiv.org/abs/2005.10837}{{\ttfamily
  2005.10837}}].

\bibitem{Lee:2020blx}
S.-J.~Lee, W.~Lerche, G.~Lockhart and T.~Weigand, \emph{{Holomorphic anomalies,
  fourfolds and fluxes}},
  \href{https://doi.org/10.1007/JHEP03(2022)072}{\emph{JHEP} {\bfseries 03}
  (2022) 072} [\href{https://arxiv.org/abs/2012.00766}{{\ttfamily
  2012.00766}}].

\bibitem{Klaewer:2020lfg}
D.~Klaewer, S.-J.~Lee, T.~Weigand and M.~Wiesner, \emph{{Quantum corrections in
  4d $N$ = 1 infinite distance limits and the weak gravity conjecture}},
  \href{https://doi.org/10.1007/JHEP03(2021)252}{\emph{JHEP} {\bfseries 03}
  (2021) 252} [\href{https://arxiv.org/abs/2011.00024}{{\ttfamily
  2011.00024}}].

\bibitem{Lanza:2020qmt}
S.~Lanza, F.~Marchesano, L.~Martucci and I.~Valenzuela, \emph{{Swampland
  Conjectures for Strings and Membranes}},
  \href{https://doi.org/10.1007/JHEP02(2021)006}{\emph{JHEP} {\bfseries 02}
  (2021) 006} [\href{https://arxiv.org/abs/2006.15154}{{\ttfamily
  2006.15154}}].

\bibitem{Lanza:2021udy}
S.~Lanza, F.~Marchesano, L.~Martucci and I.~Valenzuela, \emph{{The EFT stringy
  viewpoint on large distances}},
  \href{https://doi.org/10.1007/JHEP09(2021)197}{\emph{JHEP} {\bfseries 09}
  (2021) 197} [\href{https://arxiv.org/abs/2104.05726}{{\ttfamily
  2104.05726}}].

\bibitem{Lanza:2022zyg}
S.~Lanza, F.~Marchesano, L.~Martucci and I.~Valenzuela, \emph{{Large Field
  Distances from EFT strings}},  in \emph{{21st Hellenic School and Workshops
  on Elementary Particle Physics and Gravity}}, 5, 2022
  [\href{https://arxiv.org/abs/2205.04532}{{\ttfamily 2205.04532}}].

\bibitem{Marchesano:2022avb}
F.~Marchesano and M.~Wiesner, \emph{{4d strings at strong coupling}},
  \href{https://doi.org/10.1007/JHEP08(2022)004}{\emph{JHEP} {\bfseries 08}
  (2022) 004} [\href{https://arxiv.org/abs/2202.10466}{{\ttfamily
  2202.10466}}].

\bibitem{Grimm:2022sbl}
T.W.~Grimm, S.~Lanza and C.~Li, \emph{{Tameness, Strings, and the Distance
  Conjecture}}, \href{https://doi.org/10.1007/JHEP09(2022)149}{\emph{JHEP}
  {\bfseries 09} (2022) 149}
  [\href{https://arxiv.org/abs/2206.00697}{{\ttfamily 2206.00697}}].

\bibitem{Heidenreich:2021yda}
B.~Heidenreich, M.~Reece and T.~Rudelius, \emph{{The Weak Gravity Conjecture
  and axion strings}},
  \href{https://doi.org/10.1007/JHEP11(2021)004}{\emph{JHEP} {\bfseries 11}
  (2021) 004} [\href{https://arxiv.org/abs/2108.11383}{{\ttfamily
  2108.11383}}].

\bibitem{Kaya:2022edp}
S.~Kaya and T.~Rudelius, \emph{{Higher-group symmetries and weak gravity
  conjecture mixing}},
  \href{https://doi.org/10.1007/JHEP07(2022)040}{\emph{JHEP} {\bfseries 07}
  (2022) 040} [\href{https://arxiv.org/abs/2202.04655}{{\ttfamily
  2202.04655}}].

\bibitem{Palti:2020qlc}
E.~Palti, C.~Vafa and T.~Weigand, \emph{{Supersymmetric Protection and the
  Swampland}}, \href{https://doi.org/10.1007/JHEP06(2020)168}{\emph{JHEP}
  {\bfseries 06} (2020) 168}
  [\href{https://arxiv.org/abs/2003.10452}{{\ttfamily 2003.10452}}].

\bibitem{Greene:1989ya}
B.R.~Greene, A.D.~Shapere, C.~Vafa and S.-T.~Yau, \emph{{Stringy Cosmic Strings
  and Noncompact Calabi-Yau Manifolds}},
  \href{https://doi.org/10.1016/0550-3213(90)90248-C}{\emph{Nucl. Phys. B}
  {\bfseries 337} (1990) 1}.

\bibitem{MR3019449}
S.~Boucksom, J.-P.~Demailly, M.~P\u{a}un and T.~Peternell, \emph{The
  pseudo-effective cone of a compact {K}\"{a}hler manifold and varieties of
  negative {K}odaira dimension},
  \href{https://doi.org/10.1090/S1056-3911-2012-00574-8}{\emph{J. Algebraic
  Geom.} {\bfseries 22} (2013) 201}.

\bibitem{Xiao2016:aaa}
J.~Xiao, \emph{Positivity in K\"ahler geometry}, Ph.D. thesis, Universit\'e de
  Fudan, 2016.

\bibitem{Friedman:1997yq}
R.~Friedman, J.~Morgan and E.~Witten, \emph{{Vector bundles and F theory}},
  \href{https://doi.org/10.1007/s002200050154}{\emph{Commun. Math. Phys.}
  {\bfseries 187} (1997) 679}
  [\href{https://arxiv.org/abs/hep-th/9701162}{{\ttfamily hep-th/9701162}}].

\bibitem{Lawrie:2016axq}
C.~Lawrie, S.~Schafer-Nameki and T.~Weigand, \emph{{Chiral 2d theories from N =
  4 SYM with varying coupling}},
  \href{https://doi.org/10.1007/JHEP04(2017)111}{\emph{JHEP} {\bfseries 04}
  (2017) 111} [\href{https://arxiv.org/abs/1612.05640}{{\ttfamily
  1612.05640}}].

\bibitem{Lee:2019wij}
S.-J.~Lee, W.~Lerche and T.~Weigand, \emph{{Emergent strings from infinite
  distance limits}}, \href{https://doi.org/10.1007/JHEP02(2022)190}{\emph{JHEP}
  {\bfseries 02} (2022) 190}
  [\href{https://arxiv.org/abs/1910.01135}{{\ttfamily 1910.01135}}].

\bibitem{Palti:2017elp}
E.~Palti, \emph{{The Weak Gravity Conjecture and Scalar Fields}},
  \href{https://doi.org/10.1007/JHEP08(2017)034}{\emph{JHEP} {\bfseries 08}
  (2017) 034} [\href{https://arxiv.org/abs/1705.04328}{{\ttfamily
  1705.04328}}].

\bibitem{Heidenreich:2019zkl}
B.~Heidenreich, M.~Reece and T.~Rudelius, \emph{{Repulsive Forces and the Weak
  Gravity Conjecture}},
  \href{https://doi.org/10.1007/JHEP10(2019)055}{\emph{JHEP} {\bfseries 10}
  (2019) 055} [\href{https://arxiv.org/abs/1906.02206}{{\ttfamily
  1906.02206}}].

\bibitem{Weigand:2018rez}
T.~Weigand, \emph{{F-theory}}, {\emph{PoS} {\bfseries TASI2017} (2018) 016}
  [\href{https://arxiv.org/abs/1806.01854}{{\ttfamily 1806.01854}}].

\bibitem{Cvetic:2018bni}
M.~Cveti\v{c} and L.~Lin, \emph{{TASI Lectures on Abelian and Discrete
  Symmetries in F-theory}},
  \href{https://doi.org/10.22323/1.305.0020}{\emph{PoS} {\bfseries TASI2017}
  (2018) 020} [\href{https://arxiv.org/abs/1809.00012}{{\ttfamily
  1809.00012}}].

\bibitem{Dvali:2007hz}
G.~Dvali, \emph{{Black Holes and Large N Species Solution to the Hierarchy
  Problem}}, \href{https://doi.org/10.1002/prop.201000009}{\emph{Fortsch.
  Phys.} {\bfseries 58} (2010) 528}
  [\href{https://arxiv.org/abs/0706.2050}{{\ttfamily 0706.2050}}].

\bibitem{Dvali:2009ks}
G.~Dvali and D.~Lust, \emph{{Evaporation of Microscopic Black Holes in String
  Theory and the Bound on Species}},
  \href{https://doi.org/10.1002/prop.201000008}{\emph{Fortsch. Phys.}
  {\bfseries 58} (2010) 505} [\href{https://arxiv.org/abs/0912.3167}{{\ttfamily
  0912.3167}}].

\bibitem{Dvali:2010vm}
G.~Dvali and C.~Gomez, \emph{{Species and Strings}},
  \href{https://arxiv.org/abs/1004.3744}{{\ttfamily 1004.3744}}.

\bibitem{Kim:2019vuc}
H.-C.~Kim, G.~Shiu and C.~Vafa, \emph{{Branes and the Swampland}},
  \href{https://doi.org/10.1103/PhysRevD.100.066006}{\emph{Phys. Rev. D}
  {\bfseries 100} (2019) 066006}
  [\href{https://arxiv.org/abs/1905.08261}{{\ttfamily 1905.08261}}].

\bibitem{Long:2021jlv}
C.~Long, M.~Montero, C.~Vafa and I.~Valenzuela, \emph{{The Desert and the
  Swampland}},  \href{https://arxiv.org/abs/2112.11467}{{\ttfamily
  2112.11467}}.

\bibitem{Katz:2020ewz}
S.~Katz, H.-C.~Kim, H.-C.~Tarazi and C.~Vafa, \emph{{Swampland Constraints on
  5d $\mathcal{N}=1$ Supergravity}},
  \href{https://doi.org/10.1007/JHEP07(2020)080}{\emph{JHEP} {\bfseries 07}
  (2020) 080} [\href{https://arxiv.org/abs/2004.14401}{{\ttfamily
  2004.14401}}].

\bibitem{MR2095471}
R.~Lazarsfeld, \emph{Positivity in algebraic geometry. {I}}, vol.~48 of
  \emph{Ergebnisse der Mathematik und ihrer Grenzgebiete. 3. Folge. A Series of
  Modern Surveys in Mathematics [Results in Mathematics and Related Areas. 3rd
  Series. A Series of Modern Surveys in Mathematics]}, Springer-Verlag, Berlin
  (2004),
  \href{https://doi.org/10.1007/978-3-642-18808-4}{10.1007/978-3-642-18808-4}.

\end{thebibliography}\endgroup

\end{document}